\documentclass[smallextended]{svjour3}      

\smartqed  
\usepackage[round]{natbib}
\usepackage{caption}
\usepackage{subcaption}
\usepackage{xcolor}
\usepackage{lineno,hyperref}
\usepackage{dblfloatfix}  
\usepackage{graphicx}  
\usepackage{enumitem}
\usepackage{relsize} 
\usepackage{amsmath}
\usepackage{amssymb}
\usepackage{multirow}
\captionsetup{font=footnotesize}
\usepackage{bbm}

\begin{document}

\title{Equality of Learning Opportunity via Individual Fairness in Personalized Recommendations}
\titlerunning{Equality of Learning Opportunity in Personalized Recommendations}      
\author{Mirko Marras \and Ludovico Boratto \and Guilherme Ramos \and Gianni Fenu}
\authorrunning{Marras et al.}

\institute{M. Marras \at University of Cagliari, Cagliari, Italy 
           \email{mirko.marras@unica.it} (Corresponding)         
           \and
           L. Boratto \at EURECAT, Barcelona, Spain 
           \email{ludovico.boratto@acm.org}         
           \and
           G. Ramos \at University of Porto, Porto, Portugal \email{guilhermeramos21@gmail.com}        
           \and
           G. Fenu \at University of Cagliari, Cagliari, Italy 
           \email{fenu@unica.it}         
}

\date{Revision submitted on October 26, 2020}

\maketitle

\vspace{-5mm}

\begin{abstract}
Online educational platforms are playing a primary role in mediating the success of individuals' careers. 
Therefore, while building overlying content recommendation services, it becomes essential to guarantee that learners are provided with equal recommended learning opportunities, according to the platform values, context, and pedagogy. 
Though the importance of ensuring equality of learning opportunities has been well investigated in traditional institutions, how this equality can be operationalized in online learning ecosystems through recommender systems is still under-explored. In this paper, we formalize educational principles that model recommendations' learning properties, and a novel fairness metric that combines them in order to monitor the equality of recommended learning opportunities among learners. 
Then, we envision a scenario wherein an educational platform should be arranged in such a way that the generated recommendations meet each principle to a certain degree for all learners, constrained to their individual preferences. 
Under this view, we explore the learning opportunities provided by recommender systems in a large-scale course platform, uncovering systematic inequalities. 
To reduce this effect, we propose a novel post-processing approach that balances personalization and equality of recommended opportunities. 
Experiments show that our approach leads to higher equality, with a negligible loss in personalization. 
Our study moves a step forward in operationalizing the ethics of human learning in recommendations, a core unit of intelligent educational systems. 
\keywords{AIED, Ethics, Learning Analytics, Recommender Systems.}
\end{abstract}

\section{Introduction} \label{sec:introduction}

Learning experience selection is at the heart of curriculum development and, consequently, a vital activity towards shaping individuals' knowledge and competencies~\citep{talla2012curriculum,druzhinina2018curriculum}. 
The term \emph{learning experience} generally refers to interactions in courses, programs, or other situations where learning takes place, including traditional and non-traditional settings~\citep{girvan2018virtual}. 
Notable examples of the latter, with an impact on individual experiences, are online course platforms, such as Coursera and Udemy. 
The proliferation of initiatives and the increasing adoption of these platforms have been requiring automated mechanisms of learning experience selection, tailored to the platform's values, context, pedagogy, and needs~\citep{rieckmann2018learning}. 

One aspect receiving special attention to support the learning experience selection on these online platforms is the ranking of courses deemed of relevance to individual learners. 
As a result, recommender systems are being deployed to suggest courses that accommodate learner's interests and needs~\citep{kulkarni2020recommender}. 
These recommended courses can be envisioned as learning opportunities being subjected to the attention of a learner. 
Though optimizing recommendations for learners' interests has been seen for years as the ultimate goal in the context of educational recommender systems, other principles inspected by curriculum-design experts in traditional settings (e.g., validity and affordability of the recommended resources) should be considered to shape online learning opportunities~\citep{talla2012curriculum,druzhinina2018curriculum}. 
Depending on the context, recommendations thus need to meet a trade-off between the interests of learners and the principles of the platform, providing learners with a well-rounded range of learning experiences~\citep{abdollahpouri2020multistakeholder}. 

Recommender capabilities represent a fundamental component of artificial intelligence systems in education. For this reason, ensuring equality among learners according to the recommended educational opportunities is essential, as the suggested courses may translate to educational gains or losses. 
By extension, education significantly influences individuals' life chances in the job market, and these opportunities should not be undermined by arbitrary decisions provided by a recommender system. 
\cite{meyer2016should} has revealed how equal learning opportunities, equal learning outcomes, and equal job opportunities relate to each other and emphasized the indispensable, but at the same time potentially dangerous\footnote{The demand for equal learning opportunities alone can lead to 
(i) attributing unequal learning outcomes that could have been avoided to unequal talent and effort, 
(ii) justify social inequalities by saying that all measures were taken to realize equal learning opportunities, 
(iii) limit efforts to merely realize equal educational opportunities.}, need for equal learning opportunities. 
This aspect has been investigated in traditional educational settings worldwide, such as in China~\citep{golley2018inequality}, Germany~\citep{buchholz2016secondary}, Japan~\citep{fujihara2016absolute}, Korea~\citep{byun2017different}, Spain~\citep{fernandez2017inequality}, and United States~\citep{shields2017equality}.

Thanks to extensive empirical analyses, these studies have identified several variables leading to unequal educational opportunities, with the gap in up-to-date competencies required by the job market and the considerable costs of access to education as two of them. 
Operationalizing these principles and the consequent notion of equal learning opportunity in online ecosystems via recommenders is still under-explored. 
These systems learn patterns from data with biases in terms of imbalances, which end up being emphasized in the recommendations~\citep{boratto2019effect}. 
Certain learners might thus receive low-quality opportunities based on the principles pursued by the targeted educational ecosystem. 
\figurename~\ref{fig:toy-example} shows an example of this phenomenon\footnote{Please note that the figures in this manuscript are best seen in color.}. Hence, it is imperative to mitigate inequalities, while retaining personalization. 
 
\begin{figure}[!t]
\minipage{1\linewidth}
    \centering
    \includegraphics[width=1.0\linewidth]{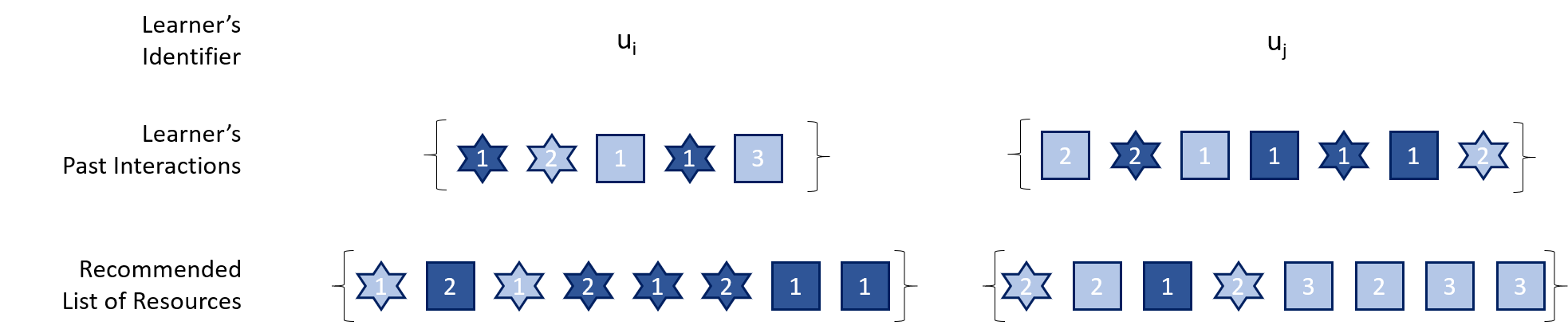}
\endminipage\hfill
\caption{\textbf{Example of Inequality in Recommended Learning Opportunities}. We consider two learners, $u_i$ and $u_j$ (first line). The mid-line shows us that the learners interacted with similar resources in terms of quality (star:high; square:low), validity (light-blue:old; dark-blue:fresh), and affordability level (1:low; 2:mid; 3:high). However, if we consider ranked lists provided by a collaborative algorithm to those two learners (bottom line), $u_i$'s recommendation list consists of mostly fresh, high-quality, and affordable resources, while $u_j$'s recommendations focus on obsolete, low-quality, and expensive resource.}
\label{fig:toy-example}
\end{figure} 

In this paper, we propose the concept of equality of recommended learning opportunities in personalized recommendations. 
To investigate how this concept applies to the online learning ecosystem, we envision a scenario wherein the educational platform should guarantee that a set of learning principles are met for all the learners, to a certain degree, when generating recommendations according to the learner's interests. 
Therefore, the ideal recommender system would 
(i) achieve higher consistency between the principles pursued by the platform and those measured in the recommendations, 
(ii) retain this consistency equal along with the learners' population, and 
(iii) preserve individual interests of learners. 
Under this scenario, we characterize the recommendations proposed by ten algorithms to learners in a real-world online course platform as a function of seven principles built upon knowledge and curriculum literature. 
Our exploratory analysis sheds light on systematic inequalities among learners based on the properties of the suggested courses and motivated us to devise a novel post-processing approach that balances equality and personalization in recommendations. Specifically, the contribution is four-fold:

\begin{itemize}
\item \textbf{Operational}: we define principles that model learning opportunity properties, and combine them in a fairness metric that monitors the equality of recommended learning opportunities across learners. 
\item \textbf{Social}: based on the above principles, we provide observations and insights on learning opportunities in recommendations, leveraging a dataset that includes more than 40K learners and 30K courses. 
\item \textbf{Technical}: we propose a post-processing recommendation approach that aims to balance personalization and equality of learning opportunity, making it possible to optimize different combinations of principles. 
\item \textbf{Ethical}: we evaluate the proposed approach on a real-world public dataset, and show that it leads to higher equality of recommended learning opportunities among learners, with a negligible loss in personalization. 
\end{itemize}

Our study moves a step forward in understanding how equality principles can be operationalized and combined in a formal notion of equal opportunities in educational recommendations. This contribution serves as a foundation for studying learners' notions, preferences, or limits concerning the equality of recommended learning opportunities. 
Therefore, the research community can be in a better position to know what questions to ask as part of interviews with learners, what scenarios to explore to elicit their concepts of fairness, and how to process data in the educational platform to monitor and ensure the equality principles. 
Thanks to its flexibility concerning the underlying principles, our approach can still be applied even if such investigations reveal that the principles captured in our study need to be fine-tuned or modified. 
This paper shapes a blueprint of the decisions and processes to be done, once empirically-validated principles have been defined under the targeted educational scenario.   

The remainder of this paper is structured as follows. Section 2 presents related work. Section 3 introduces the proposed principles and notion of equality, and Section 4 depicts the explorative analysis. Then, Section 5 describes and evaluates our approach for mitigating inequality of recommended opportunities. Finally, Section 6 provides concluding remarks and future research lines.

\section{Related Work} \label{sec:related-work}
This research lies at the intersection among the research communities working on Artificial Intelligence in Education (AIED), Recommender Systems (RecSys), and Fairness, Accountability, Transparency, Explainability (FATE).

\subsection{Educational Recommender Systems in the Artificial Intelligence Context} 
The advances in the area of computing technologies have facilitated the implementation of artificial intelligence applications in educational settings, improving teaching, learning, and decision making~\citep{pinkwart2016another}. 
Learners' behavioral patterns have been analyzed to make inferences, judgments, or predictions, serving for personalized guidance or feedback to students, teachers, or policymakers, for example, as proposed by~\cite{mao2019one, ren2019grade}. 

Our study in this paper treats recommender capabilities as an important component of AIED systems~\citep{khanal2019systematic}. 
Given the increase in resources available in online course platforms and learning management systems, designing personalized recommendations has become a key challenge. 
This challenge, thereby, increased research carried out by the AIED, Educational Data Mining (EDM), and Learning Analytics and Knowledge (LAK) communities. 
The most common objective in prior work is to suggest resources or peers in a given course. 
For instance, \cite{lin2020deep} proposed a deep attention-based model to recommend resources based on learners' online behaviors.  
This work outperformed state-of-the-art baselines in terms of accuracy. 
Similarly, \cite{wang2019adaptive} introduced a recommendation algorithm for textbooks, showing that adding adaptivity significantly increases engagement. 

Beyond recommending learning resources, 
\cite{eagle2018predicting} designed an algorithm for individualized help messages. Further, the work proves that the needs of learners for a lesson can be effectively predicted from their behavior in prior lessons. 
Conversely, \cite{mi2017adaptive} and \cite{chen2020csclrec} showed the important role of personalization while modeling forum discussion recommendations. 
Additionally, \cite{chau2018learning} assisted instructors with recommendations on the most relevant material to teach. 
Both works illustrated the presented algorithms' effectiveness in predicting learners' preferences. 
Reciprocal recommendations have been investigated by \cite{labarthe2016does} and \cite{potts2018reciprocal}. 
These works proposed two recommendation approaches for personalized contact lists. 
Their experiments uncovered that learners are more likely to engage in courses if they received peer recommendations. 
Finally, other tasks dealt with the accuracy of recommender systems in matching learners and job offers \citep{jacobsen2019s}.

Course recommendations have recently received attention due to the increasing number of initiatives carried out online. 
For instance, \cite{pardos2019designing} generated course recommendations that are novel and unexpected, but still relevant to learners' interests. 
Their results revealed that providing services optimized for serendipity allows learners to explore resources without a strong bias towards the learner’s (past) experience. 
Furthermore, \cite{morsomme2019content} found that students find recommendations for courses at other departments very helpful. 
In the university context, \cite{esteban2018hybrid} and \cite{boxuancourse} described two hybrid-methods for discovering the most relevant criteria that affect the course recommendation for university learners. 
Their results confirmed that the overall rating that a learner gives to a course is the most reliable information source. However, when it is complemented with other criteria about the courses, the recommendation accuracy increases. 
Capturing the sequential relationships across courses made it possible to \cite{polyzou2019scholars} to devise a course recommender system. 
This system outperforms other collaborative filtering and baseline approaches. 

Extensive research work has been devoted to mastery learning in intelligent tutoring systems, which select educational resources for learners based on knowledge tracing. 
For instance, \cite{thaker2020recommending} automatically identified the most relevant textbooks to be recommended, by incorporating learner's knowledge states. 
\cite{chanaa2020predicting} proposed a model that predicts learner's needs for recommendation using dynamic graph-based knowledge tracing. 
By learning feature information and topology representation of learners, their model achieved a competitive accuracy of more than 80\%. 
Subsequently, to avoid the mismatch between learners and learning objects, \cite{dai2016course} introduced a recommender system for suggesting learning objects, with a domain knowledge structure to connect learners’ skills and learning objects. 
They showed that the accuracy is higher when texts related to the concerned domain knowledge are involved. In~\cite{ChanRHKSBPCPNSBSDLMH06}, the authors conclude that ready-to-hand access conceives the potential for a new evolution of technology-enhanced learning (TEL) phase. 
This phase is defined as ``seamless learning spaces'' and marked by a succession of the learning experience over different scenarios. Further, it arises from the availability of one (or more) device(s)  per student (``one-to-one''). 
The one-to-one TEL holds the potential to “cross the chasm” from early adopters handling isolated design studies to adoption-based research and extensive implementation.
Finally, \cite{ai2019concept} designed an exercise recommender that considers exercise-concept mappings while tracing learners’ knowledge. This led to a better performance than the heuristic policy of maximizing learners’ knowledge level.

Our contribution differs from prior work in three major ways. 
First, current approaches have been mostly optimized for learners' preference prediction, given their ratings, grades, or enrolments. 
Conversely, our approach aims to control how learning opportunities vary, based on high-level properties directly measurable on the ranked lists (e.g., familiarity and learnability of the recommended courses), going beyond preference prediction accuracy only. 
Second, even though several recommender systems integrated beyond-accuracy aspects, such as learnability and serendipity, combining them with other aspects and decoupling them from the underlying recommendation strategy appears impractical. 
Differently, our post-processing mechanism can be applied to the output of any recommender system to arrange recommendations that meet a range of properties. 
Third, controlling how much the generated recommendations are equally consistent across learners has been rarely investigated. 
Hence, we introduce and operationalize a novel fairness metric that monitors equality among learners concerning the targeted educational principles.   

\subsection{Fairness in Artificial Intelligence for Education} 
Characterizing and counteracting potential pitfalls of data-driven educational interventions is receiving increasing attention from the research community. 
Educational applications of artificial intelligence are not immune to the risks observed in other domains. 
Moreover, the design of systems may often be driven more by profit than by actual educational impact, with serious potential risks (e.g., algorithmic biases, invasion of privacy, or negative social impacts) out-weighting any benefits~\citep{shum2018transitioning,bulger2016personalized,williamson2017decoding}. 

Responding to these concerns may be critical to determine the fairness of AIED systems and to shape how the ethics for human learning are more broadly defined. 
However, only a few works have focused on ethics in education, where the increasing use of learning analytics and artificial intelligence raises unique context-specific challenges~\citep{ocumpaugh2014population}. For instance, while most existing fairness	auditing and de-biasing methods require access to sensitive demographic information (e.g., age,	race, gender)	at an individual level, such information is often unavailable to AIED practitioners~\citep{holstein2019designing}. 
Also, it becomes challenging to define equitable outcomes in contexts where a system results in disparate outcomes across subpopulations, such as learners having lower or higher prior knowledge~\citep{hansen2015democratizing}. 
Although the community has been interested in the ethical dimensions of data-driven educational systems~\cite{drachsler2015ethical,sclater2015code,tsai2017learning}, the focus has often been on policies. 

Despite this widespread attention, fairness has been rarely discussed from a more practical and technical perspective~\citep{holmes2019ethics,holstein2019fairness,mayfield2019equity,porayska2019accountability}. 
Given that designing methods for addressing unfairness challenges can be highly context-dependent~\citep{holstein2019improving,green2018myth,selbst2019fairness}, the education research community has started to explore what fairness, accountability, transparency, and ethics look like in technology-supported education specifically. 
For instance, \cite{yu2020towards} found that combining the profile and material data sources does not fully neutralize biases, and it still leads to high rates of underestimation among disadvantaged groups for learners' success prediction. 
Similarly, \cite{doroudi2019fairer} showed that knowledge tracing algorithms are susceptible to unfairness, but that knowledge tracing with the additive factor may be more equitable. 
\cite{hu2020towards} focused on individually fair models for identifying students at-risk of underperforming. 
The work shows how to effectively mitigate bias in models and make the models useful in aiding all learners. 
Conversely, \cite{abdi2020complementing} investigated the impact of complementing educational recommender systems with transparent justifications for their recommendations. 
This impact leads to a positive effect on engagement and perceived effectiveness and an increasing sense of unfairness due to learners not agreeing with how their competency is modeled. 
Such appraisal is key to enhancing our understanding of fairness, building on knowledge gleaned from AIED research.

However, to the best of our knowledge, controlling equality in educational recommender systems has been so far under-explored. 
As a consequence, we investigated how fairness and ethical aspects have been treated by the general-purpose RecSys community~\citep{barocas2017fairness,RAMOS2020102058}, analyzing whether and how the resulting treatments can be tailored to recommender systems in education. 
Fairness across end-users deals with ensuring that users who belong to different protected classes (group-based) or are similar at the individual level (individual-based) receive recommendations with the same quality. 
Group-based fairness requires that the protected groups are treated similarly to the advantaged groups or the population as a whole. 
For instance, \cite{ZhuHC18} designed an approach that identifies and removes from tensors all gender information about users. This approach leads to fairer recommendations regardless of the user's group membership. 
\cite{RastegarpanahGC19} generated artificial data to balance group representations in the training set and minimize the difference between groups in terms of mean squared error. 
Similarly, \cite{YaoH17} proposed metrics related to population imbalance (i.e., a class of users characterized by a sensitive attribute being the minority) and observation bias (i.e., a class of users who produced fewer ratings than their counterpart). 
Under a similar scenario, for instance, \cite{BeutelCDQWWHZHC19} built a pairwise regularization that penalizes the model if its ability to predict which item was clicked is better for one group than the other. 
These works showed that operationalizing their metrics in the recommender's objective function results in fairer recommendations. 

Group-fairness is, unfortunately, inadequate as a notion of fairness, given that there exist circumstances wherein group fairness is maintained. 
However, from an individual point of view, the outcome is blatantly unfair. 
Hence, our study cares about learners as individuals, not as belonging to a class based on a certain sensitive attribute. 
This condition also fits with educational scenarios where sensitive demographic attributes (e.g., age, race, gender) at an individual level are unavailable to learning analytics practitioners. 
Examples of individual fairness notions proposed by the RecSys community~\citep{biega2018equity,lahoti2019ifair,singh2019policy} imply that similar users should have similar outcomes. 
They capture fairness by defining that any two individuals similar concerning a particular task should be treated likewise, assuming that a similarity metric between individuals exists.
For instance, in a health-related recommender system, two patients who have a similar pathology should receive recommendations with the same quality.

Our study generalizes the original definition of individual fairness and applies it to the educational context. 
Specifically, we aim to provide all learners, indistinctly, with recommended learning opportunities that are equally consistent concerning the targeted principles. 
We do not rely on any notion of similarity across pairs of learners based on how the targeted principles were met in the past. 
These notions could even emphasize existing inequalities (e.g., two learners who similarly experienced less learnable recommended courses in the past could end up receiving low learnable courses more and more, though the recommender would have been fair under the original individual fairness notion). 
Therefore, our notion of equality may be seen as an \emph{individual statistical parity}, which equalizes the consistency for the principles across all learners. 

\section{Problem Formulation} \label{sec:problem-formulation}
In this section, we formalize recommendation concepts, educational principles, and metrics that respectively monitor consistency and equality of recommended learning opportunities among learners. 

\subsection{Preliminaries}
Given a set of learners $U$ and a set of educational resources $I$, we assume that learners expressed their interest for a subset of resources in $I$. The collected feedback from learner-resource interactions can be abstracted to a set of  pairs ($u$, $i$), implicitly obtained from user activity, or triplets ($u$, $i$, $rating$) explicitly provided by learners, shortly denoted by $R_{u,i}$. We denote the learner-resource feedback matrix by $R \in \mathbb{R}^{M*N}$m where $R_{u,i} > 0$  indicates that learner $u$ interacted with resource $i$, and $R_{u,i}=0$ otherwise. Furthermore, we denote the set of resources that learners $u\in U$ interacted with by $I_u=\{i\in I\,:\,R_{u,i} > 0\}$.  

We assume that each resource $i \in I$ is represented by a m-dimensional feature vector $F_i = (f_1, \ldots, f_m)$ over a set of features $F = \{F_{i,1}, F_{i,2}, \ldots, F_{i,m}\}$. Each dimension $F_j$ can be viewed as a set of values or labels describing a feature of a resource $i$, $f_{i,j} \in F_j$ for $j=1, \ldots, m$. In our experiments, we considered five features, i.e., instructional level (discrete), resource category (discrete), last update timestamp (discrete datetime), number of enrolled learners (continuous), and price (continuous). Furthermore, we assume that each resource $i \in I$ is composed by a set of assets $L_i$. Each $l_{i,j} \in L_i$ has a type $t_{i,j} \in T$. In our study, we considered $T=\{ Video, Article, Ebook, Podcast \}$, due to their popularity and their availability in the public datasets.        

We assume that a recommender estimates relevance for unobserved entries in $R$ for a given learner and uses them to rank resources. It can be abstracted as learning $\widetilde{R}_{u,i} \in [0,1]$, which represents the predicted relevance of resource $i$ for learner $u$. Given a certain learner $u$, resources $i \in I \setminus I_u$ are ranked by decreasing $\widetilde{R}_{u,i}$, and top-$k$, with $k\in\mathbb N$ and $k>0$, resources are recommended. Finally, we denote the set of $k\in\mathbb N$ resources recommended to user $u$ by $\Tilde{I}_u$. 

\subsection{Modeling Recommended Learning Opportunity through Principles} \label{sec:principles}
Given that the recommendation capabilities are an important part of AIED systems, investigating whether educational recommender systems are fair and how they can be made a vehicle for making our educational systems fairer is essential. Capturing, formalizing, and operationalizing notions of equality can shape our understanding of the extent to which the educational offerings available to learners provide them with equal opportunities and how recommender systems influence the normal course of educational business. To this end, defining the variables to be equalized constitutes a natural pre-requisite. 

Organizing learning opportunities in classroom settings has been traditionally a responsibility of instructional designers or teachers. 
To this end, they rely on a range of principles coming from the curriculum design field, including significance, self-sufficiency, validity, interest, utility, learnability, feasibility \citep{talla2012curriculum,druzhinina2018curriculum}. 
Hence, our study assumes that the notion of equality needs to consider these principles derived from the instructional design beliefs as those to be equalized in recommendations, given their real-world validity for learners' educational experiences from the instructional perspective. 
However, we do not argue that this approach and the consequent principles are the unique, right ones, as they strongly depend on the educational context and the adopted eliciting processes. 
For instance, alternatively, to the one based on curriculum design, which are properties of the recommended resources that might affect learners' experiences, a user-centered eliciting process might come up with the principles to be equalized from the learner's notions on the fairness of the educational resource selection decisions they make or are presented with. 
Nonetheless, our curriculum-design principles would be a great starting point for a human-centered eliciting process, which is thus not the focus of our study and left as a natural future work. 
Our principle modeling aims to serve as a solid ground for researchers, posing them in a  better position to know what questions, scenarios, and data explore while eliciting their concepts of fairness. 
Therefore, we argue that our study represents a blueprint of the decisions and processes to be done, even when the principles are defined from learner's beliefs in a given educational scenario.   

Human inspection of curriculum-design-based principles is usually based on textual guidelines, and the translation into numerical indicators, when available, is dependent on the specificity of the educational context. 
Given the unique characteristics of the online educational context and the constraints the platforms introduced in the collection of learners' data, we assume that the principles are based on data that would typically be available in a platform in which an educational recommender system would be embedded. 
For this reason, not all the principles and not all the guidelines can be directly operationalized\footnote{Section \ref{sec:limitations} identifies a range of limitations derived from these assumptions.}. 
Specifically, we envision a scenario wherein only a representative subset of principles are embedded in the recommender system's logic. 
The educational platform is thus empowered with the capability of controlling the extent to which the list of courses recommended to learners meets each principle. 
While the high-level conception of the selected principles is assumed to be relevant, their operationalization into the recommender's system logic is strongly dependent on the platform, turning to simplified implementations in some cases. 
While we provide formulations that are as general as possible, we will ensure that our approach can be extended or adapted to any principle. 

Formally, we consider a set $C$ of functions $c_{\Tilde{I}_u}(\cdot) \, : \, I^k \xrightarrow{} [0,1]$. Each function receives a set of $k$ resources $I^k$ and returns a value indicating how much the set of resources meets that principle. The higher the value, the higher the extent to which the principle is met. Specifically, we consider the following seven principles, whose mathematical formulation is provided in Appendix \ref{sec:math-formulation}.

\begin{definition}[Familiarity] \textit{Familiarity is defined as whether the subject is familiar with the recommended content, as measured by whether the relative frequency of the course categories in a recommended set is proportional to that in the courses the learner took.} \end{definition}

Familiarity is at the heart of a learner-centered education. Learners might be more comfortable if the subject matter is meaningful to them, and it is assumed that it becomes meaningful if they are familiar with that subject. \cite{xie2009selection} supported this observation through a descriptive and statistical analysis, uncovering that the familiarity was correlated to the content searching behavior. Similarly, \cite{qiu2017content} showed that participants were behaviourally and cognitively more engaged in tasks with familiar topics as well as having a more positive affective response to them.

Our study models familiarity by means of the category of the resources in a recommended list, encoded into a pre-defined taxonomy. If the relative frequency of the course categories in a recommended set is proportional to that in the courses the learner took, we assume that the familiarity is high (a value of $1$). Conversely, the minimum familiarity of $0$ is achieved when the recommender suggests resources in the opposite direction concerning the learner's most familiar categories. This principle is related to the concept of calibrated recommendations, which aim to reflect the various interests of a user in the recommended list with their appropriate proportions \citep{steck2018calibrated}\footnote{Differently from that work, which used the Kullback–Leibler divergence as a non-symmetric, unbounded, and computationally unstable distance function, we adopted the \emph{Hellinger} distance, which is symmetric and bounded in the range $[0,1]$.}. 

\begin{definition}[Validity] \textit{Validity is defined as whether the course is likely to be up-to-date and not obsolete, as measured by when content was last updated. A subject is assumed to be more meaningful if it has been newly updated.} \end{definition} 

Controlling the validity of the learning content is one of the major axes of education, since learners would not find information not valid anymore in the courses. One way for maintaining the course content valid is continuously updating it, either with more recent content or with new versions of the same contents (e.g., adapted based on the learners' feedback). This practice also shows learners that the course is alive. Curriculum-design experts usually seek to follow current trends and carefully consider the validity of a curriculum \citep{druzhinina2018curriculum}; otherwise, the opportunity becomes obsolete. Similarly, \cite{bulathwela2019towards} highlighted that freshness is one of the main factors shaping the content validity perceived by learners. Hence, we assume that validity should be taken into account in the offered recommendations. 

Our scenario operationalizes the validity principle by controlling that learners are presented with recommended courses that have been recently or frequently updated. Values close to 0 imply that the recommended list includes courses no longer updated for a long time, while values close to 1 are achieved by recommended courses with recent updated\footnote{It should be noted that using recency of updates as a proxy for validity does not consider that, for instance, a course on foundational material updated many times in the past does not benefit from recent updates.}.

\begin{definition}[Learnability] \textit{Learnability is defined as whether the recommended courses present an opportunity coherent with the learner's capability, as measured by whether the set of courses varies in terms of instructional level.} \end{definition} 

Learnability is associated with the ease, efficiency, and effectiveness with which learners can perform an activity of knowledge acquisition. Our study assumes that the subject matter to be recommended should be within the knowledge schema of the learners and their experiences. The literature indicated that learnability impacts learner motivation to learn \citep{conaway2016keys}, motivating us to monitor this principle in the recommended lists.

Our scenario includes this concept to ensure that courses are presented at diverse instructional levels, to maximize the possibility that learners can find an opportunity coherent with their capability. While our assumption might appear over-simple compared to other knowledge-tracking methods (e.g., the zone of proximal development), the current online course platforms impose constraints that should be met. Specifically, data on mid-term quizzes and final exams are often not recorded internally, and this leaves the implementation of more advanced knowledge-tracking techniques as a primary open issue. Therefore, we rely on the recommendation of courses at different levels for students with different abilities. Learnability values close to 0 imply inequality among levels, while the high balance is obtained with values close to 1.

\begin{definition}[Variety] \textit{Variety is defined as whether the recommendation takes into account that learners are different and learn in different ways based on their interests and ability, as measured by the degree to which the recommended courses present a mix of different asset types.} \end{definition}

Providing course material in a variety of formats represents a primary objective. For instance, by studying the online course design and teaching practices of award-winning teachers, \cite{kumar2019award} uncovered that including video, audio, reading, and interactive content made courses more engaging. In another study, \cite{papathoma2020guidance} highlighted how this variety of formats favors also the accessibility of a course, given that learners may struggle with a particular medium (e.g., due to a reading barrier such as dyslexia or a video barrier such as hearing or attention problem). Therefore, monitoring whether a course provides learners with a large variety of content formats is important. 

Our operationalization of variety assumes that varied asset types may be provided to help learners comprehend the subject from various perspectives. Hence, variety values close to 0 mean that the learning opportunities are focused on only one asset type, while types greatly vary for values close to 1. 

\begin{definition}[Quality] \textit{Quality is defined as the perceived appreciation of the recommended resources by the learners, as measured by the ratings that the learners assign to resources after interacting with them.} \end{definition}

Student evaluation of teaching quality is important to assess current teaching experiences. Teaching evaluation helps to promote a better learning experience for learners and provide information to future learners while deciding for attending a course. However, defining quality in online learning is challenging because there is no real consensus on its true meaning. Consequently, quality is evaluated differently depending on the organization in charge of measuring it. For instance, \cite{darwin2017contemporary} showed that student ratings are perceived as a valuable, though fragile, source of intelligence about the effectiveness of curriculum design, teaching practices and assessment strategies. On the other hand, \cite{gomez2016measuring} observed that learners considered other core variables in defining quality in online programs, such as the ability to transfer, knowledge acquisition, learner satisfaction, and course design.

Our study operationalizes quality by leveraging the learners' ratings. Rating values close to 0 mean that the learning opportunities are of low-quality, while values close to 1 are measured for high-quality recommended opportunities. Even though some studies demonstrated that learners' ratings do not often correlate with other measures of quality (e.g., learning outcomes), this design choice makes it possible to meet the current constraints in data gathering in large-scale educational platforms, and allows us also to maintain this principle as much general as possible.

\begin{definition}[Manageability] \textit{Manageability is defined as whether the online classes are large or small, as measured by the number of learners enrolled in the recommended courses, with small classes considered more manageable.} \end{definition} 

Organizational aspects are critical for shaping learners' experiences. In this context, class size differences may influence academic interactions between students and their professors and peers. For instance, with a large number of learners, the instructor may work harder to combat student passivity and encourage participation, as learners feel an increasing sense of anonymity. This point is confirmed by the study of \cite{beattie2016connecting}, which uncovered that the likelihood of academic interactions about course material and assignments with professors was diminished in larger classes, as was the probability of talking to peers about ideas from classes. Similarly, \cite{lowenthal2019does} revealed that online courses with smaller enrollments are seen better for student learning and faculty satisfaction, by learners and instructors.

Our study embeds the notion of manageability, associating it with the size of the course class where the recommended opportunities take place. This principle is relevant to offer opportunities under smaller and controlled classes. Hence, manageability values close to 0 mean that the learning opportunities include very large classes, while values close to 1 refer to small classes\footnote{It should be noted that other operationalizations might refer to a teacher’s ability to manage students or how topics are selected by teachers. Other aspects of manageability (e.g., number of teaching assistants, number of recitation sections) can be captured under this principle, depending on the educational context and platform capabilities.}.

\begin{definition}[Affordability] \textit{Affordability is defined as the cost of accessing to the recommended opportunities, as measured by the enrolment fees of the suggested courses, with less expensive courses having higher affordability value.} \end{definition} 

Dealing with the increasing costs of education is critical, given that lots of learners need access to vastly more affordable and quality education opportunities, including tuition-free course options. For instance, \cite{mohapatra2017adopting} emphasized that the affordability of the offering is one of the prime predictors of the learners' perception, while \cite{joyner2016unexpected} uncovered that providing more affordable courses has led to a learner population more intrinsically motivated to learn, more experienced, and more professionally diverse in some contexts. Institutions and platforms are thus under increasing pressure to provide more affordable learning without sacrificing optimal learning outcomes. For this reason, we monitor the affordability principle in the recommendations generated in the online platform. 

Our notion of affordability aims to control the degree of economic accessibility for the recommended opportunities, measured as by their enrolment fees. Specifically, we consider how much the learning opportunities cover a range of fee ranges. A value close to 0 means that the learning opportunities are expensive, while a value close to 1 corresponds to free-of-charge opportunities.  

\vspace{2mm} Though each of the principles has relevance for students' educational experiences from the instructional design perspective, the set of principles could be expanded. Additionally, the proposed set is not meant to be the unique right set. Furthermore, it should be noted that massive online course platforms are often targeted for profitability and large coverage, and a business plan should be provided, and making a profit must be considered as a primary goal for such courses. 
It thus remains unanswered the question on how to integrate business and educational principles, which would deserve a broader and specific discussion beyond touching the technical details. Our study in this paper assumes that the educational platform is primarily driven by the positive impact on the involved learners. 
Furthermore, another point to mention is that there might be several principles left out, but relevant for certain educational scenarios or specific platforms (e.g., the time of day a course is offered). 
This observation brings into question that there is no one-size-fits-all set of principles to be equalized in educational recommender systems. 
Contextual, technical, social, and pedagogical aspects play a key role in shaping the resulting principles, and, again, we point out that our goal is to provide a blueprint of processes to be done, even when the principles differ from the ones presented in this paper.

\subsection{Equality of Recommended Learning Opportunity}
The notion of equality presented in this paper hinges on low variability across learners in the degree to which a set of recommended educational resources meets each principle, to a targeted degree, where the target is learner-specific or defined at the system level. 
Specifically, to formalize the equality of recommended learning opportunities, we first need to define how much the list recommended to each learner meets the principles targeted by the educational platform. 
In this paper, we propose to operationalize the concept of consistency across principles as the similarity between 
(i) the degree all principles are met into the recommended list and 
(ii) the principles degree targeted by the educational platform. 
The higher the similarity, the higher the extent to which the principles are met. 
We resorted to the operationalization of this metric locally on each ranked list so that it will be possible to optimize such a metric on a pre-computed recommended list through a post-processing function (see Section \ref{sec:eq-control}). For the ranked list $\Tilde{I}_u$ of learner $u$, we assume the platform aims to ensure a targeted degree for each principle $c\in C$ for each learner: 
\begin{equation}
    p_u(m)\in[0,1], \,\,\, \forall \, m \in \{0, \cdots, |C|-1\} 
\end{equation}

\noindent When a recommender computes the top-$k$ resources $\Tilde{I}_u$ to be suggested to learner $u \in U$, we define the extent to which the principles are met in $\Tilde{I}_u$ as: 

\begin{equation}
    q_{\Tilde{I}_u}(m) = c_{\Tilde{I}_u}(m) \in[0,1], \,\,\, \forall \, m \in \{0, \cdots, |C|-1\} 
\end{equation}

\noindent where the value corresponding to each principle $q_{\Tilde{I}_u}(m)$ is computed by applying the formulas formalized in the previous section. 

For the ranked list of a learner $u$, the principles targeted by the educational platform are met if the values in $p_u$ and $q_{\Tilde{I}_u}$ are aligned with each other. 
Hence, to achieve the principles' goals targeted by the educational platform, we compare the vectors $p_u$ and $q_{\Tilde{I}_u}$. 
Specifically, we define the notion of \textbf{Consistency} between 
(i) target principles and 
(ii) how much principles are achieved in recommendations, by the complement of the Manhattan (M) distance\footnote{The Manhattan distance between two vectors is equal to the one-norm of the distance between the vectors. 
Specifically, it represents the distance between two points in a grid-based on a strictly horizontal and/or vertical path, i.e., along the grid lines.}, a symmetric and bounded distance measure. The consistency for each learner and for the entire learners' population is formulated as follows:  

\begin{equation}
\begin{split}
Consistency(u|w) = 1 - M(p_u,q_{\Tilde{I}_u}|w) =  1 - \frac{1}{|\Tilde{I}|} \sum_{i = 1}^{|\Tilde{I}|} \; w_i \; \left|[p_u]_i-[{q}_{\Tilde{I}_u}]_i\right|
\end{split}
\label{eq:local-consistency}
\end{equation}

\begin{equation}
\begin{split}
Consistency(U|w) = \frac{1}{|U|} \sum_{u \in U} Consistency(p_u,q_{\Tilde{I}_u}| w)
\end{split}
\label{eq:global-consistency}
\end{equation}
\vspace{2mm}

\noindent where $w$ is a vector of size $|C|$; the element $w_i$ is the weight assigned to the principle $i$, between 0 and 1. 
\emph{Consistency} is $1$ if $p_u$ and $q_{\Tilde{I}_u}$ are perfectly balanced, meaning that the principles pursued by educational platform are met. 
Conversely, the lowest \emph{Consistency} $0$ is achieved when $p_u$ assigns value 0 to every principle that $q_{\Tilde{I}_u}$ assigns value $1$ (or vice versa), so that the distributions are completely unbalanced. 
In the latter situation, the recommender suggests resources opposite to the educational platform's goals. 
Given that principles are context-sensitive, our notion of consistency might provide different target degrees of principles for each learner or each time period. 
For instance, concerning to familiarity, different learners might have a different propensity towards familiar content, and the same learner may, at different times, have different preferences. 
The above formulation allows modeling these circumstances.  

Given the notion of consistency, we can formalize the notion of \textbf{Equality} across consistencies as the complement of the Gini index\footnote{The Gini index is a statistical measure of distribution that aims to model inequality among a population. 
The coefficient ranges from 0 to 1, with 0 representing perfect equality and 1 representing perfect inequality. 
Formally, it is defined based on the Lorenz curve, which plots the proportion of the total consistency of the population cumulatively earned by the bottom x\% of the population. 
The Gini index is the ratio of the area that lies between the line of equality and the Lorenz curve over the total area under the line of equality.} over the consistencies across learners. 
The Gini index ranges between 0 and 1, with higher values representing distributions with high inequality. 
It is used as: 
\begin{equation}
  \begin{split}
    Equality(U|w) = 1 - GINI\left(\{Consistency(u|w) \; | \; \forall \, u \in U\}\right) 
  \end{split}
\label{eq:equality}
\end{equation}
where a value of $0$ represents the largest inequality across consistencies, and a value of $1$ means that the recommender systems are perfectly equal across learners. 
Differently from \cite{lahoti2019operationalizing,biega2018equity}, we count as a positive effect when learners achieve high consistency in recommendations, regardless of the consistency in their past interactions. 
Thus, the ideal recommender system would be the one that ($i$) achieves the higher consistency between the principles pursued by the platform and those measured in the recommendations, ($ii$) keeps it equal over the learners' population, and ($iii$) retains individual interests of learners.  

It should be noted that our notion of equality is defined as providing the same consistency on principles to all learners, without leveraging any information on learners' sensitive features, e.g., gender. 
The targeted degree for each principle for each student is assumed to be set by or known by the educational platform. 
Our approach enables a platform to set the same targeted degree for all students or apply student-specific targeted degrees set based on the previous learners' preference or elicited from learners. 
To better focus on the core contribution of this paper, Section \ref{sec:exp-analysis} will investigate whether all the principles can be maximized for all learners, leaving student-specific targeted degrees as part of a human-centered study\footnote{We argue that quantifying student-specific targeted degrees from the previous learners' preferences encoded in the training set might be unreliable, given that the preference of each learner might have been biased by the recommender system itself.}. 
Furthermore, the reliance on stakeholders empowered with decision-making capabilities to configure the platform with the considered principles and their different targeted degrees represents an essential element towards implementing our notion of equality. 
This primary responsibility of stakeholders is in addition to all the others involved in the educational ecosystem (e.g., selecting the preferred system, deciding the recommendation strategy, and defining the visual interface).   

\section{Exploratory Analysis} \label{sec:exp-analysis}
To illustrate the trade-off between learners' interests and the considered principles, and further emphasize the value of our analytical modeling, we characterize the learning opportunities proposed by ten algorithms to learners of a real-world educational dataset, as a function of the proposed principles. 
Specifically, we aim to evaluate how these principles play out in each algorithm.  

\subsection{Data} \label{sec:data}
We analyze data from the educational context, exploring the role of the proposed principles in recommendations. 
We remark that the experimentation is made difficult because there are very few large-scale educational datasets coming from this specific field of online education. 
To the best of our knowledge, COCO \citep{dessi2018coco} is the widest educational dataset with all the attributes required to model the proposed principles and with enough data to assess performance significantly. 
Collected from an online course platform, it includes 43,045 courses and 4,123,127 learners who gave 6,564,870 ratings. 
Other educational datasets proposed by \cite{feng2019understanding,zhang2019hierarchical,qiu2016modeling} generally include $(learner,course,rating)$ triplets only, as needed in traditional recommendation scenarios.

\begin{figure}[!b]
\begin{subfigure}[t]{0.32\linewidth}
    \centering
    \includegraphics[width=1.0\linewidth]{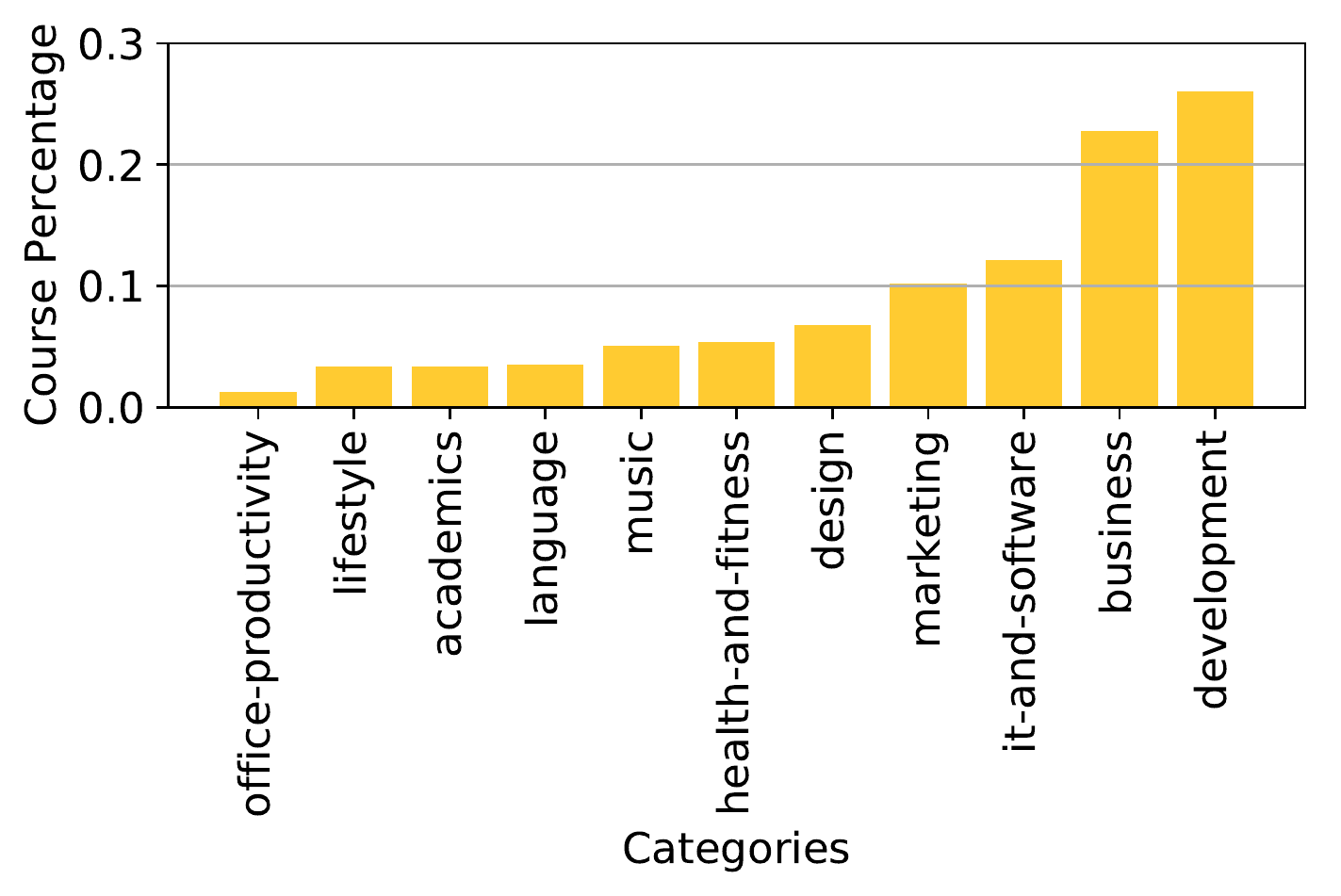}
    \caption{Categories : Interest}
\end{subfigure}
\begin{subfigure}[t]{0.32\linewidth}
    \centering
    \includegraphics[width=1.0\linewidth]{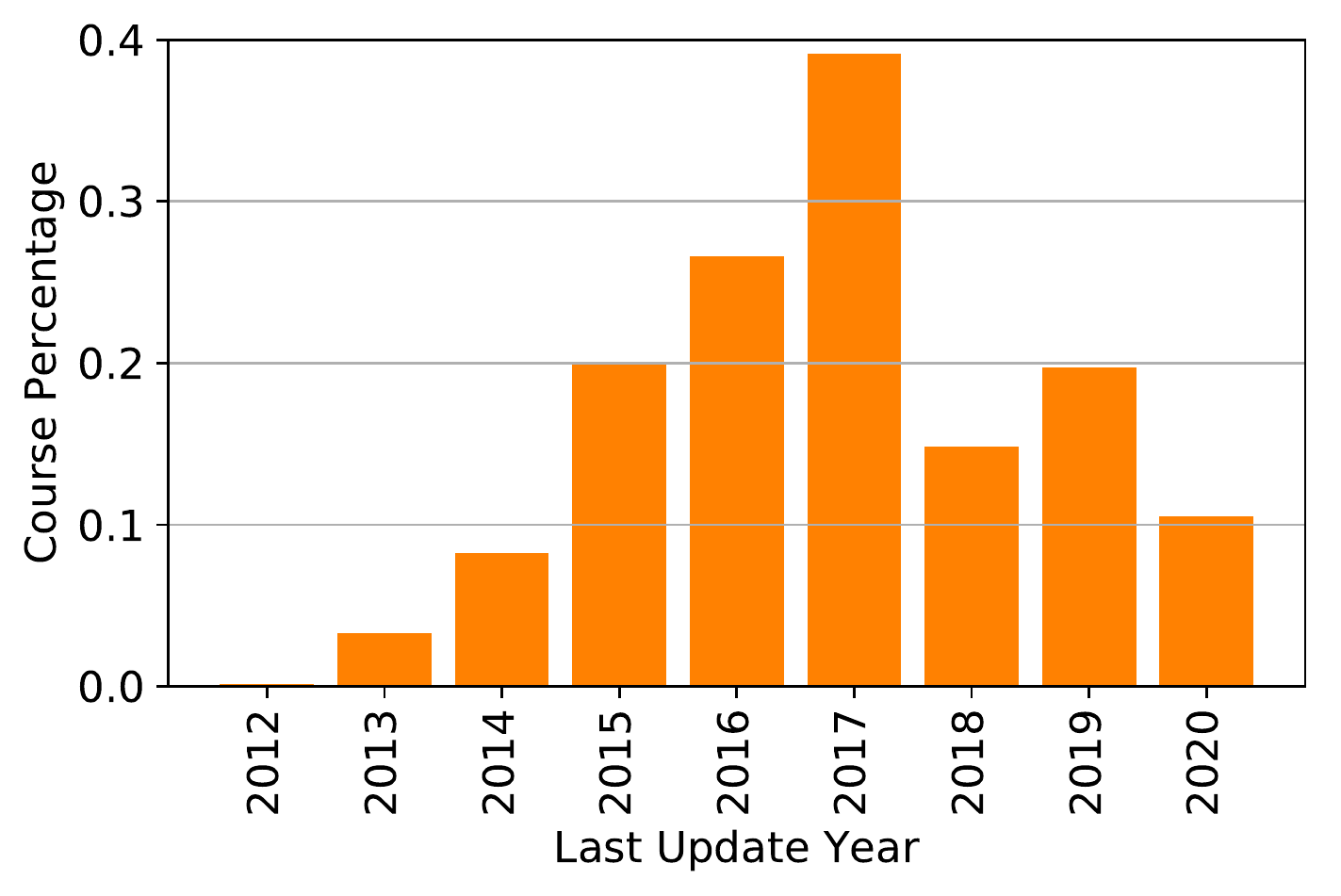}
    \caption{Last Update : Validity}
\end{subfigure}
\begin{subfigure}[t]{0.32\linewidth}
    \centering
    \includegraphics[width=1.0\linewidth]{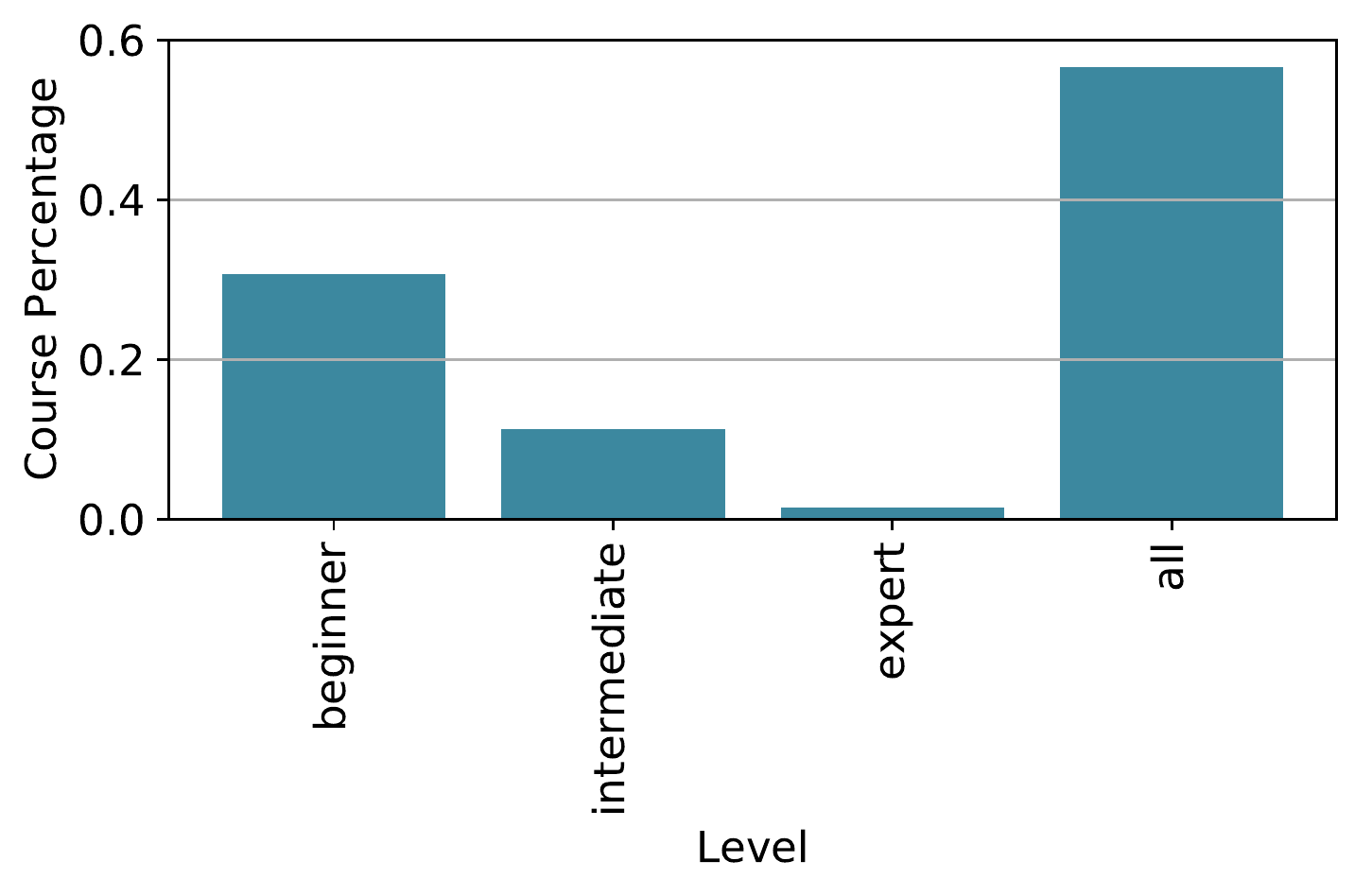}
    \caption{Level : Learnability}
\end{subfigure}
\begin{subfigure}[t]{0.32\linewidth}
    \centering
    \includegraphics[width=1.0\linewidth]{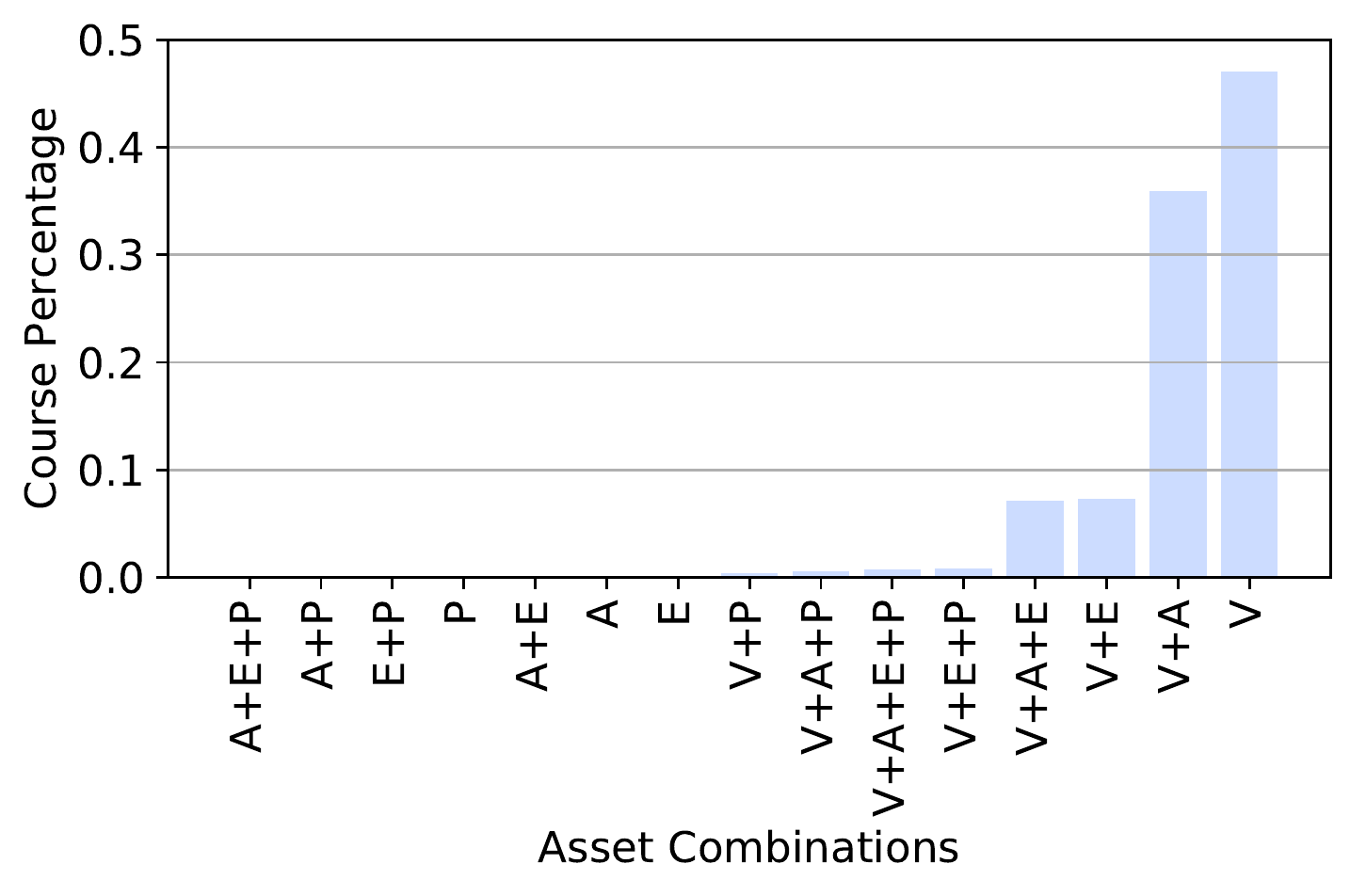}
    \caption{Asset Types : Variety}
\end{subfigure}
\begin{subfigure}[t]{0.32\linewidth}
    \centering
    \includegraphics[width=1.0\linewidth]{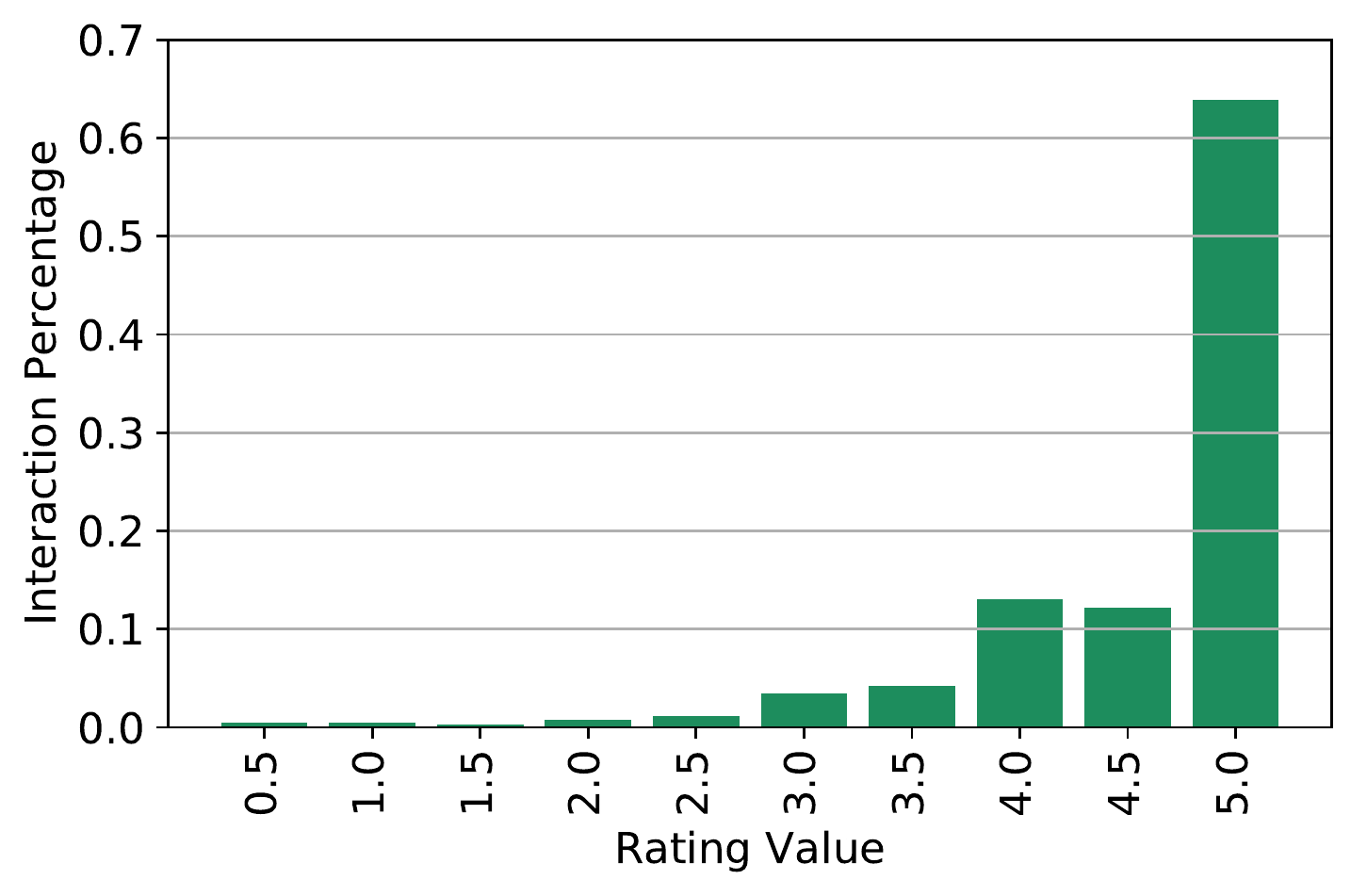}
    \caption{Rating Value : Quality}
\end{subfigure}
\begin{subfigure}[t]{0.32\linewidth}
    \centering
    \includegraphics[width=1.0\linewidth]{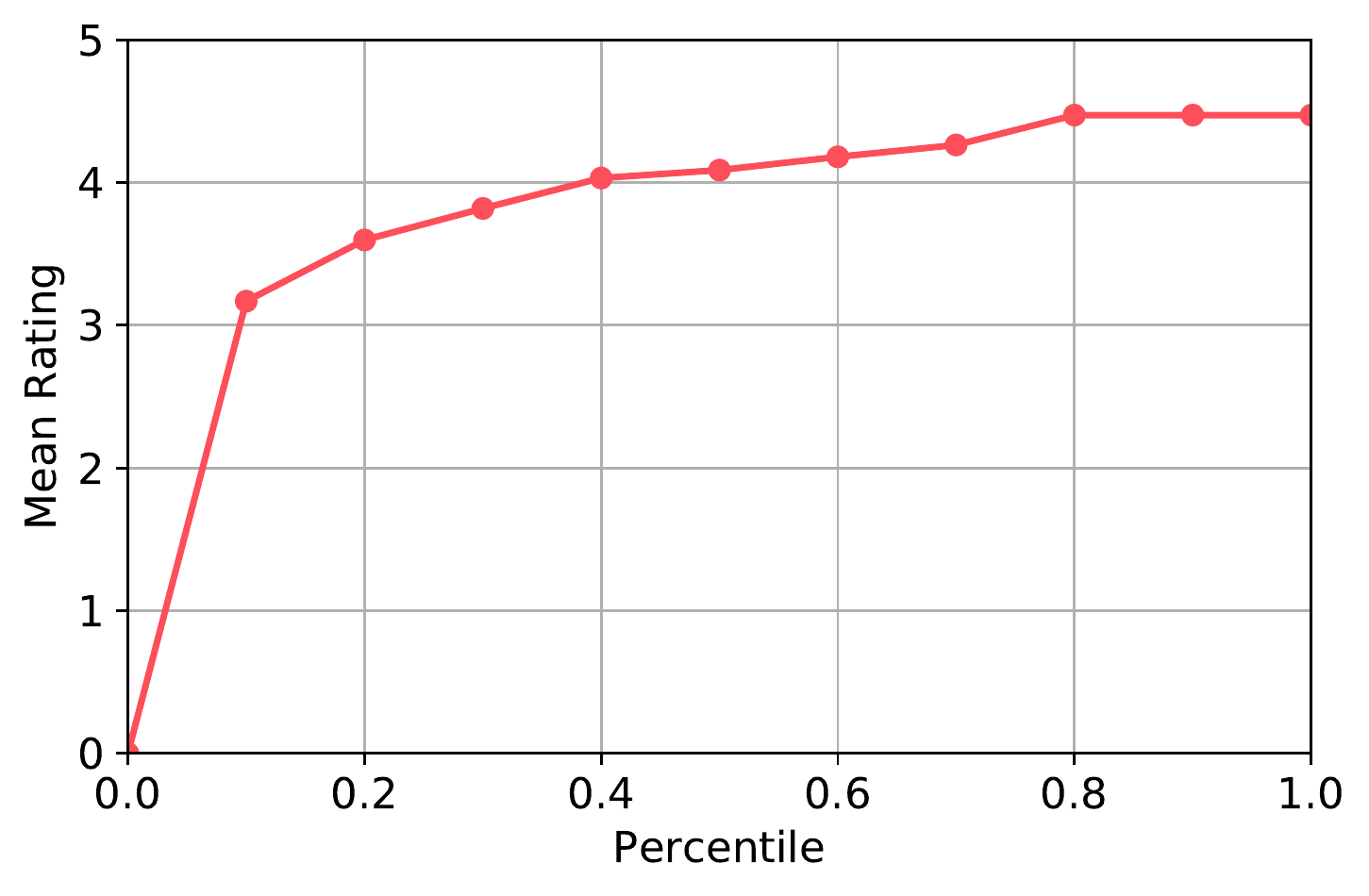}
    \caption{Mean Rating : Quality}
\end{subfigure}
\begin{subfigure}[t]{0.32\linewidth}
    \centering
    \includegraphics[width=1.0\linewidth]{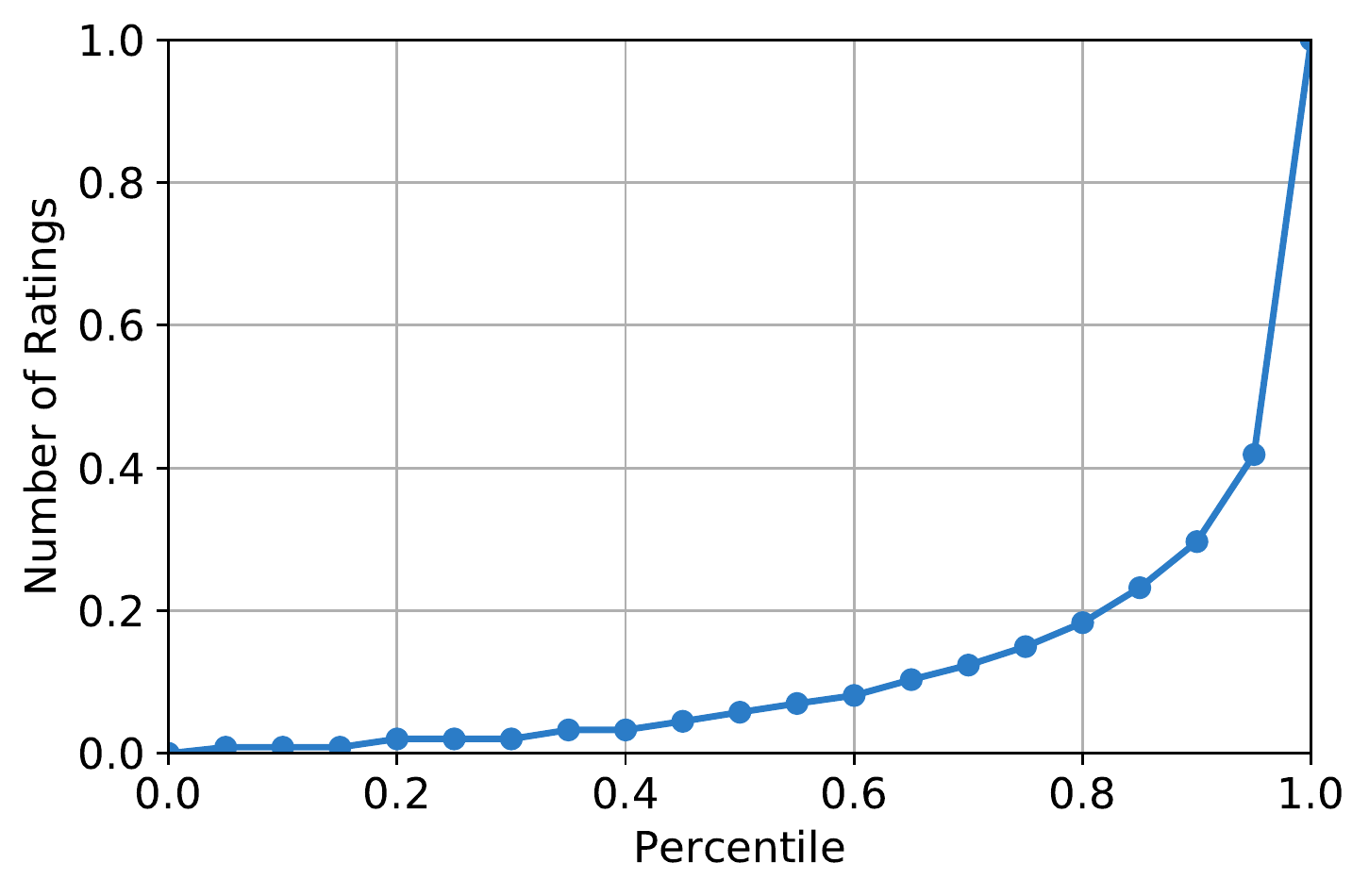}
    \caption{Popularity : Massiveness}
\end{subfigure}
\begin{subfigure}[t]{0.32\linewidth}
    \centering
    \includegraphics[width=1.0\linewidth]{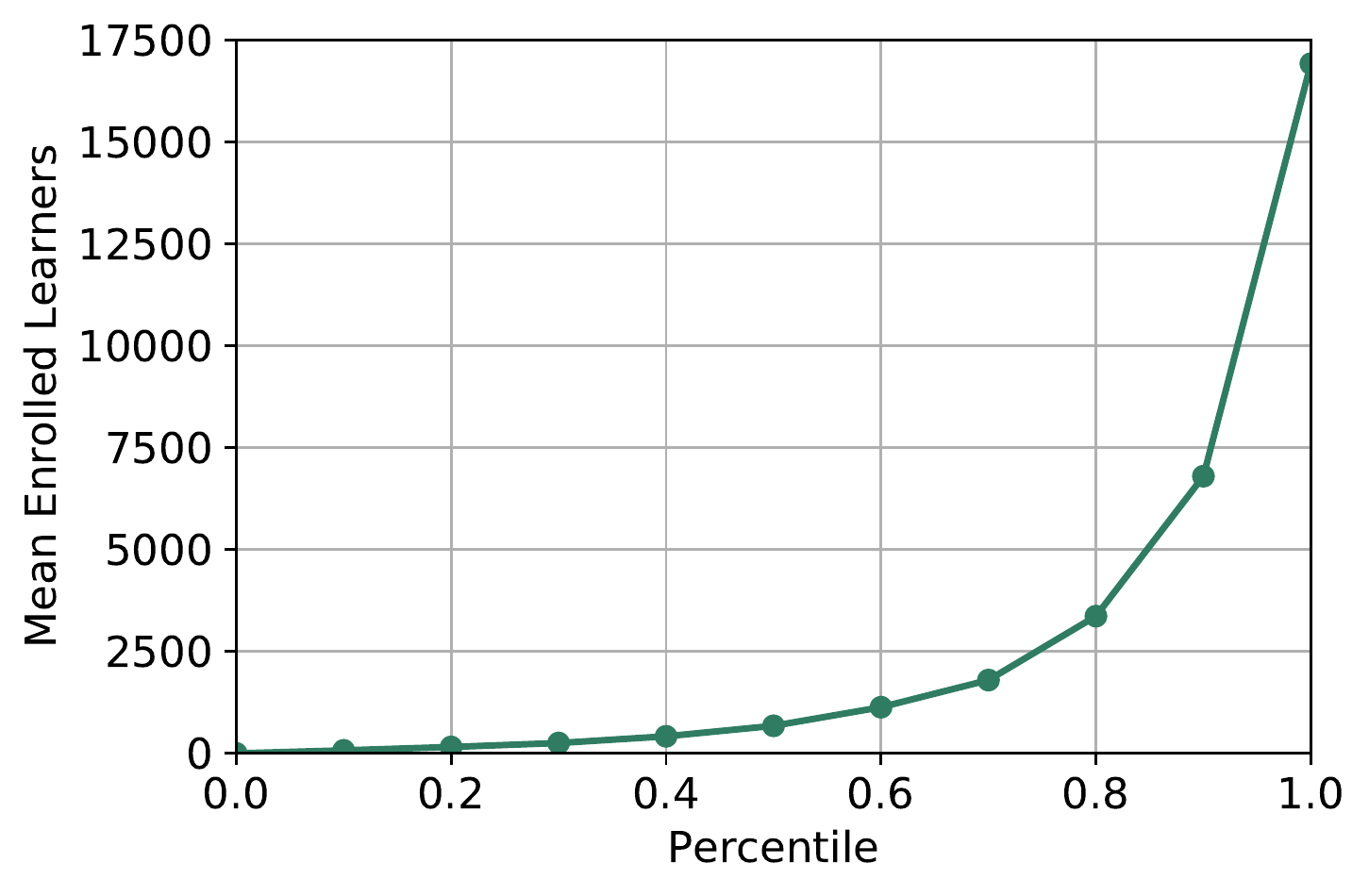}
    \caption{Enrolments : Massiveness}
\end{subfigure}
\begin{subfigure}[t]{0.32\linewidth}
    \centering
    \includegraphics[width=1.0\linewidth]{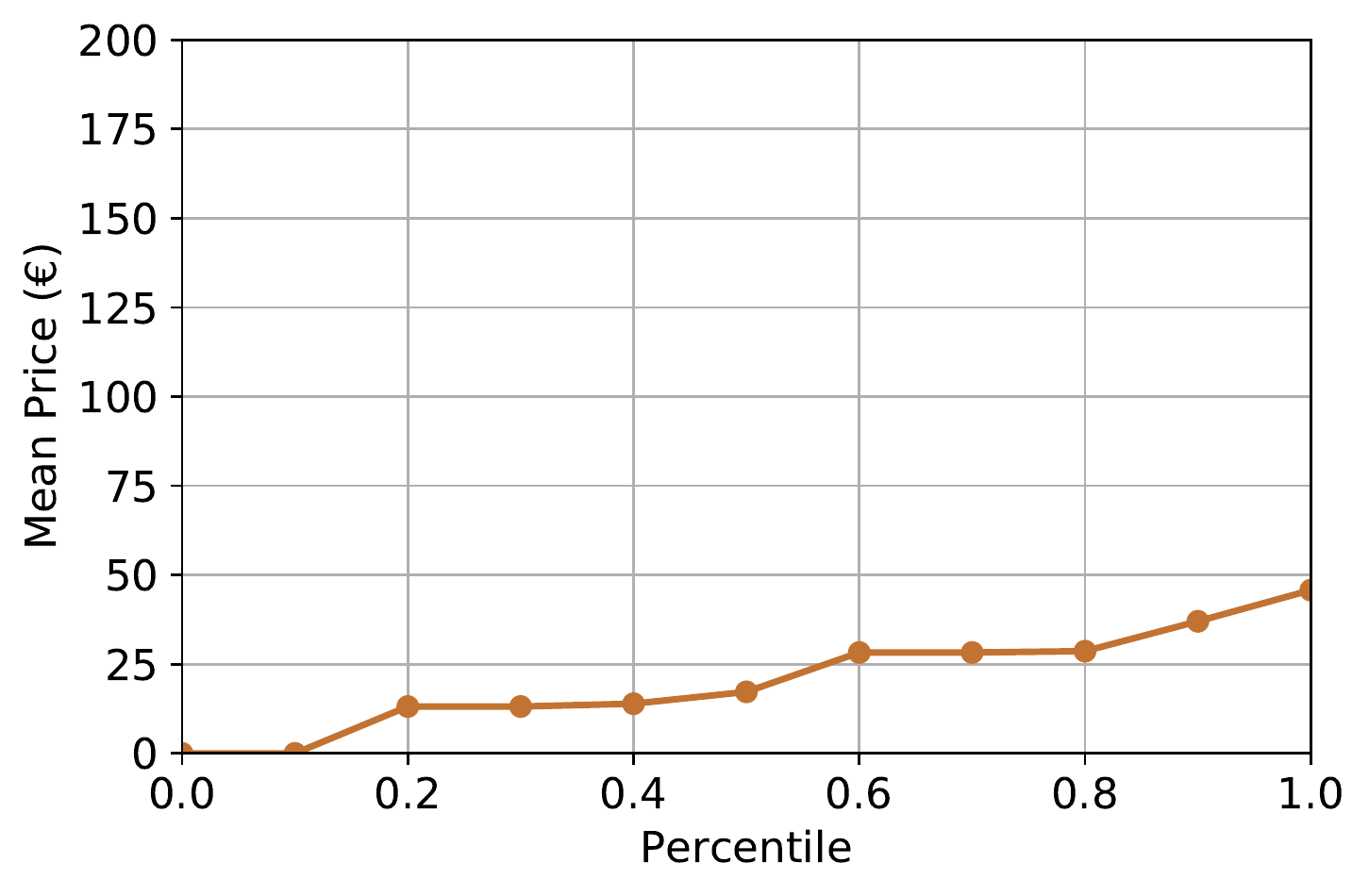}
    \caption{Prices : Affordability}
\end{subfigure}
\caption{\textbf{Data Statistics}. Characteristics of the real-world dataset relevant to the learning opportunity principles proposed by this paper: course popularity, rating values, last update timestamp, thematic category, instructional level, asset types (V:Video; A:Article; E:Ebook; P:Podcast), prices, number of enrolments per course, and average rating per course. Subfigure captions specify the feature and the interested principle as $<Feature>$:$<principle>$.} \label{fig:train-test-data}
\end{figure}

\subsection{Recommendation Algorithms and Protocols} \label{sec:algo}
We considered ten methods and investigated the recommendations they generated. 
Two of them are baseline recommenders, and the other eight are state-of-the-art algorithms, chosen due to their performance, wide adoption, and core applicability in learning contexts~\citep{kulkarni2020recommender}. 
These algorithms are:

\begin{itemize}
\item Non-Personalized: Random and TopPopular.
\item Neighbor-based: UserKNN and ItemKNN \citep{sarwar2001item}.
\item Matrix-Factorized: GMF \citep{he2017neural}, NeuMF \citep{he2017neural}.
\item Graph: P3-Alpha \citep{cooper2014random} and RP3-Beta \citep{paudel2016updatable}.
\item Content: ItemKNN-CB \citep{lops2011content}.
\item Hybrid: CoupledCF \citep{zhang2018coupledcf}.
\end{itemize} 

Based on hyperparameter tuning, UserKNN and ItemKNN relied on cosine metric and 100 neighbors. 
GMF and NeuMF used 10 factors and were trained on 4 negative samples per positive instance. 
This means that, for each observed user-item interaction, we added to the training set four user-item pairs where the selected item has been never observed by that user in the dataset. 
P3-Alpha was executed with 0.8 alpha and 200 neighbors, while RP3-Beta adopted 0.6 alpha, 0.3 beta, and 200 neighbors. 
ItemKNN-CB mapped course descriptions to Term-Frequency Inverse-Document Frequency (TF-IDF) features. 
The TF-IDF features of courses into the user's profile were averaged, and their cosine similarity with the TF-IDF features of other courses is used during ranking. 
CoupledCF embedded user-item associations, the user tendency to interact with each category of courses, and the category of the course in the current user-item pair. 
To be as close as possible to a real scenario, we used a fixed-timestamp split \citep{campos2014time}. 
The basic idea is to choose a single timestamp that represents the moment in which test learners are on the platform waiting for recommendations. 
Their past corresponds to the training set, and the performance is evaluated with data coming from their future. 
In this work, we select the splitting timestamp \emph{2017-06-08}, which maximizes the number of learners involved in the evaluation, by setting two constraints: the training set must keep at least 4 ratings per user, and the test set must contain at least 1 ratings. 
This split led to 43,021 learners, 24,321 courses, and 529,857 interactions (\figurename~\ref{fig:train-test-data}). Normalized Discounted Cumulative Gain (NDCG)\footnote{Discounted cumulative gain (DCG) is a measure of ranking quality. 
The premise of DCG is that highly relevant courses appearing lower in a recommend list should be penalized, given that the graded relevance is logarithmically proportional to the ranking position. 
To let DCG be independent of the ranking length, DCG is normalized by scaling the results based on the best possible value, i.e., ideal DCG. 
The latter is computed by sorting all relevant courses in the test set by their relevance, producing the maximum possible DCG.} is used as effectiveness metric. 
As a measure of relevance for calculating NDCG, the binarized (u, i) tuples formalized in Section \ref{sec:problem-formulation} were used\footnote{It may happen that many more courses would be relevant to a given learner than that learner will have interacted with. 
Some courses that end up high on a recommended list for a given learner but that this learner did not see or have time for it would not be relevant.  Dealing with this well-known problem of missing-not-at-random interactions is an open problem in the recommendation landscape~\citep{nakagawa2008missing}.}. 

\subsection{Real-World Observations} 
We characterize how the proposed principles were met in the lists of courses suggested by the considered algorithms. 
Student-specific targeted weights for each principle would be elicited through user groups, surveys, or implicit preferences observed in the collected data. 
However, due to the absence of this form of feedback in COCO and given that the preference of each learner derived from historical data might have been biased by the recommender system itself, we consider a scenario where the educational platform aims to maximize all the targeted principles, i.e., $p_u = \mathbbm{1}^{|C|}, \, \forall \, u \in U$. 
To this end, we assume to give the same maximum weight to all the principles, i.e., $w_i = \mathbbm{1}^{|C|}, \, \forall \, u \in U$. 
While this assumption comes with some limitations described in Section \ref{sec:limitations}, given that each learner does not always prefer maximum familiarity, for example, such setup allows us to quantify the extent to which each principle is met. 
We leave experiments on learner-specific weights elicited through interviews or surveys as future work. Three research questions drove our analysis:

\begin{enumerate}
\item Does a relation exist between consistency and equality? If so, which one? 
\item Which principles have the largest impact on consistency and equality? 
\item Are consistency and equality influenced by the past interactions of learners? 
\end{enumerate}

\vspace{2mm} \noindent \textit{Equality Analysis}. We provided recommendations to all learners, suggesting to each learner $k=10$ courses; then we measured consistency across the whole learners' population , i.e., how much the principles were met in the recommendations of all learners (Eq. \ref{eq:global-consistency}), and equality, i.e., how much the consistencies were similar across learners (Eq. \ref{eq:equality}). Table \ref{tab:ndcg-cons-equal} reports global performance. A higher value indicates that a recommender better drives consistency or equality, respectively. A first observation coming from Table \ref{tab:ndcg-cons-equal} is the following:

\vspace{2mm} \setlength{\fboxsep}{10pt}
\noindent \textbf{Observation 1}. \textit{Recommenders that embed content metadata ensure higher equality across learners. When the recommender uses only user-item interactions, the equality is reduced. This holds regardless of the algorithm's subfamily.}\\

Though the observation above holds under our setting, the values associated with the equality of the recommender systems and the mean consistency values associated with each principle do not reveal much about how consistency estimates are equal across individual learners. 

To provide a more detailed picture, we plot consistencies across learners for each algorithm, sorted by increasing values (\figurename~\ref{fig:consistency-distr-a}). 
It can be observed that ItemKNN-CB and CoupledCF showed more equal consistencies across learners. 
This result might depend on the fact that, in the presence of principles related to the course content, the content-based and hybrid methods may involuntarily increase those principles and lead to higher consistency. 
In other words, their equality could be biased by the fact that they capitalize on input information that is related to some principles. 

While it may happen that certain principles are being optimized by a traditional recommendation algorithm involuntarily, it is  generally impractical to arrange the internal logic of an algorithm to accommodate all the targeted principles. 
Figure~\ref{fig:consistency-distr-b} plots then the consistency error bars for each algorithm, with mean, standard deviation, minimum, and maximum values. 
It could be observed that there is a link between the magnitude of the mean and of the standard deviation. 
More precisely, the higher the mean consistency guaranteed by the algorithm, the lower the standard deviation across consistency values is (\figurename \ref{fig:consistency-distr-c}). Hence, we can draw a subsequent observation:

\begin{table}[!t]
\resizebox{\textwidth}{!}{%
\begin{tabular}{c|cc|cc|cc|cc|c|c}
\hline
\multirow{2}{*}{} & \multicolumn{2}{c}{\textbf{Non-Personalized}} & \multicolumn{2}{c}{\textbf{Neighborhood}} & \multicolumn{2}{c}{\textbf{Graph}} & \multicolumn{2}{c}{\textbf{Matrix-Factor}} & \textbf{Content} & \textbf{Hybrid} \\ 
 & Random & TopPopular & UserKNN & ItemKNN & P3Alpha & RP3Beta & NeuMF & GMF & ItemKNNCB & CoupledCF \\ 
\hline 
\textbf{NDCG} & 0.000 & 0.035 & \textit{0.072} & 0.021 & 0.001 & 0.000 & 0.008 & 0.010 & 0.042 & 0.013 \\
\textbf{Consistency} & 0.586 & 0.516 & 0.618 & 0.615 & 0.578 & 0.572 & 0.662 & 0.652 & 0.699 & \textit{0.717} \\
\textbf{Equality} & 0.872 & 0.795 & 0.885 & 0.891 & 0.850 & 0.847 & 0.917 & 0.906 & 0.959 & \textit{0.969} \\
\hline
\end{tabular}}
\caption{\textbf{Global Indicators}. Normalized Discounted Cumulative Gain (NDCG), consistency across the whole learners' population , and equality produced by different families of recommenders. Italic values highlight the highest value for each metric across algorithms. The highest NDCG is achieved by UserKNN, while the highest consistency and equality was observed for CoupledCF.}
\label{tab:ndcg-cons-equal}
\end{table}

\begin{figure}[!t]
\begin{subfigure}[t]{0.325\linewidth}
    \centering
    \includegraphics[width=1.0\linewidth]{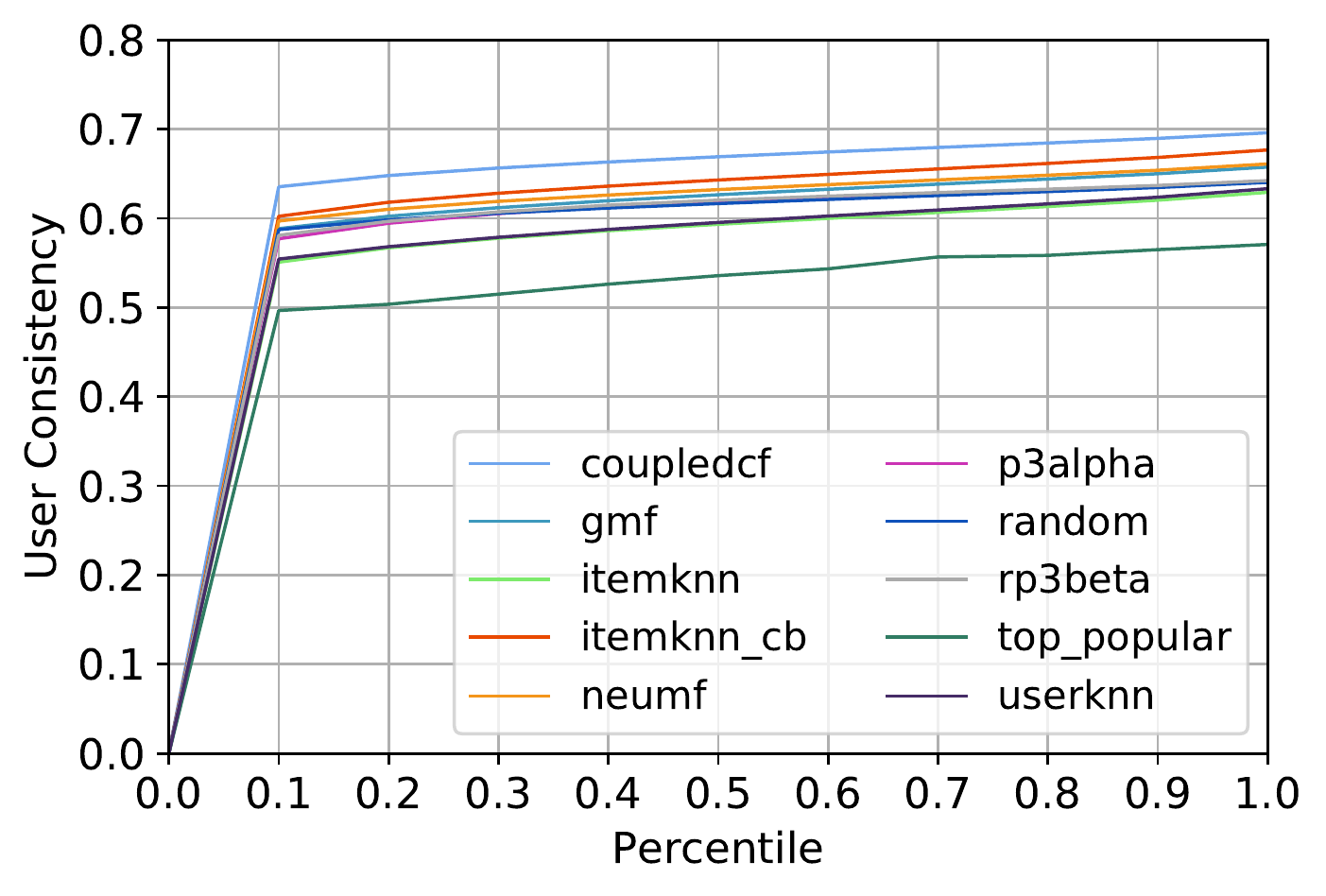}
    \caption{Consistency distribution.}
    \label{fig:consistency-distr-a}
\end{subfigure}
\begin{subfigure}[t]{0.325\linewidth}
    \centering
    \includegraphics[width=1.0\linewidth]{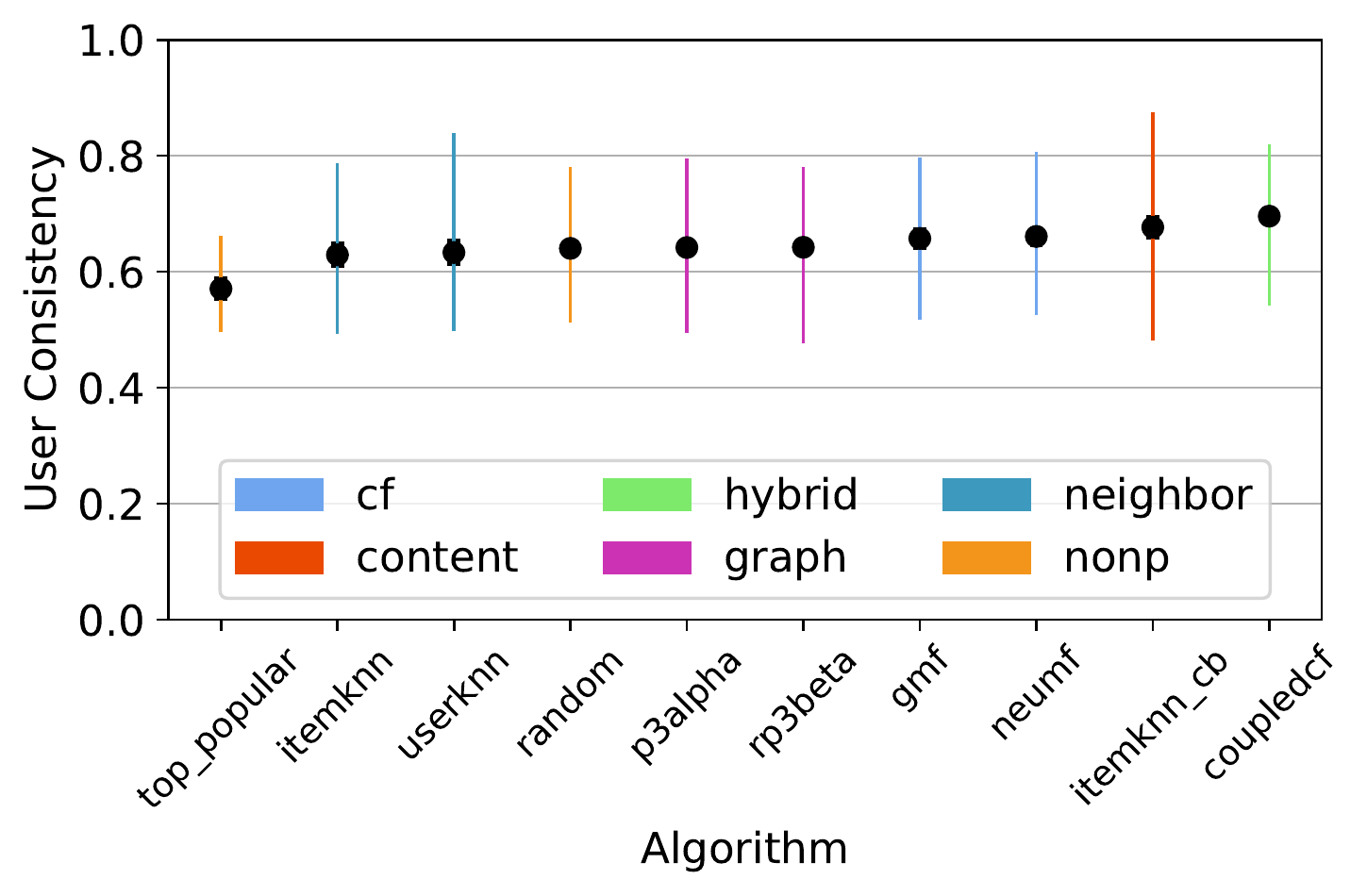}
    \caption{Consistency errors bars.}
    \label{fig:consistency-distr-b}
\end{subfigure}
\begin{subfigure}[t]{0.325\linewidth}
    \centering
    \includegraphics[width=1.0\linewidth]{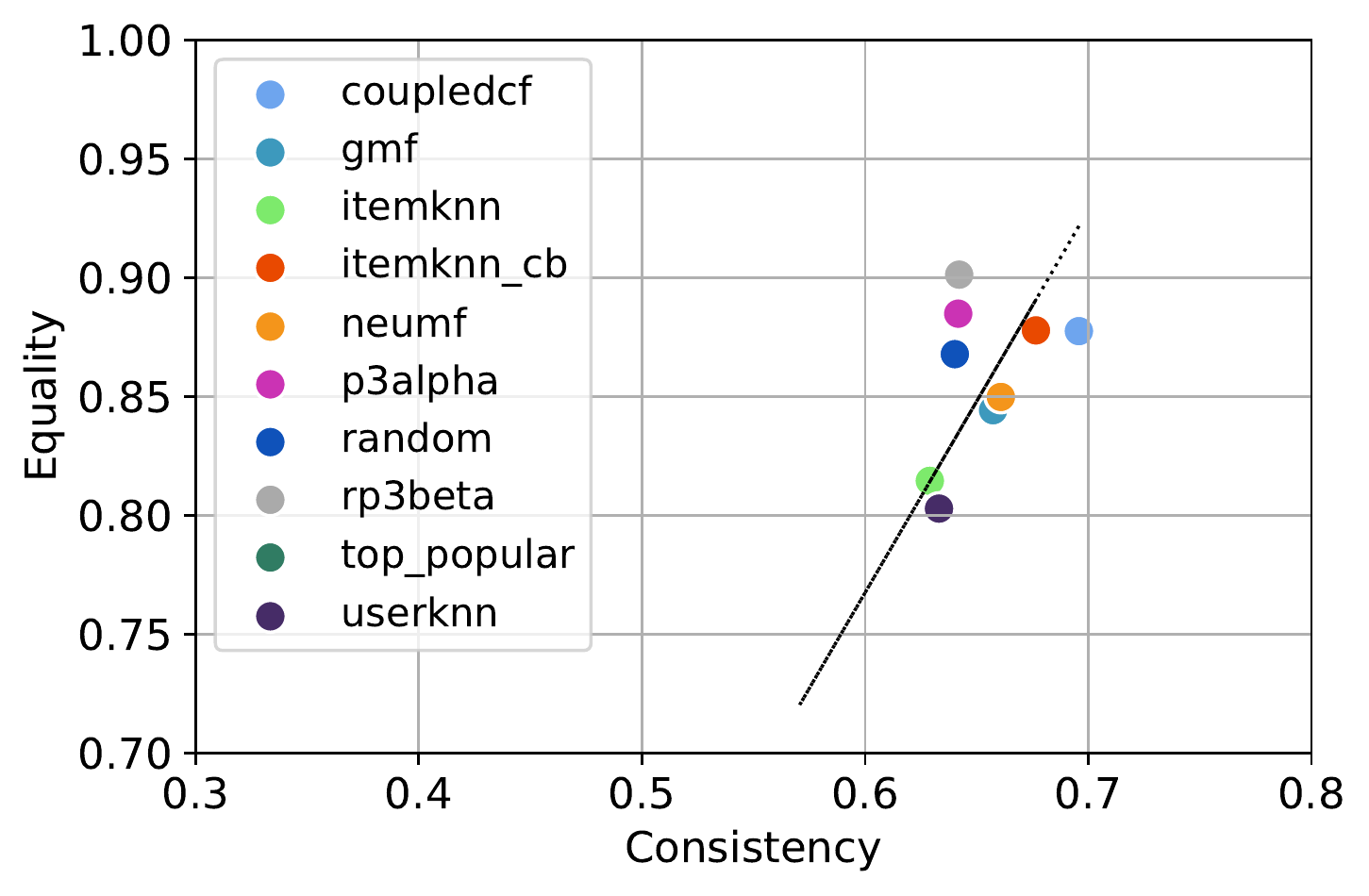}
    \caption{Equality w.r.t. Consistency.}
    \label{fig:consistency-distr-c}
\end{subfigure}
\caption{\textbf{Consistency over the Entire Population}. On the left plot, lines represent the consistency distribution over learners, sorted in increasing order. On the center plot, each error bar includes mean (dot), std deviation (black solid line), and min-max values (colored thick line). The right plot highlights the direct relation between consistency and equality.}
\label{fig:consistency-distr}
\end{figure}

\vspace{2mm} \setlength{\fboxsep}{10pt}
\noindent \textbf{Observation 2}. \textit{Recommenders with high consistency lead to higher equality of recommended learning opportunities. Such property is stronger for neural collaborative, content-based, and hybrid recommenders.}\\

\begin{figure}[!b]
\centering
\includegraphics[width=1.0\linewidth]{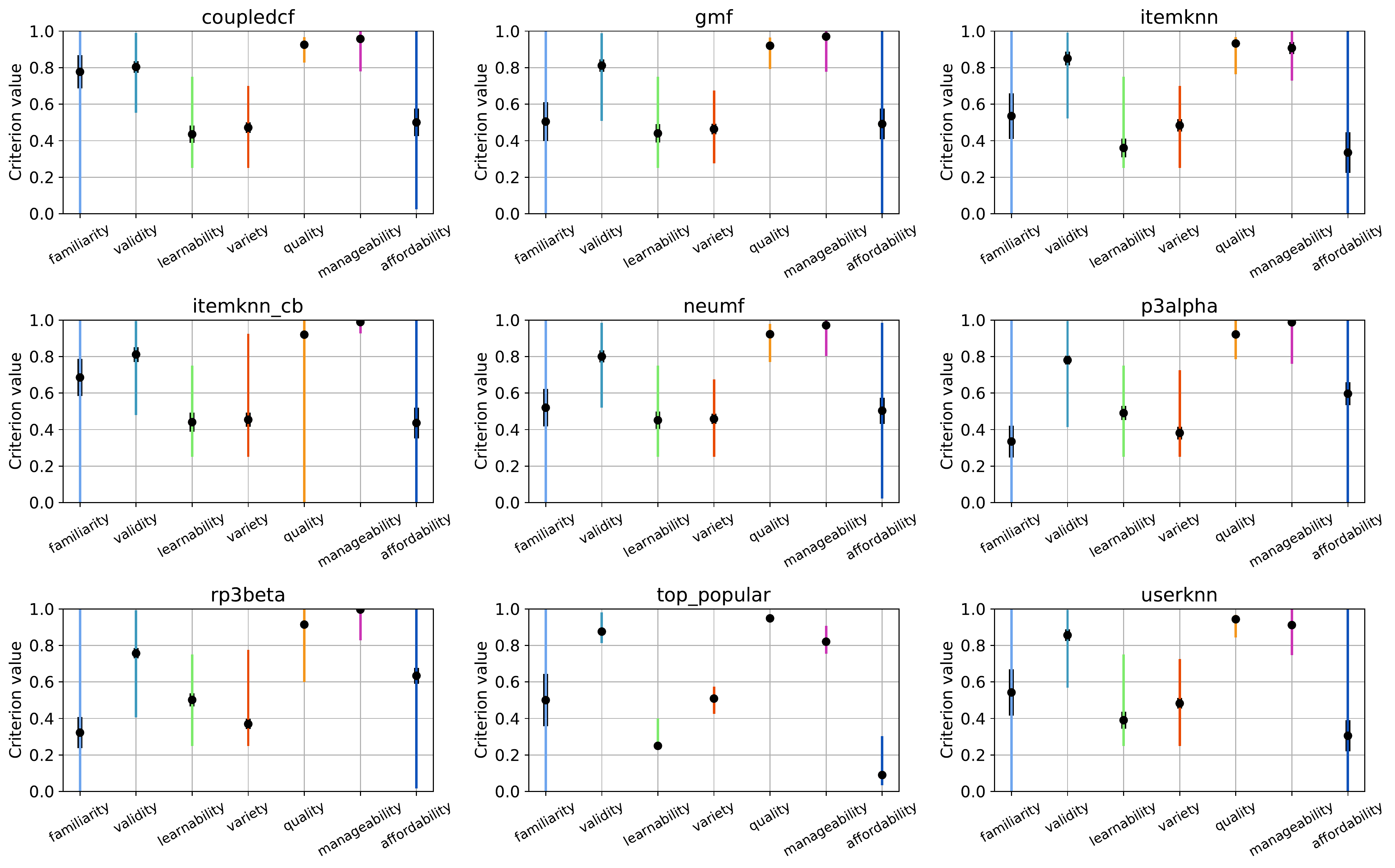}
\caption{\textbf{Algorithm over Principle}. For each recommendation algorithm, the corresponding plot reports an error bar for each principle as measured for that algorithm, including mean (dot), std deviation (solid black line), and min-max values (thick colored line).}
\label{fig:model-by-principles}
\end{figure}

\begin{figure}[!b]
\centering
\includegraphics[width=1.0\linewidth]{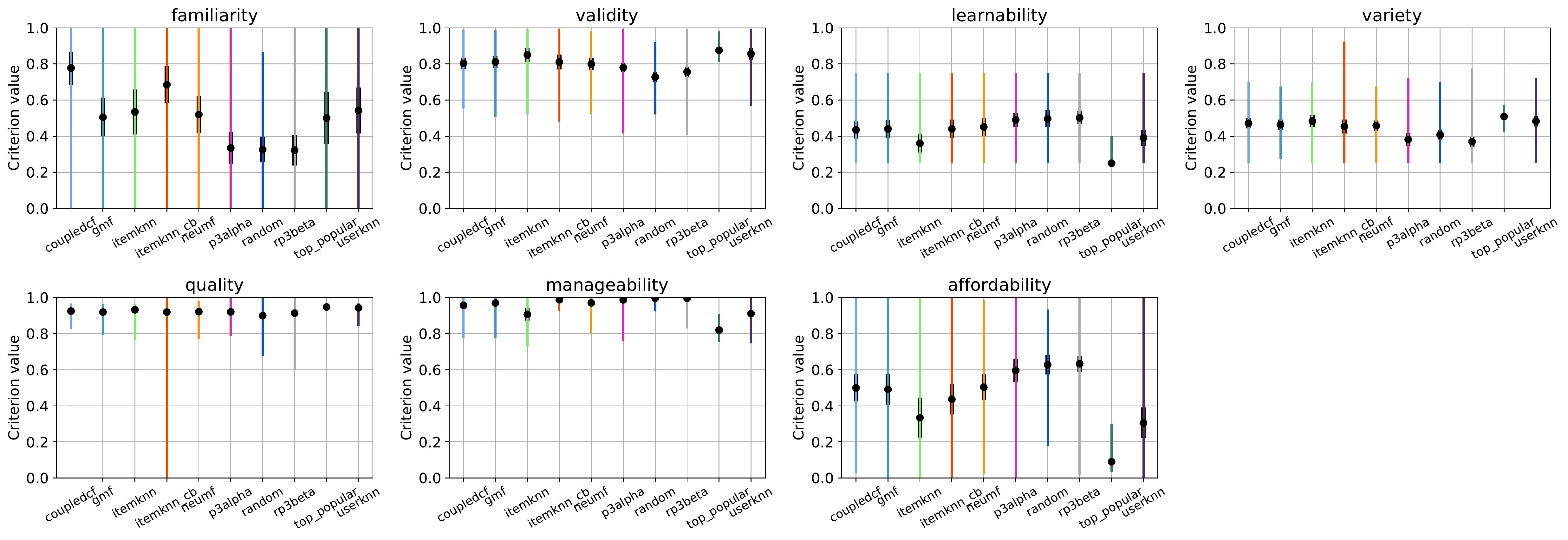}
\caption{\textbf{Principle over Algorithms}. For each principle, the corresponding plot reports an error bar for each algorithm as measured for that algorithm, including mean (dot), std deviation (solid black line), and min-max values (thick colored line).}
\label{fig:principles-by-model}
\end{figure}

Uncovering such a link between a metric that requires knowledge about the whole learner population (i.e., equality) and a metric that can be directly optimized on a single ranked list (i.e., consistency) makes it possible to apply a non-NP-Hard re-ranking procedure to solve our task. This suggests that we should investigate the interplay between (i) the average consistency across principles and (ii) the consistency achieved for each principle individually, when a given learner and algorithm are considered.

\begin{figure}[!b]
\centering
\includegraphics[width=1.0\linewidth]{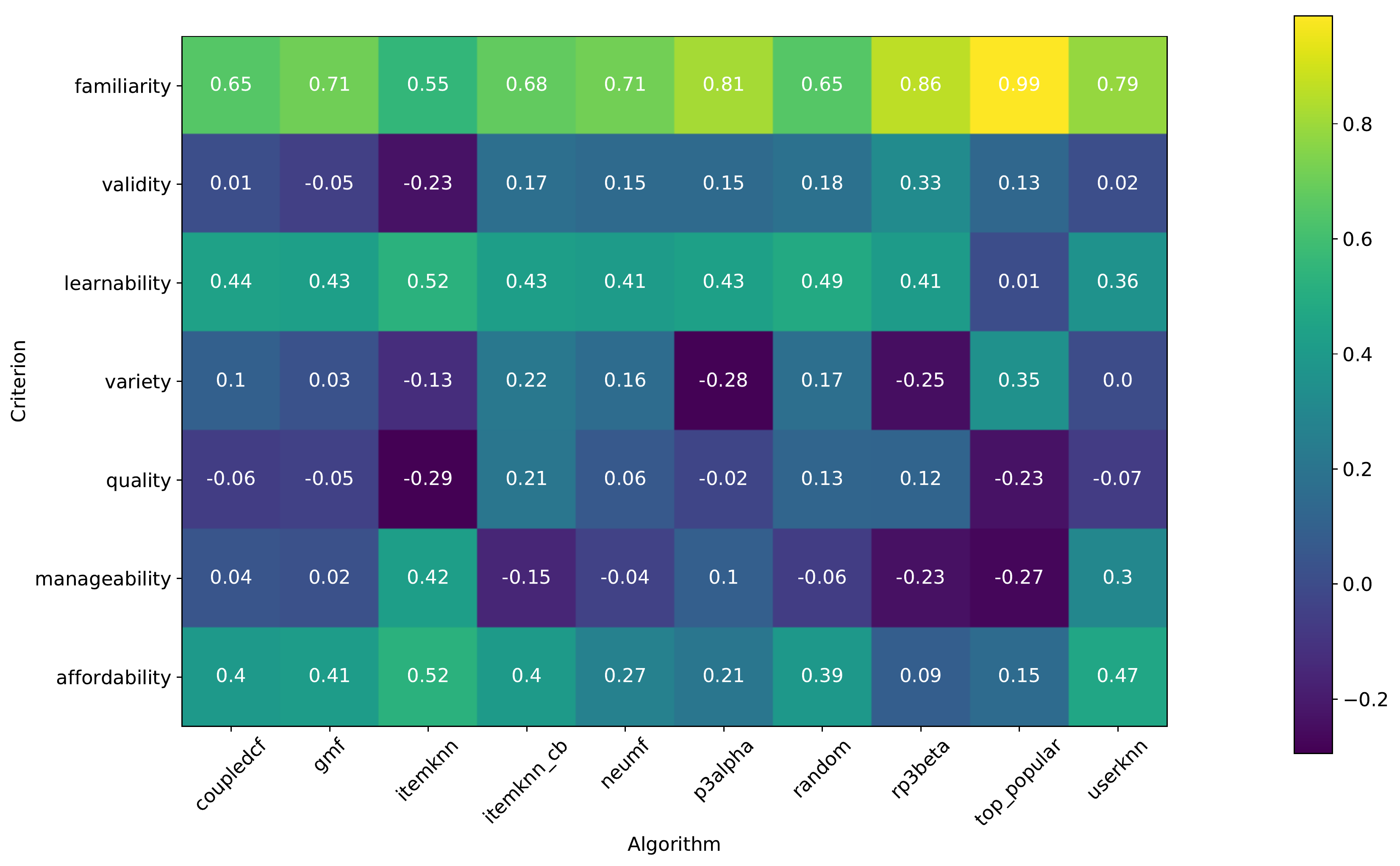}
\caption{\textbf{Principle-Consistency Relation}. Heatmap of correlations between the consistency for a given principle and the consistency across the whole learners' population  computed for each user, over different algorithms. Each value ranges in [-1, +1], and for each principle and algorithm, the Spearman correlation is computed over a distribution of (principle value, user consistency) pairs.}
\label{fig:cri-vs-cons}
\end{figure}

\vspace{2mm} \noindent \textit{Individual Principle Analysis}. 
We further investigated the first two observations by showing the consistency of each recommendation algorithm over every single principle. 
For the sake of readability and conciseness, we do not further consider the Random algorithm over the analysis. \figurename~\ref{fig:model-by-principles} reports the mean, standard deviation, minimum, and maximum values over each principle on that recommender. For instance, the \texttt{coupledcf} plot shows that the \texttt{familiarity} score has a mean of $0.80$, a standard deviation of $\pm 0.05$, and spans the whole range (min: $0.00$; max: $1.00$). The first observations can be made for the top popular algorithm, whose results reveal that popular courses are mostly fresh (high validity) and of high quality. 
However, the consistency on these two principles comes at the price of low familiarity, learnability, variety, and affordability. 
Considering algorithms that capitalize on course metadata (CoupledCF and ItemKNN-CB), similar patterns can be observed across principles, except on variety and quality. 
For the latter principles, embedding user-item interactions in CoupledCF made it possible to reduce the min-max gap. 
Hence, situations where few learners experience very high or very low values can be avoided. 
Other algorithms achieved a more stable consistency.  

To assess whether certain algorithms favor or hurt a given principle, \figurename~\ref{fig:principles-by-model} reports for each principle how it varies over algorithms. It can be observed that familiarity and affordability suffer from high deviations, while more stable values were measured for other principles, over algorithms. We conjecture that the stability observed on quality comes from the highly unbalanced rating value distribution. Indirectly, this effect could come from the fact that learners tend to evaluate courses with high ratings, when they decide to rate them. \figurename~\ref{fig:principles-by-model} also confirmed this intuition. On principles like affordability, manageability, and learnability, the two mentioned algorithms got lower values. 

\vspace{2mm} \setlength{\fboxsep}{10pt}
\noindent \textbf{Observation 3}. \textit{Quality, validity, and manageability are guaranteed to high extent by different recommenders, regardless of the family. Familiarity, affordability, learnability, and variety experience low absolute values and substantial deviations over algorithms, regardless of the algorithm's subfamily.}\\

To further confirm the role of each principle over principles' consistency, we looked at the correlation between the consistency achieved for a given principle and the consistency achieved by including all the principles. 
In \figurename~\ref{fig:cri-vs-cons}, we report the results for each principle and algorithm pair. 
Observed values higher than 0 are expected when the consistency at principle level is directly related to the high consistency achieved when all the principles are considered. 
Hence, the overall consistency is more likely to be met when that specific principle is met. 
Conversely, values lower than 0 result in the opposite behavior. 
No relation is found when the reported value is close to 0. 
This allows us to make another observation: 

\vspace{2mm} \setlength{\fboxsep}{10pt}
\noindent \textbf{Observation 4}. \textit{Familiarity, learnability, and affordability are the most influencing principles on the overall consistency across principles. This effect is stronger for content-based and hybrid recommenders.}\\        

\begin{figure}[!b]
\centering
\includegraphics[width=0.5\linewidth]{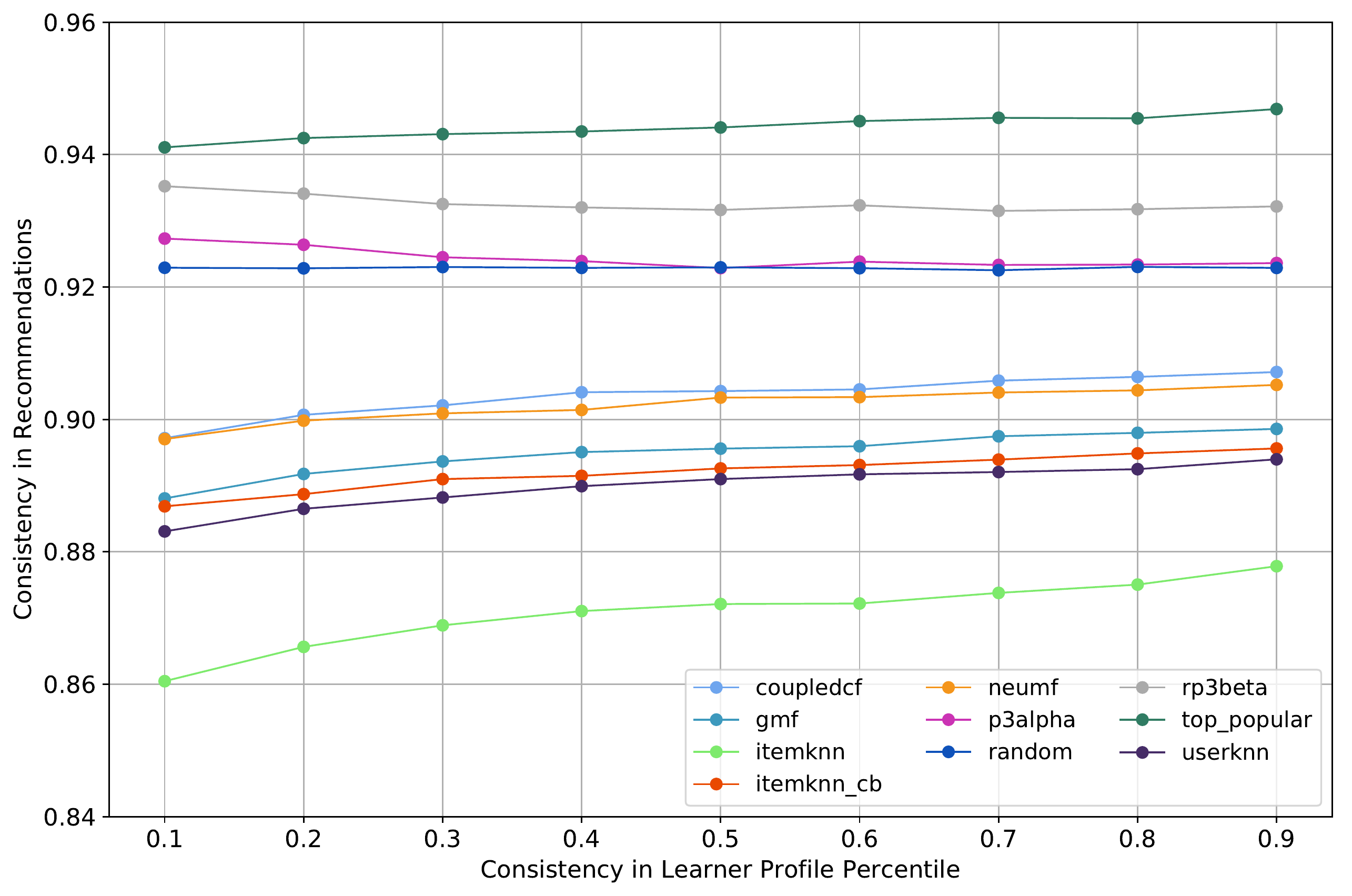}
\caption{\textbf{Consistency in Profile and Recommendation}. The lines show the difference in recommendation consistency over random pairs of learners, with values sorted by increasing difference in consistency in profiles.}
\label{fig:learner-fair}
\end{figure}

\begin{figure}[!b]
\centering
\includegraphics[width=1.0\linewidth]{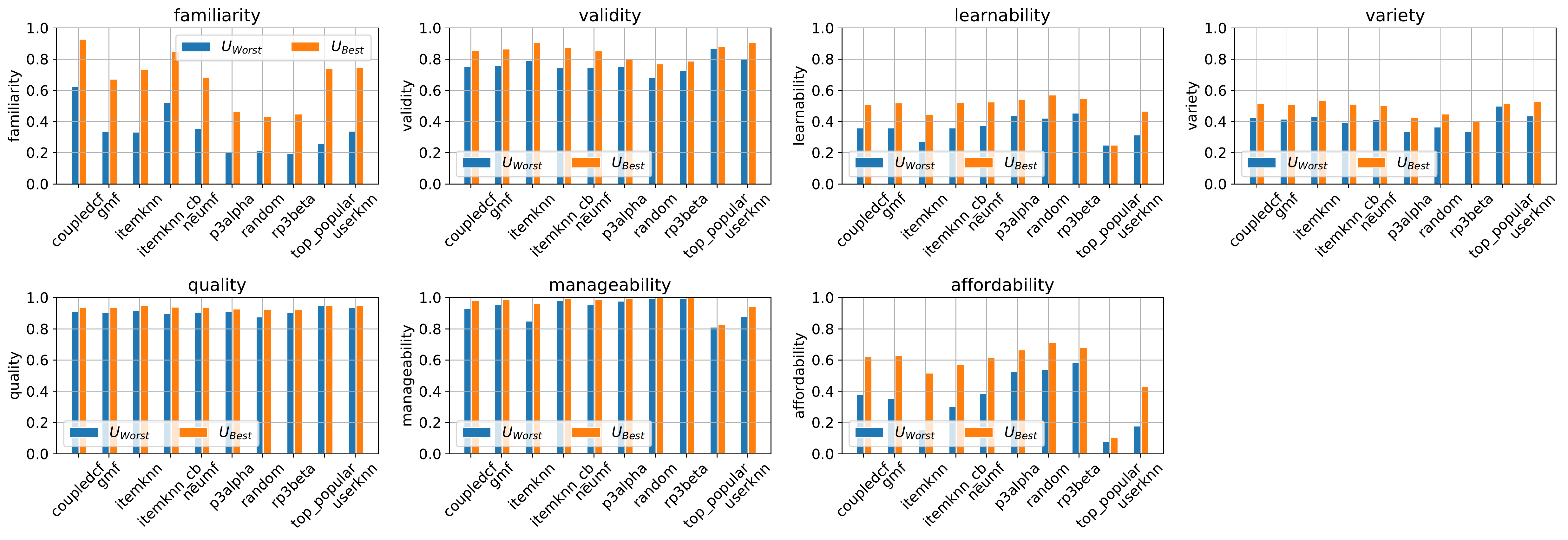}
\caption{\textbf{Learners with More (Less) Consistency}. For each principle, the corresponding plot reports the mean consistency achieved in recommendations by learners with high consistency in their profile (orange) and low consistency in their profile (blue).}
\label{fig:learner-best-worst}
\end{figure}

Most of the observations seen so far are based on the fact that the observed consistency values are averaged over learners. 
However, it is interesting to ask whether, for two learners with similar past interactions concerning the considered principles, we should expect a similar consistency. 
In other words, it is interesting to ask whether similar learners get similar consistency. 

\vspace{2mm} \noindent \textit{Past Interaction Analysis}. 
We next take a traditional individual fairness standpoint, i.e., the principle according to which similar individuals should receive a similar treatment. 
In our setting, for learners, we assume that being similar means having similar consistency in their past interactions. 
Therefore, we computed the consistency metric defined in Eq. \ref{eq:local-consistency} by substituting the vector $q_{\Tilde{I}_u}$ with the vector $q_{I_u}$ so that we can quantify how much the targeted principles were met in the set of past interactions of each learner.    

To this end, for all the possible pairs of learners, $u_1$ and $u_2$, we computed the difference of consistency in their profile and their recommendations. 
Figure~\ref{fig:learner-fair} depicts pairs of results by increasing the difference of consistency in their profiles. 
It can be observed that, except for the graph-based P3Alpha and RP3Alpha, a higher similarity of consistency between the profiles results in a higher similarity of consistency over the recommendations. 
Figure~\ref{fig:learner-best-worst} also shows, for each principle and algorithm, the best and worst consistency across learners, according to the above definition. 
It is confirmed that familiarity, learnability, variety, and affordability play a key role in the overall consistency. 

\vspace{2mm} \setlength{\fboxsep}{10pt}
\noindent \textbf{Observation 5}. \textit{Learners who interacted with courses aligned with the principles are likely to receive recommendations that meet those principles. Similar learners in terms of consistency in the courses they took are likely to receive a similar treatment in terms of future consistency.}\\

With the observations made so far, we conjecture that re-ranking each list of recommendations to maximize the considered principles will lead to  higher consistency, and, consequently, to higher equality. 

\section{Optimizing for Equality of Learning Opportunities} \label{sec:framework}
In this section, we describe, evaluate, and discuss the approach proposed in this paper to favor consistency of principles in recommendations (\figurename~\ref{fig:approach}).  

\begin{figure}[!b]
\minipage{1\linewidth}
    \centering
    \includegraphics[width=1.0\linewidth]{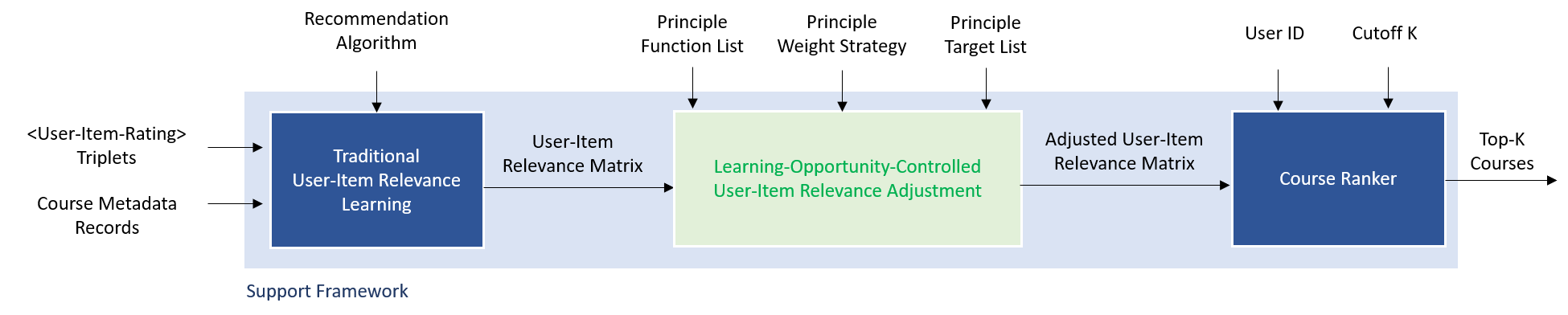}
\endminipage\hfill
\caption{\textbf{Support Framework}. First, given interactions and metadata, a recommendation algorithm computes a user-item relevance matrix. 
Then, given a user-item relevance matrix, a list of principle functions, a principle weight strategy, and a list of principle targets, our approach returns a user-item relevance matrix that meets the input principles. Finally, a ranking step, given the adjusted user-item relevance, outputs the recommended list.}
\label{fig:approach}
\end{figure} 

\subsection{The Proposed Post-Processing Approach}\label{sec:eq-control}
To meet the principles pursued by the platform for each learner and optimizing for equality of opportunities across learners, we introduce a recommendation procedure that seeks to maximize the consistency formalized in Eq. \ref{eq:local-consistency}. 
Since, in general, it is hard to plug-in the balancing phase into the internal logic of a recommender system, we propose to re-arrange the recommended lists returned by a generic recommender system, a common practice in recommendation known as re-ranking~\citep{potey2017personalization}. 
Specifically, for each learner $u\in U$, our goal is to determine an optimal set $\mathcal I^\ast$ of $k$ courses to be recommended to $u$, so that the principles pursued by the platform are met while preserving accuracy. 
To this end, we capitalize on a \emph{maximum marginal relevance}~\citep{carbonell1998use} approach, with Eq. \ref{eq:local-consistency} as the support metric. 
The set $\mathcal I^\ast$ is obtained by solving the following optimization problem: 

\begin{equation}\label{eq:opt_prob}
    \mathcal I^\ast(u|k,w) = \mathop{\text{argmax}}_{\mathcal I\subset I,|\mathcal I|=k}\,(1-\lambda) \sum_{i\in \mathcal I}\widetilde R_{ui}+\lambda\,Consistency(p_u,{q}_{\mathcal I}| w),
\end{equation}
\vspace{1mm}

\noindent where ${q}_{\mathcal I}$ is $q$ when the top-$k$ list includes items $\mathcal I$, and $\lambda\in[0,1]$ is a parameter that expresses the trade-off between accuracy and learning opportunity consistency. With $\lambda=0$, we yield the output of the recommender, not taking consistency optimization into account. Conversely, with $\lambda =1$, the output of the recommender is discarded, and we focus only on maximizing consistency. 

This greedy approach yields an ordered list of resources, and the resulting list at each step is $(1 - 1/e)$ optimal among the lists of equal size. 
The proof of the optimality of the proposed approach is provided in Appendix \ref{sec:proof-opt}. 
This property fits with the real world, where learners may initially see only the first $k$ recommendations, and the remaining items may become visible after scrolling. 
Our approach also allows controlling more than one learning opportunity principle in the ranked lists, with no constraints on the size of $C$. 

\subsection{Evaluation Scenario and Experimental Results}  \label{sec:supporting-scenario-results}
In this section, we assess the impact of controlling consistency and equality of learning opportunities across learners after applying our procedure to pursue the platform's principles (i.e., maximizing all the principle indicators). 
It is important to note that we considered the same setup described for the exploratory analysis, including the same datasets (Section \ref{sec:data}), protocols (Section \ref{sec:algo}), and metrics (Section \ref{sec:problem-formulation}), to answer four key research questions: 

\begin{itemize}[leftmargin=*]
\item RQ1. Which weight setup achieved the best accuracy-equality trade-off?
\item RQ2. Which principles have experienced the largest gain in consistency?
\item RQ3. Which is the influence of the original relevance score distribution?
\item RQ4. How do the recommended lists differ, before and after our approach? 
\end{itemize}

\vspace{2mm} \noindent \textbf{Influence of Weight Setup}. 
We run experiments to assess 
($i$) the influence of our procedure and the weight-based strategy on accuracy, consistency, and equality, and 
($ii$) the relation between a loss in accuracy and a gain in consistency and equality while applying our procedure. 
To this end, we envisioned three approaches of principle weight assignment, while applying our procedure:   
\begin{itemize}[noitemsep,topsep=0pt,leftmargin=*]
\item \texttt{Glob} assigns the same weight to all the principles, for all users. This method would not take into account the level of consistency the recommended list to a given user already achieved and will treat all the principles equally. 
\item \texttt{User} assigns, to a principle, a weight proportional to the consistency gap for that principle concerning the target of the platform, computed during the exploratory analysis. The consistency gap for a principle has been obtained by averaging the individual consistency gaps across users. 
\item \texttt{Pers}, given a user, assigns the weight for a principle by considering only their (individual) consistency gap for that principle. Thus, different weights are used along with the user population. 
\end{itemize}

\noindent For each model, we run an instance of our re-ranking procedure for each weight assignment strategy, assigning to $\lambda$ a value in $[0.0, 0.25, 0.50, 0.75, 0.99]$. 

\begin{figure}[!t]
\centering
\begin{subfigure}[t]{0.32\linewidth}
    \centering
    \includegraphics[width=1.0\linewidth]{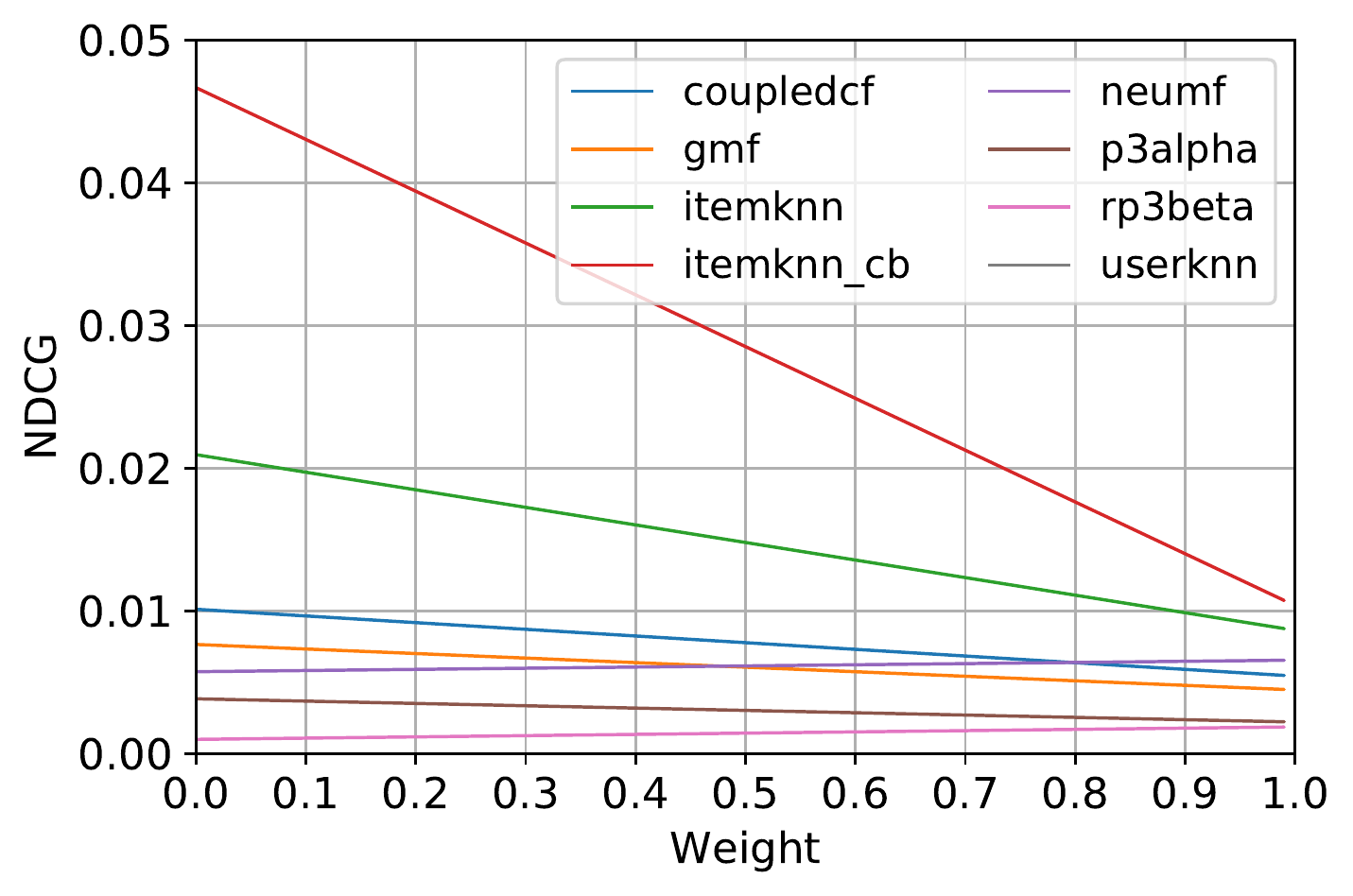}
    \caption{NDCG on Glob.}
\end{subfigure}
\begin{subfigure}[t]{0.32\linewidth}
    \centering
    \includegraphics[width=1.0\linewidth]{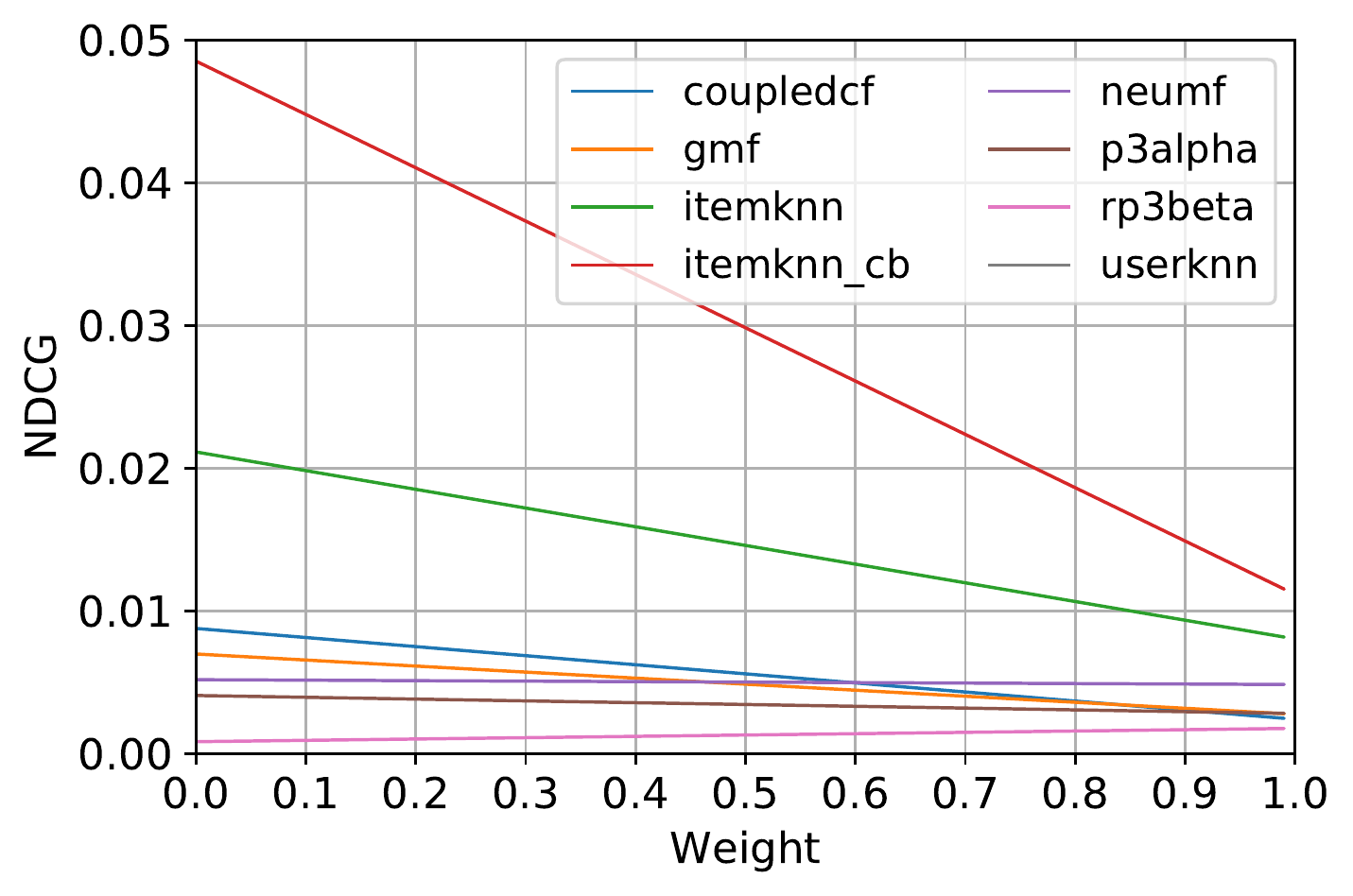}
    \caption{NDCG on User.}
\end{subfigure}
\begin{subfigure}[t]{0.32\linewidth}
    \centering
    \includegraphics[width=1.0\linewidth]{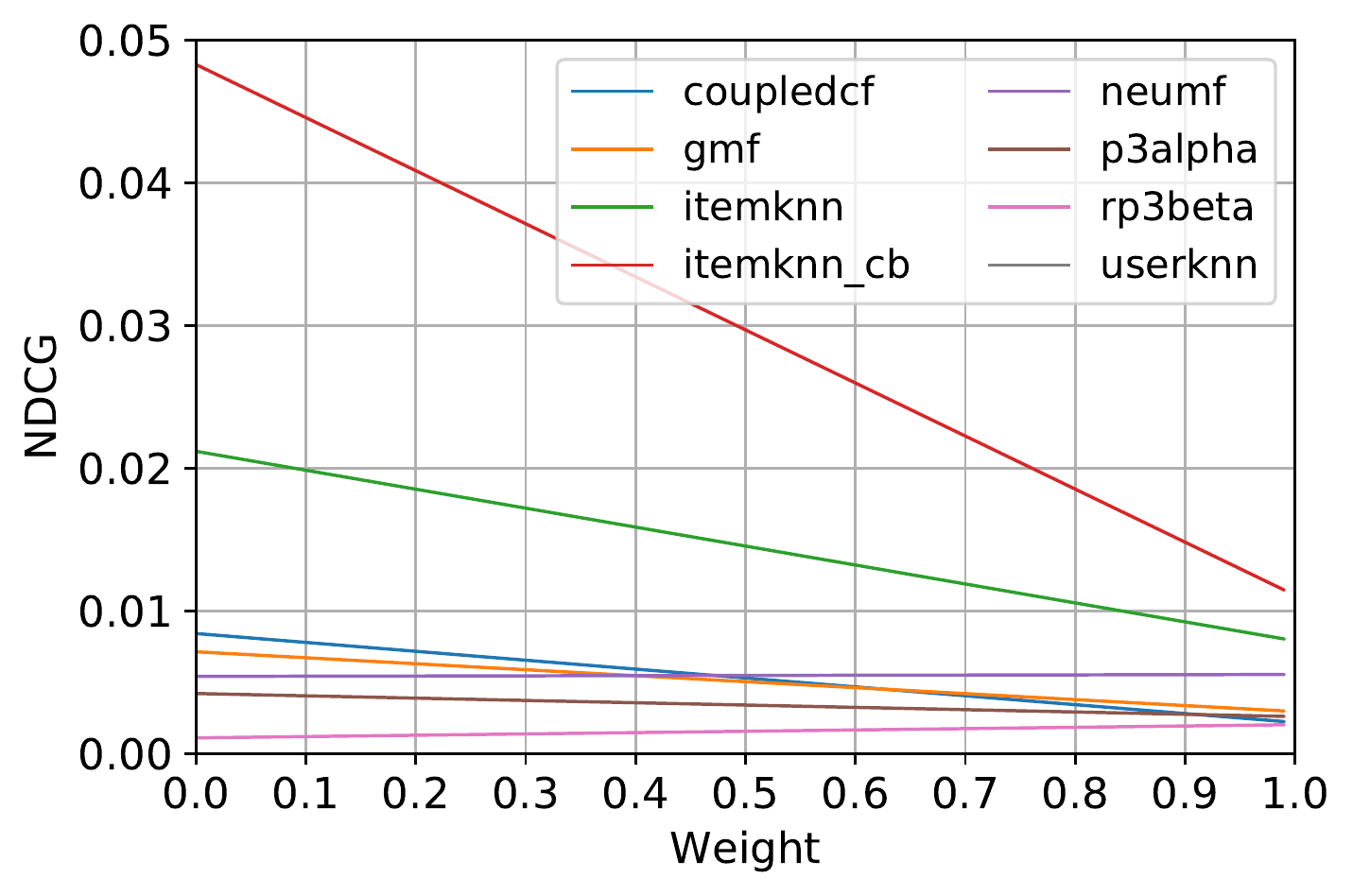}
    \caption{NDCG on Pers.}
\end{subfigure}
\begin{subfigure}[t]{0.32\linewidth}
    \centering
    \includegraphics[width=1.0\linewidth]{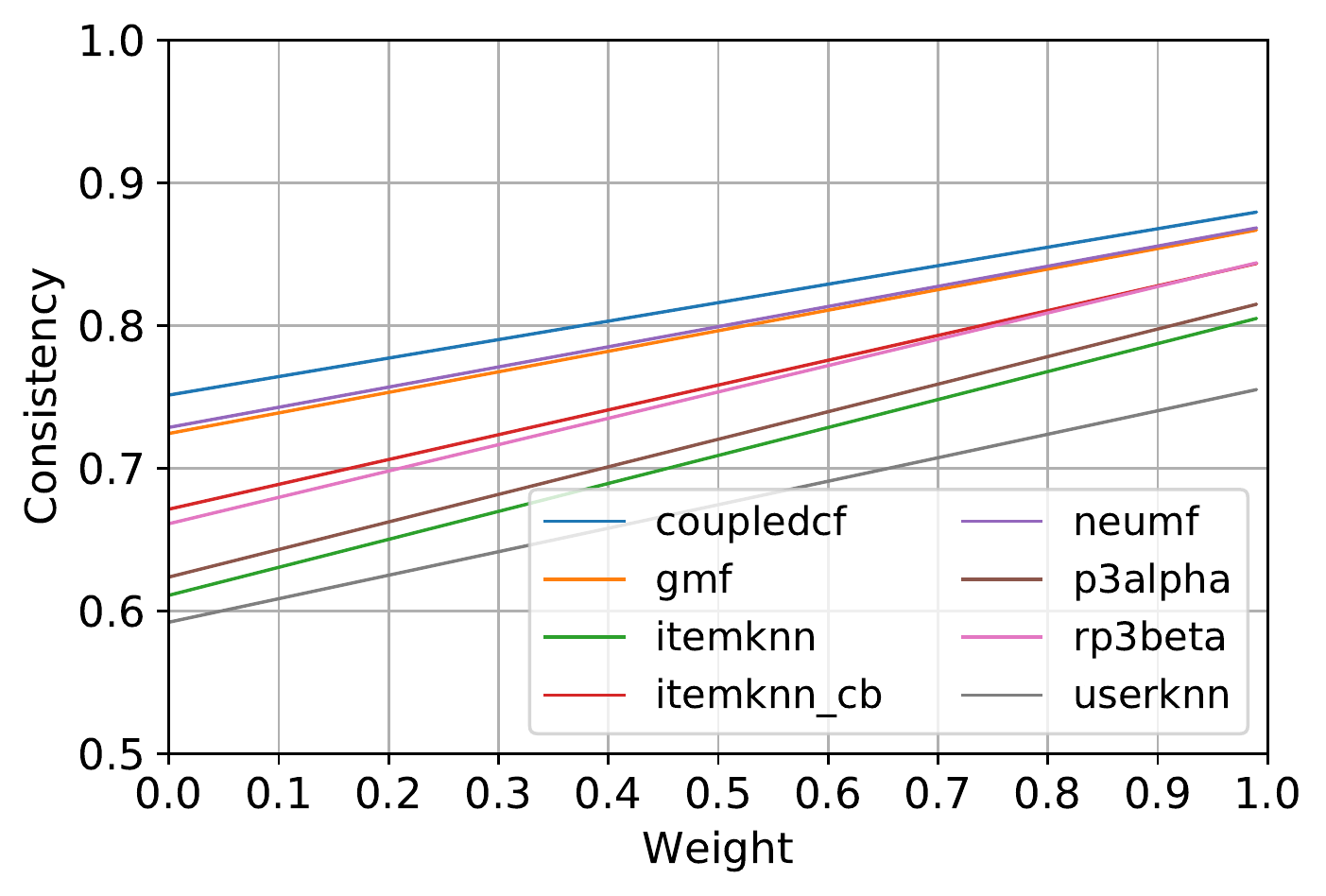}
    \caption{Consistency on Glob.}
\end{subfigure}
\begin{subfigure}[t]{0.32\linewidth}
    \centering
    \includegraphics[width=1.0\linewidth]{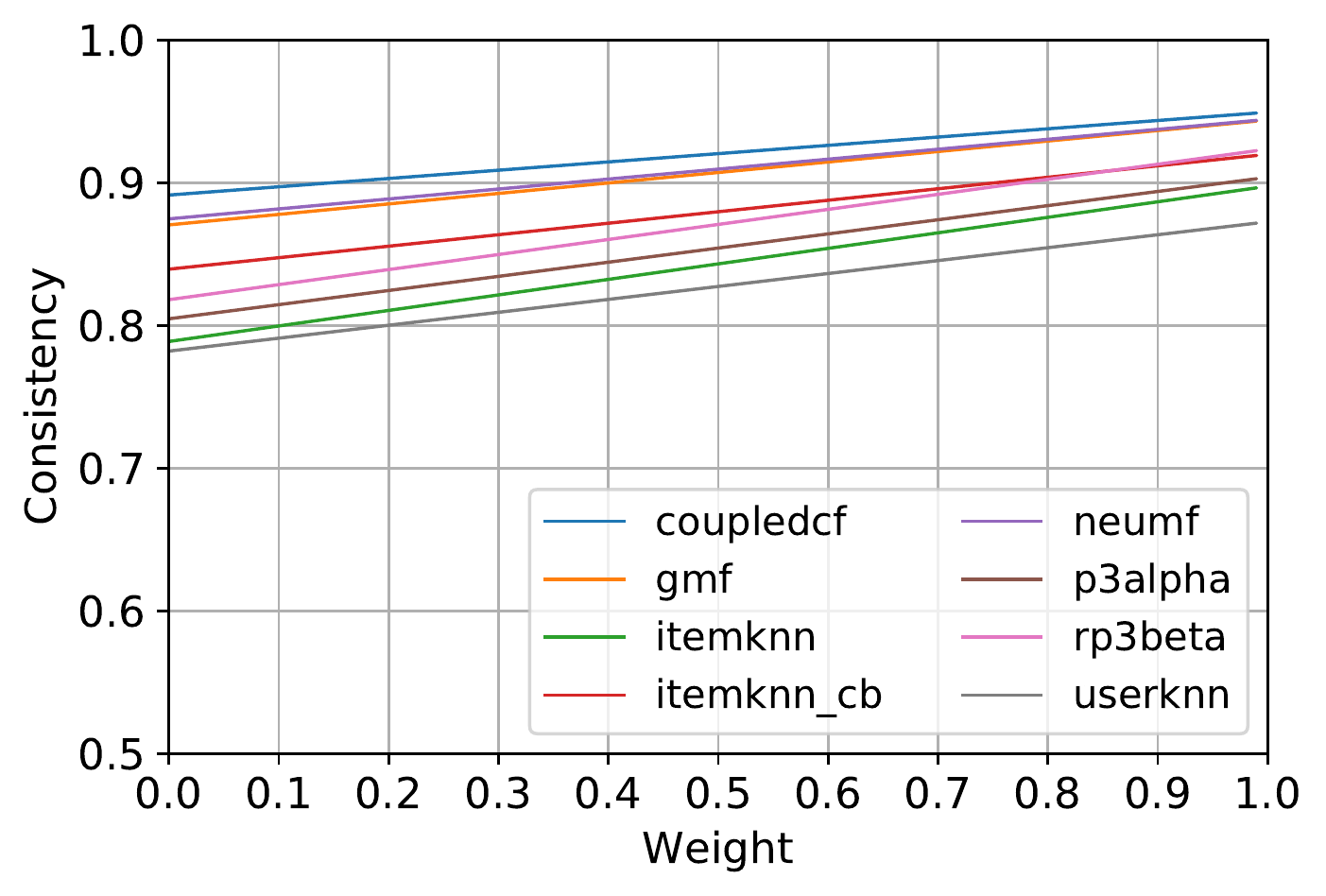}
    \caption{Consistency on User.}
\end{subfigure}
\begin{subfigure}[t]{0.32\linewidth}
    \centering
    \includegraphics[width=1.0\linewidth]{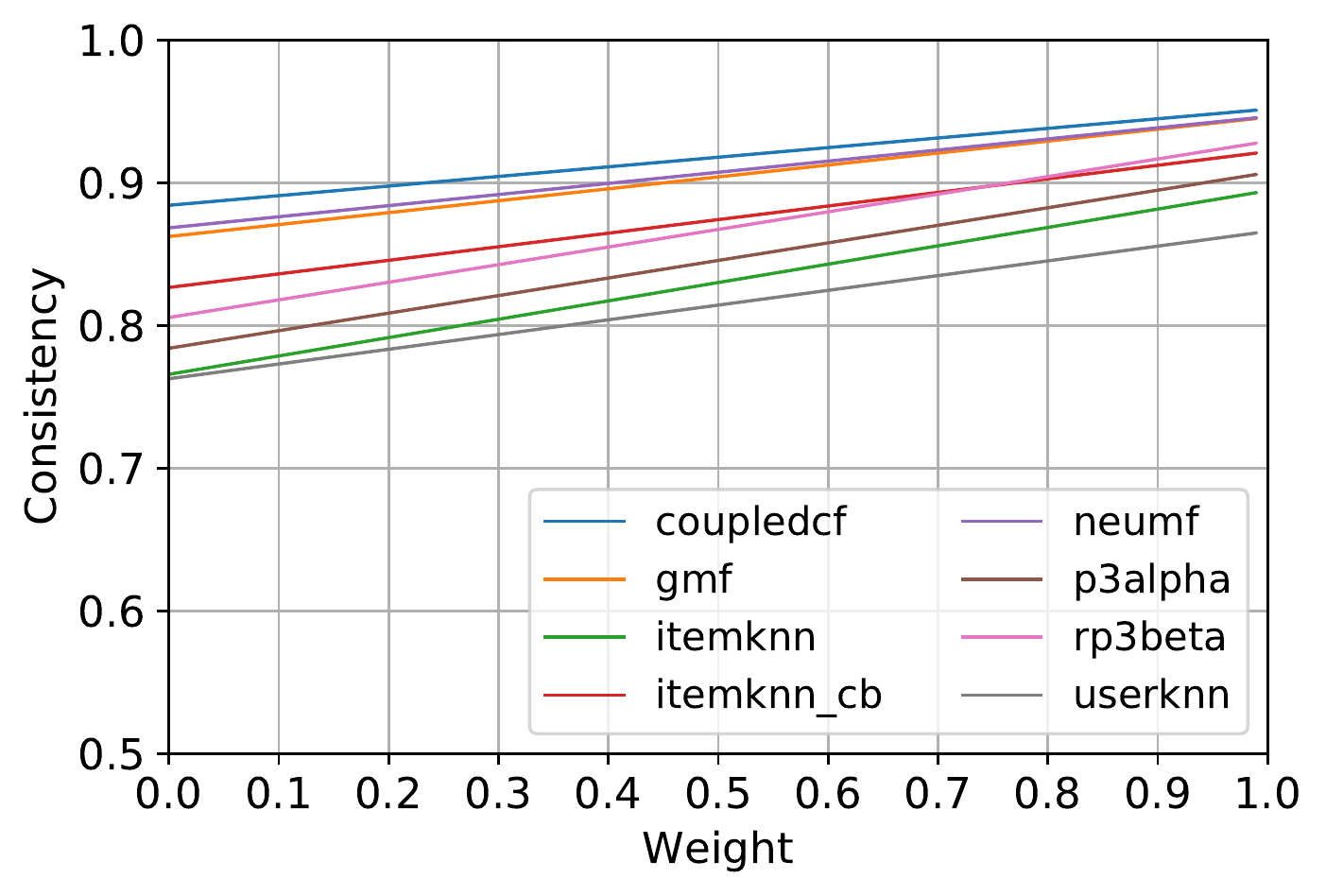}
    \caption{Consistency on Pers.}
\end{subfigure}
\begin{subfigure}[t]{0.32\linewidth}
    \centering
    \includegraphics[width=1.0\linewidth]{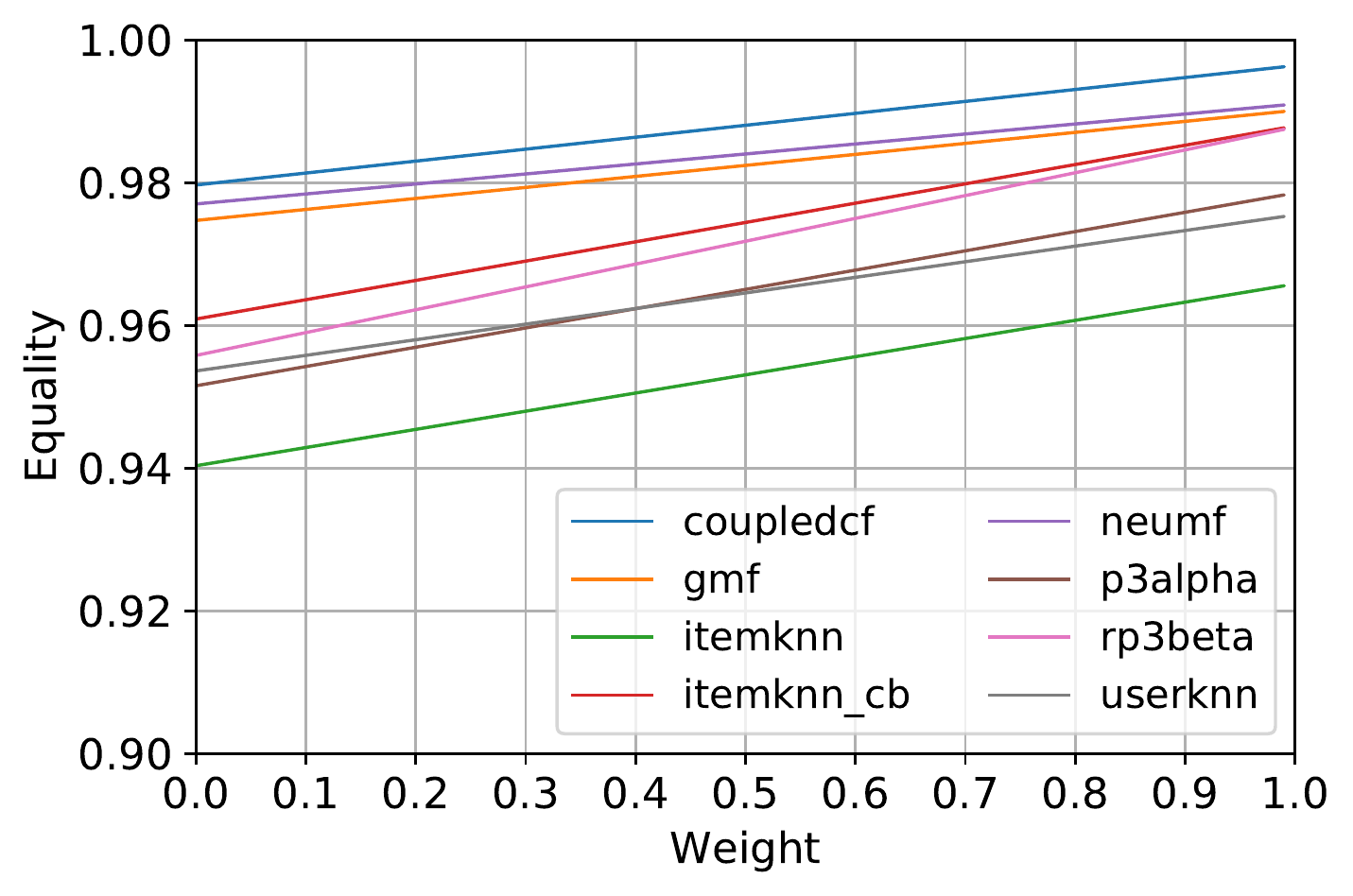}
    \caption{Equality on Glob.}
\end{subfigure}
\begin{subfigure}[t]{0.32\linewidth}
    \centering
    \includegraphics[width=1.0\linewidth]{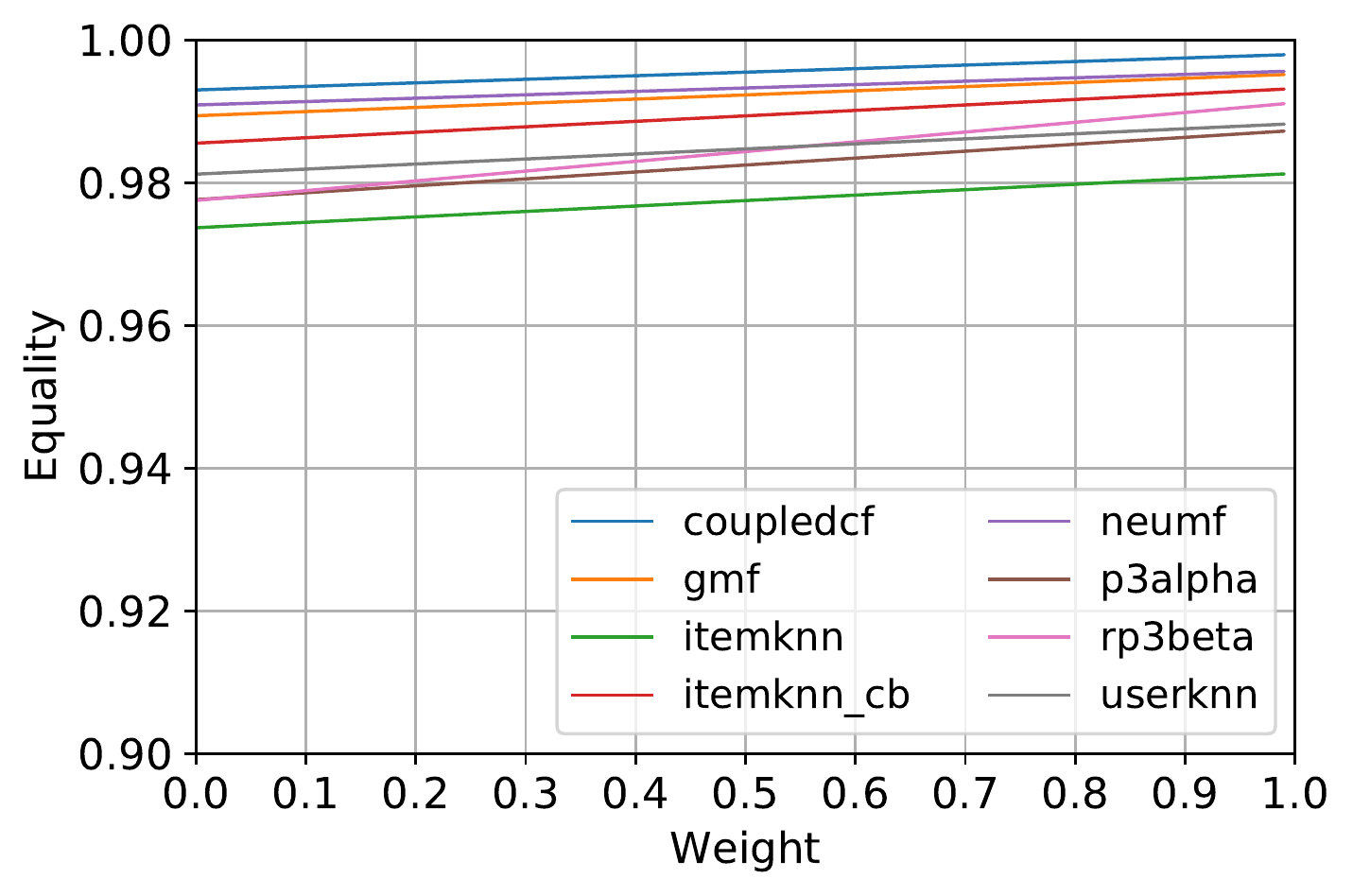}
    \caption{Equality on User.}
\end{subfigure}
\begin{subfigure}[t]{0.32\linewidth}
    \centering
    \includegraphics[width=1.0\linewidth]{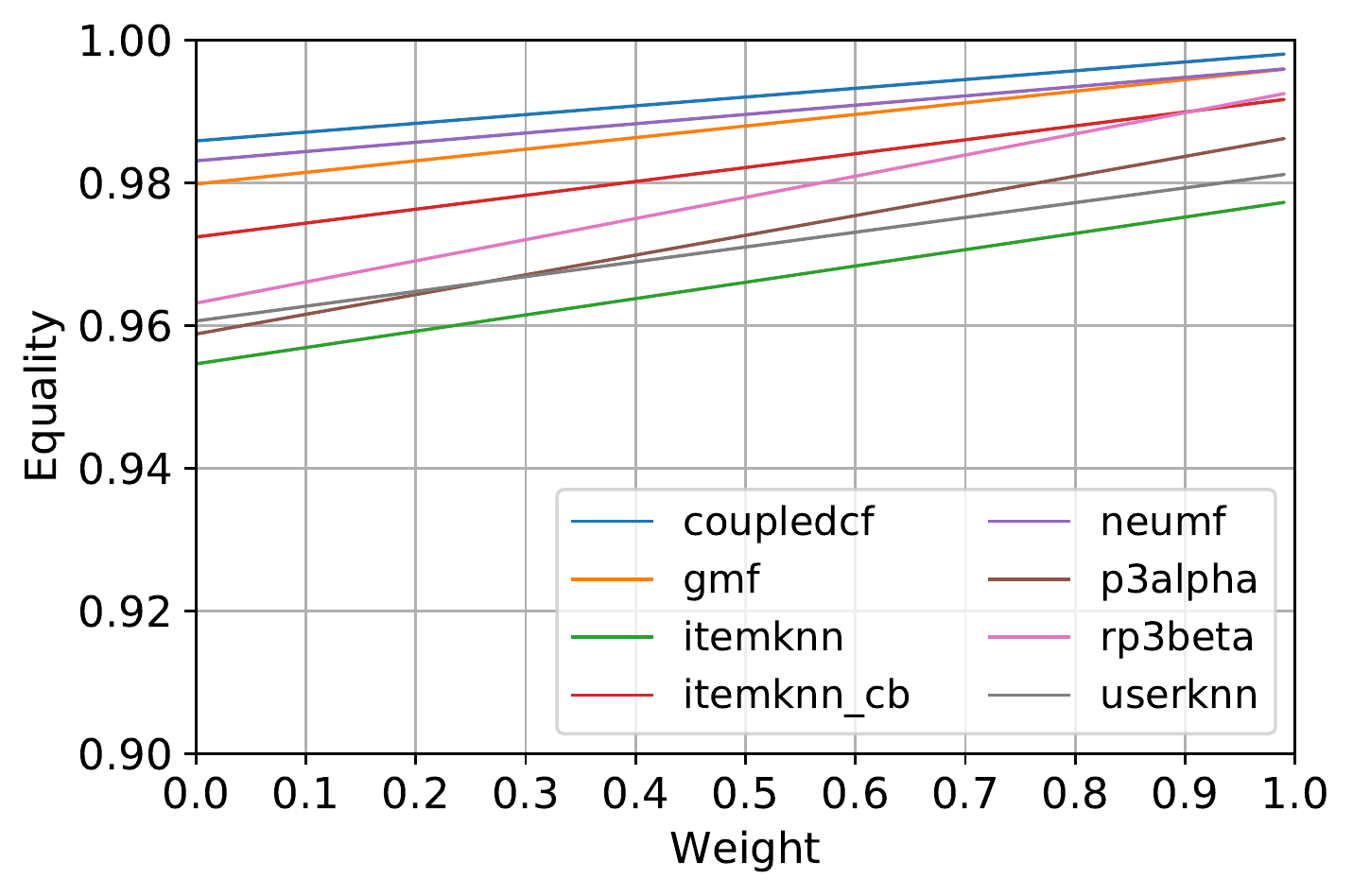}
    \caption{Equality on Pers.}
\end{subfigure}
\caption{\textbf{Controlled Performance}. Normalized Discounted Cumulative Gain (NDCG), consistency, and equality achieved by our procedure under Glob, User, and Pers weight assignment strategies. For each algorithm, our approach has been applied at various $\lambda$.} \label{fig:eq-con-ndcg-absolute}
\end{figure}

The results related to NDCG, consistency, and equality are shown in \figurename~\ref{fig:eq-con-ndcg-absolute}. 
Specifically, top-row plots on NDCG highlighted the fact that ItemKNN and ItemKNN-CB experienced the largest loss in NDCG at increasing $\lambda$. 
The rest of the algorithms showed a more stable pattern on NDCG, even though the NDCG absolute value is significantly lower for the one achieved by ItemKNN and ItemKNN-CB. 
Throughout the weight assignment strategy, no significant difference was observed for the same algorithm over the three strategies. 
On the other hand, the weight assignment strategy has an important role in consistency and equality (center and bottom rows). 
Specifically, \texttt{User} and \texttt{Pers} weight setups made it possible to achieve higher consistency and quality than \texttt{Glob}. 
It can also be observed that all the algorithms brought the same degree of improvement in consistency while varying $\lambda$. 

Interestingly, by looking at equality scores, two patterns of improvement were observed. 
Specifically, the algorithms from the graph-based, content-based, and hybrid families showed a larger improvement at each value of $\lambda$ than the other families. 
The following observation can be drawn:

\vspace{2mm} \setlength{\fboxsep}{10pt}
\noindent \textbf{Observation 6}. \textit{The considered weight assignment strategies do not differ in terms of accuracy loss. However, User and Pers lead to consistency and equality values higher than Glob, at the same $\lambda$. This means that a higher equality of recommended learning opportunities can be achieved by considering the consistency gaps experienced by the individual learner for each principle.}\\

\begin{figure}[!b]
\centering
\begin{subfigure}[t]{0.32\linewidth}
    \centering
    \includegraphics[width=1.0\linewidth]{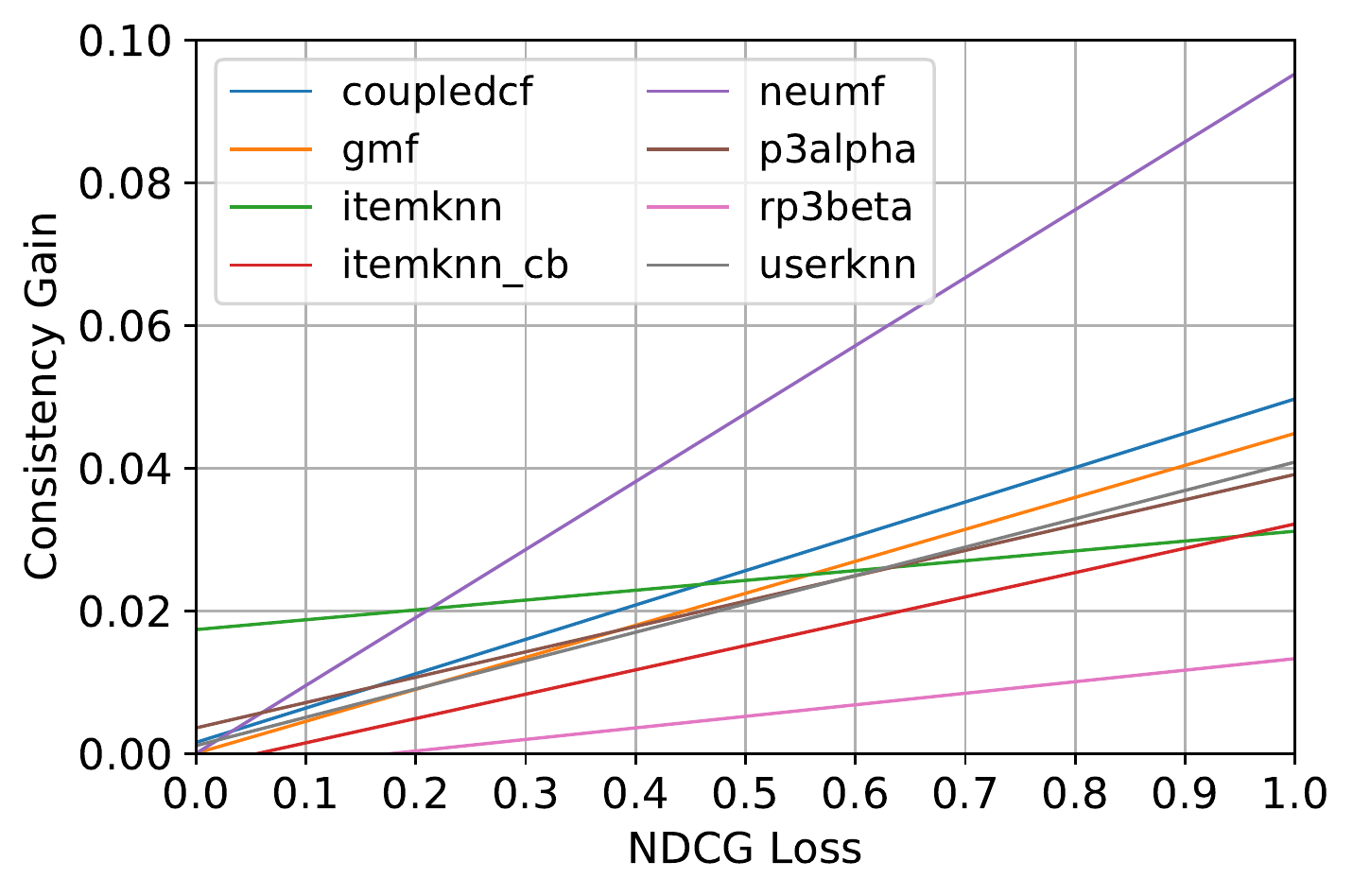}
    \caption{NDCG-Cons on Glob.}
\end{subfigure}
\begin{subfigure}[t]{0.32\linewidth}
    \centering
    \includegraphics[width=1.0\linewidth]{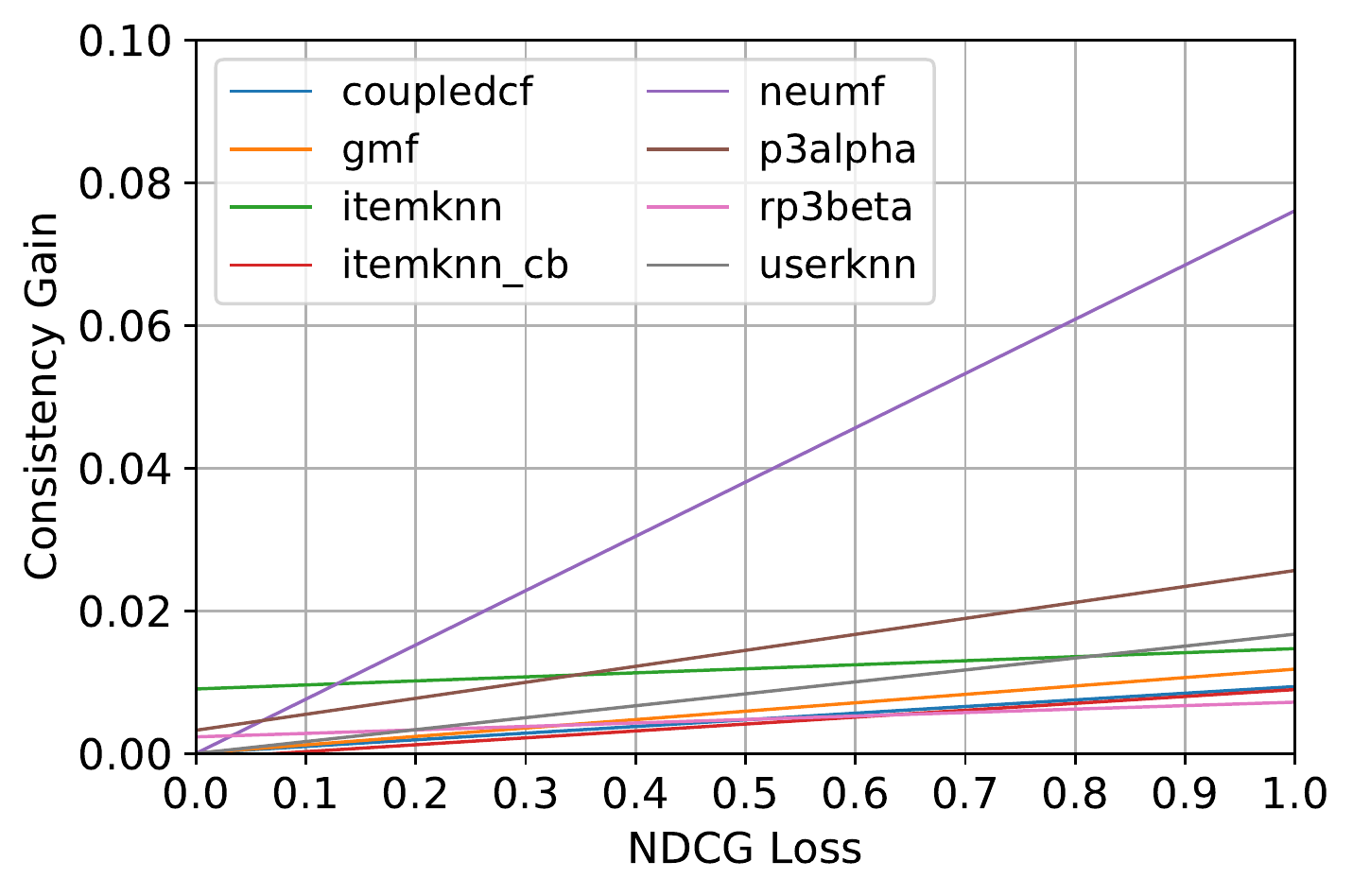}
    \caption{NDCG-Cons on User.}
\end{subfigure}
\begin{subfigure}[t]{0.32\linewidth}
    \centering
    \includegraphics[width=1.0\linewidth]{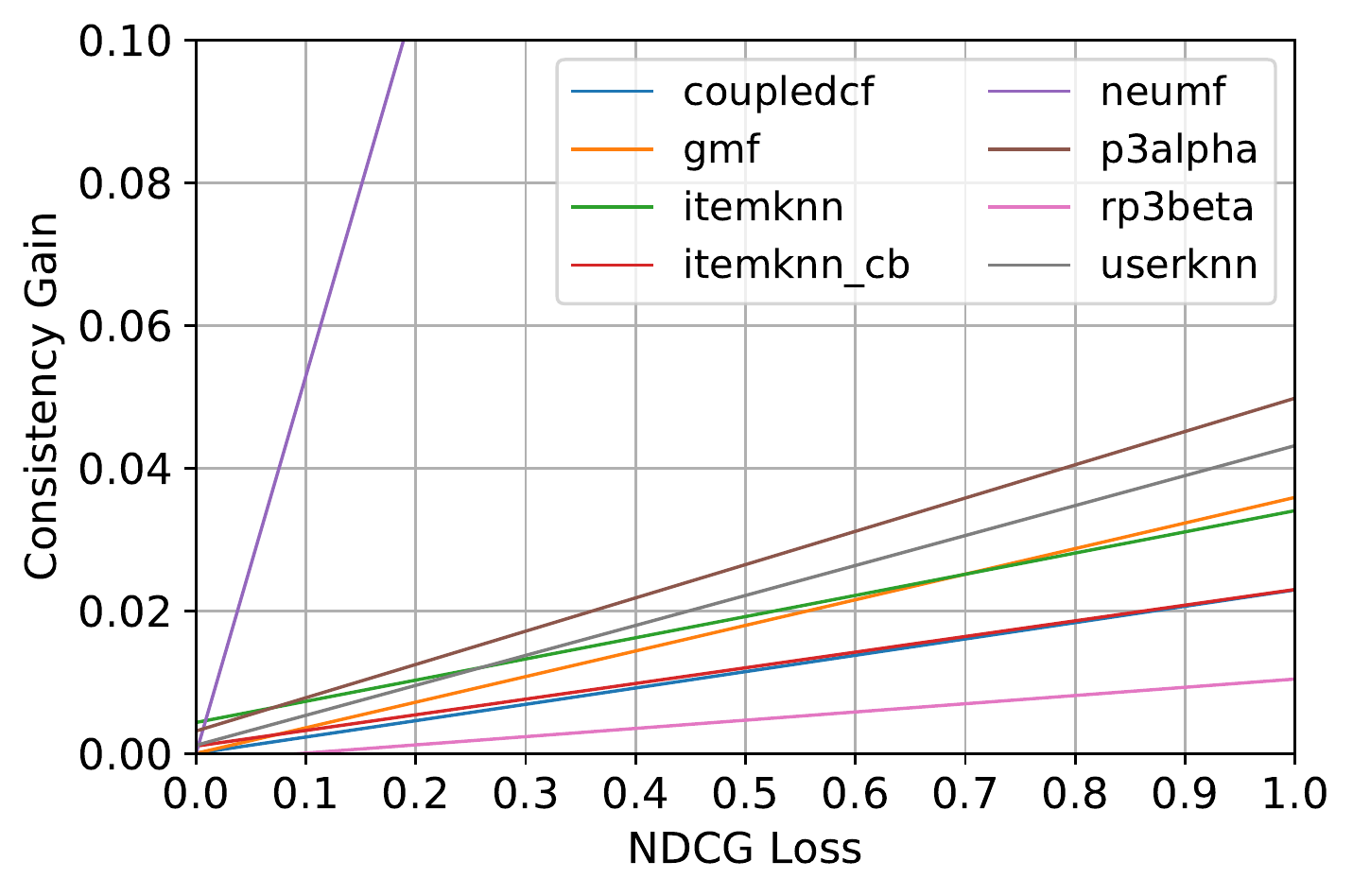}
    \caption{NDCG-Cons on Pers.}
\end{subfigure}
\begin{subfigure}[t]{0.32\linewidth}
    \centering
    \includegraphics[width=1.0\linewidth]{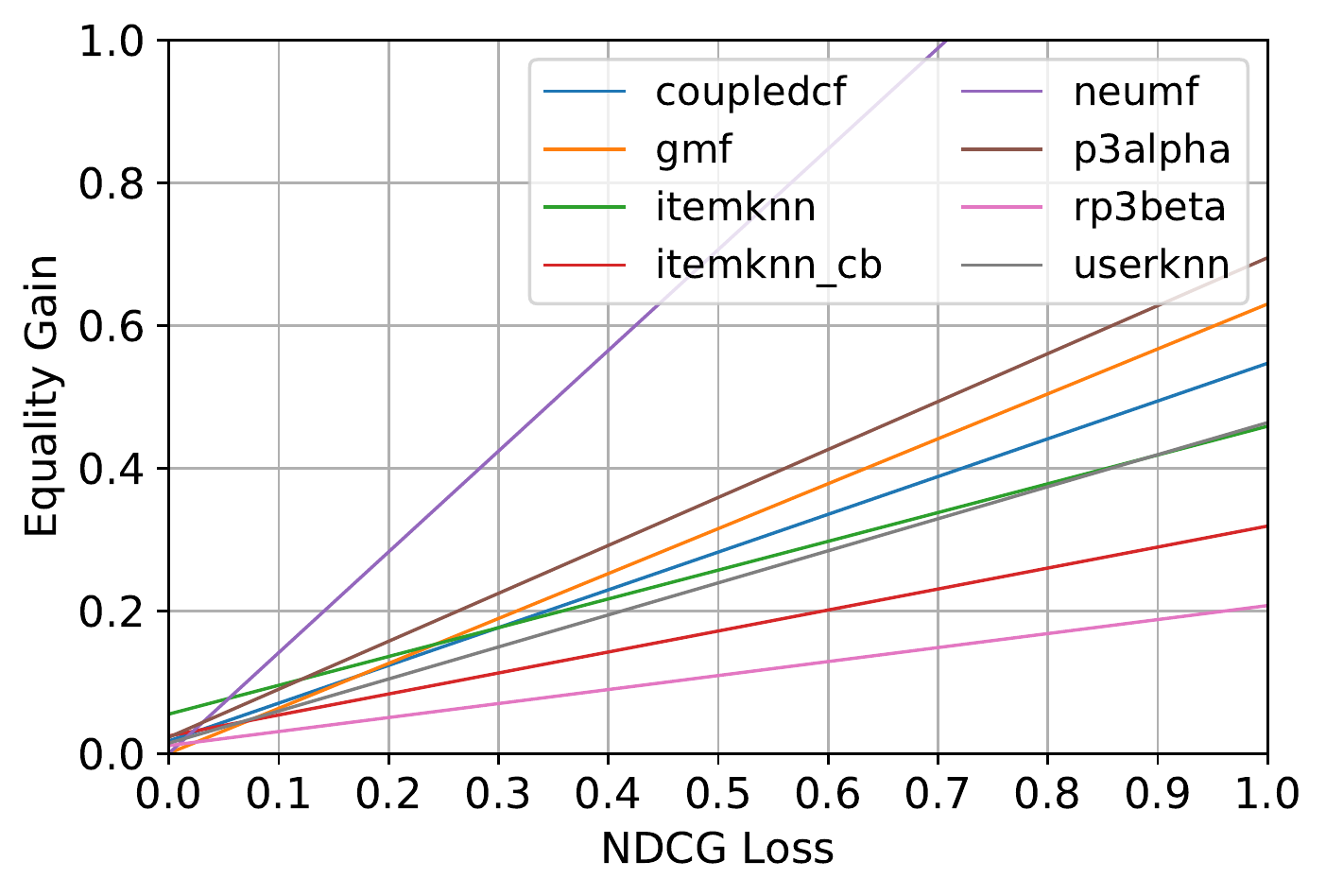}
    \caption{NDCG-Equality on Glob.}
\end{subfigure}
\begin{subfigure}[t]{0.32\linewidth}
    \centering
    \includegraphics[width=1.0\linewidth]{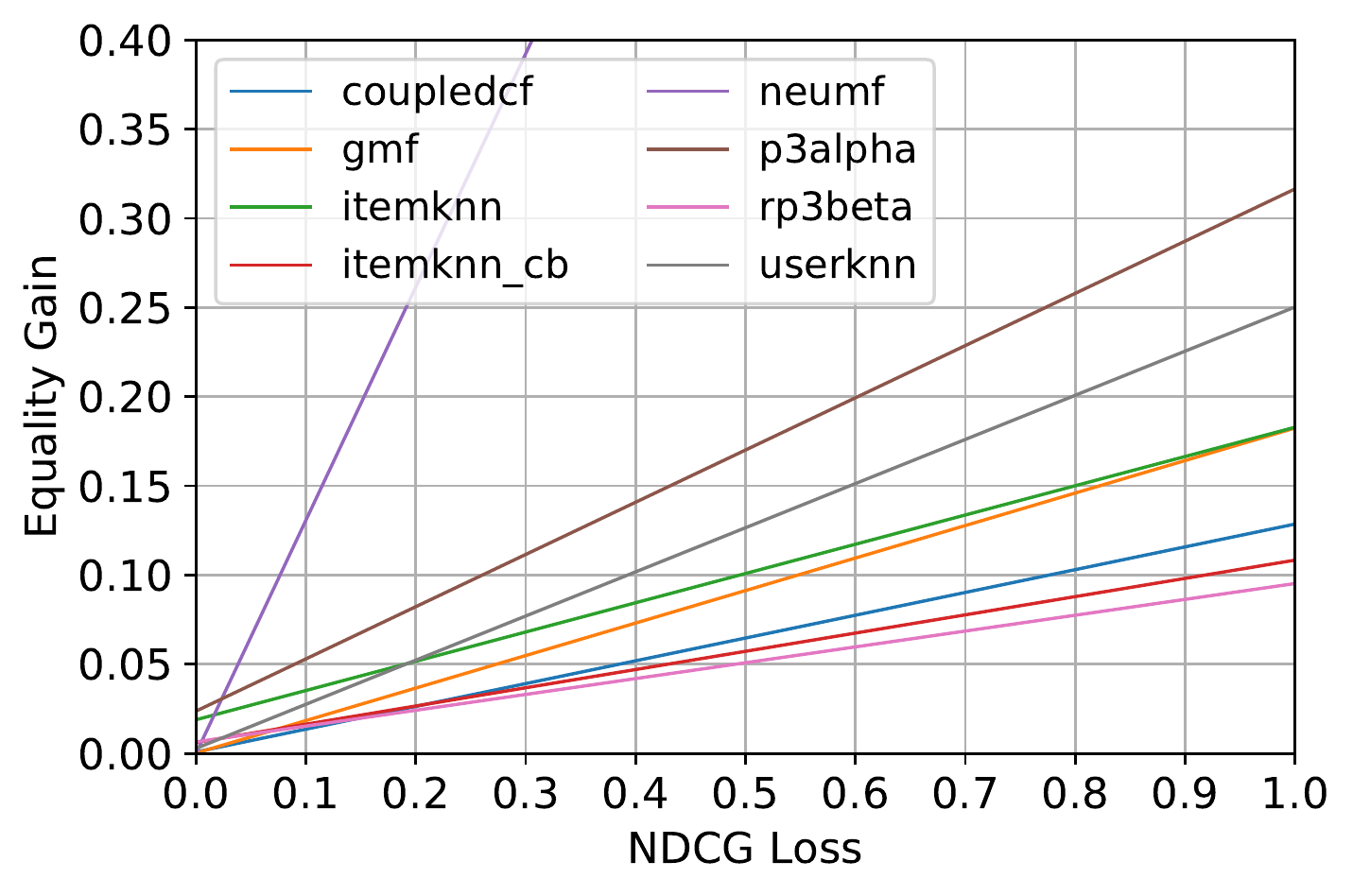}
    \caption{NDCG-Equality on User.}
\end{subfigure}
\begin{subfigure}[t]{0.32\linewidth}
    \centering
    \includegraphics[width=1.0\linewidth]{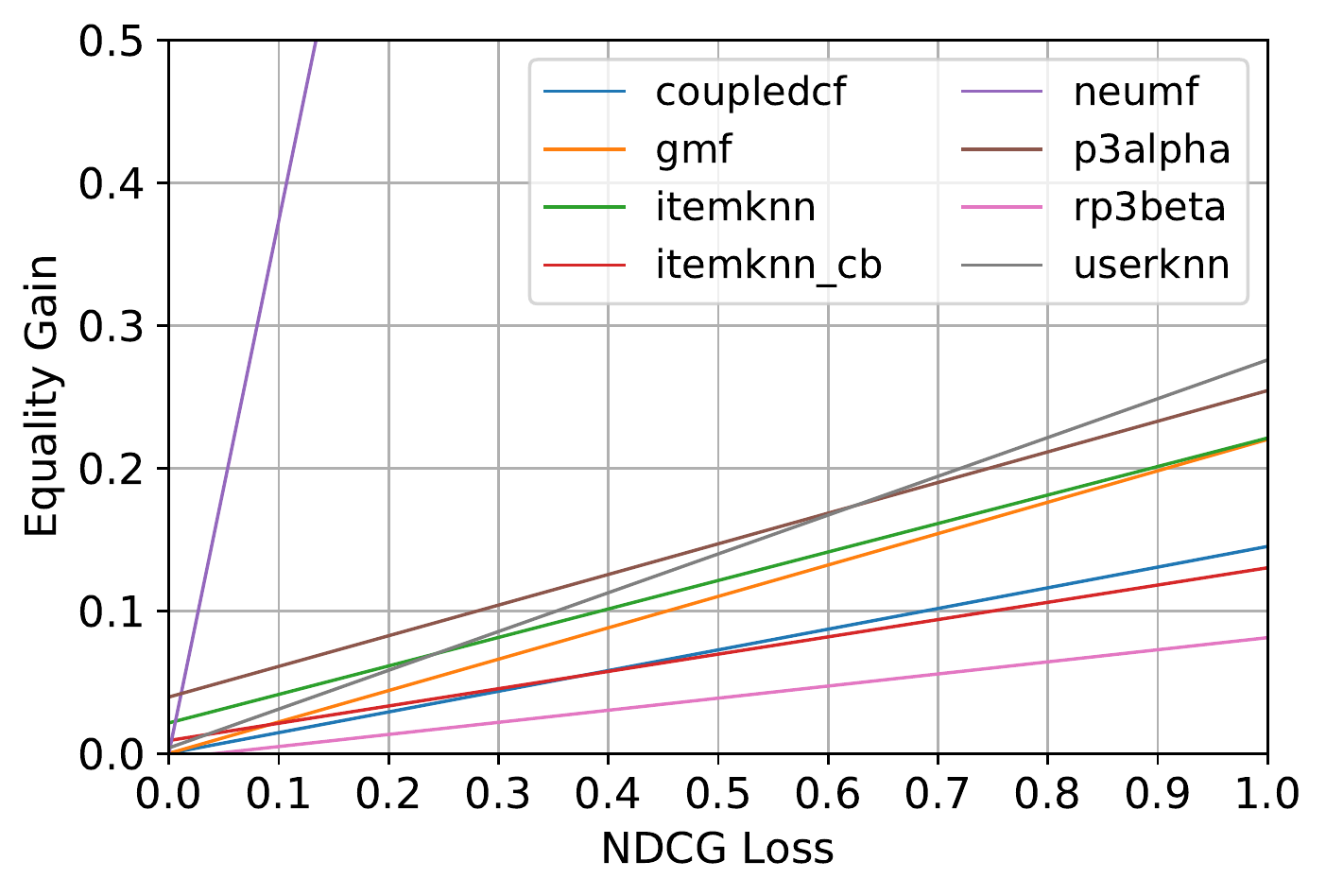}
    \caption{NDCG-Equality on Pers.}
\end{subfigure}
\caption{\textbf{Accuracy-Equality Relation}. For each algorithm and weight assignment setup, we computed the gain in equality and consistency that can be achieved at the cost of loosing a certain degree of accuracy.} \label{fig:eq-ndcg-gain-loss}
\end{figure}

To have a more detailed picture, we analyzed the connection between a loss in NDCG and between a gain in consistency and equality. 
Such a plot plays a key role in a real-world context. 
While it is the responsibility of scientists to bring forth the discussion about metrics, and possibly to design algorithms to optimize them by turning parameters, it is ultimately up to the stakeholders\footnote{Although this approval lets stakeholders control the different factors impacting on fairness, it still leaves open the questions around the intrinsic maturity of the weight settings and the accountability in the decision-making process.} (e.g., teachers, instructional designers, platform owners), depending on the targeted educational domain, to select the trade-offs most suitable for their context. 
Therefore, this kind of plot would support a decision regarding the value of $\lambda$ to set up in production, to achieve the desired trade-off. Figure~\ref{fig:eq-ndcg-gain-loss} plots the gain of consistency (top row) and quality (bottom row) resulting from the degree of loss the platform owners are willing to have. 
It can be observed that the patterns of consistency and equality within the same weight strategy show the same behavior concerning the loss in NDCG. 
This observation confirms the results of our exploratory analysis, where consistency and equality improvements were directly proportional.

\vspace{2mm} \noindent \textbf{Influence on Each Principle}. 
In this subsection, we run experiments to assess 
($i$) which principles show the largest improvement thanks to the proposed approach, and 
($ii$) which is the impact of the weighting strategy on the consistency of each principle. 
To answer these questions, for each model, we run an instance of our re-ranking procedure for each weighting strategy, varying $\lambda\in [0.0, 0.25, 0.50, 0.75, 0.99]$. 
Then, we computed the consistency of each principle achieved by an algorithm, at a given $\lambda$, with a given weighting.  

\begin{figure}[!b]
\centering
\begin{subfigure}[t]{0.32\linewidth}
    \centering
    \includegraphics[width=1.0\linewidth]{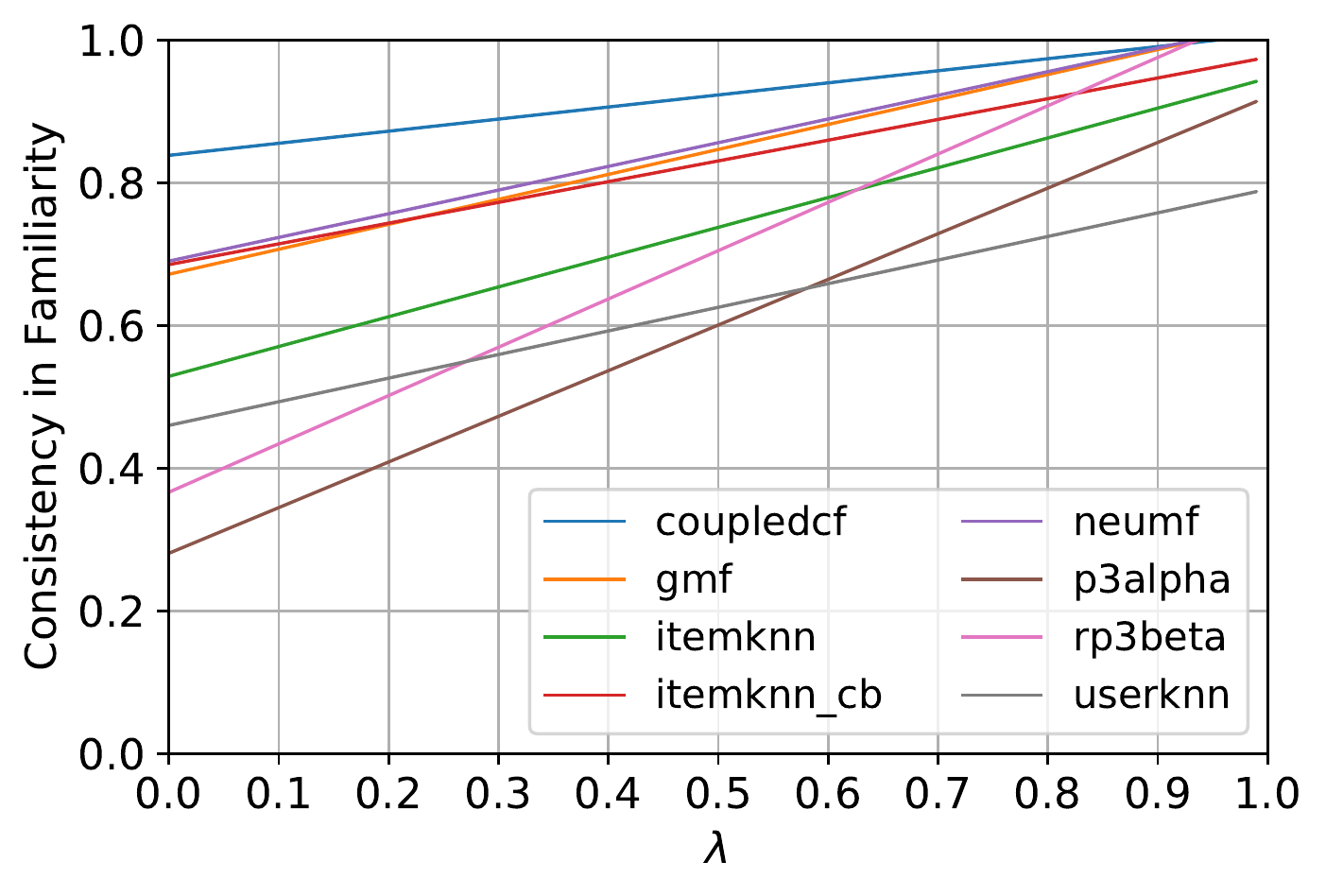}
    \caption{Familiarity.}
\end{subfigure}
\begin{subfigure}[t]{0.32\linewidth}
    \centering
    \includegraphics[width=1.0\linewidth]{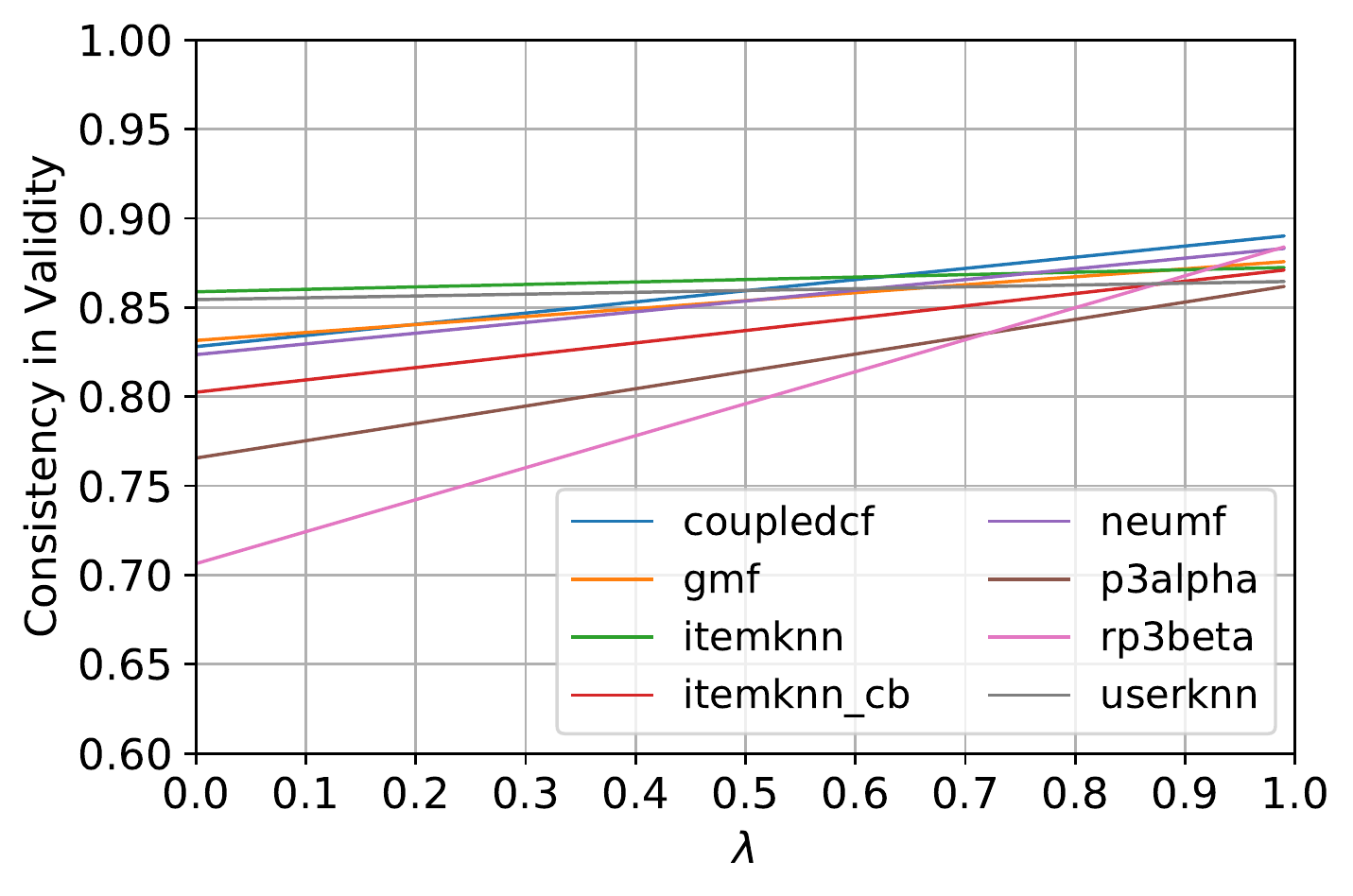}
    \caption{Validity.}
\end{subfigure}
\begin{subfigure}[t]{0.32\linewidth}
    \centering
    \includegraphics[width=1.0\linewidth]{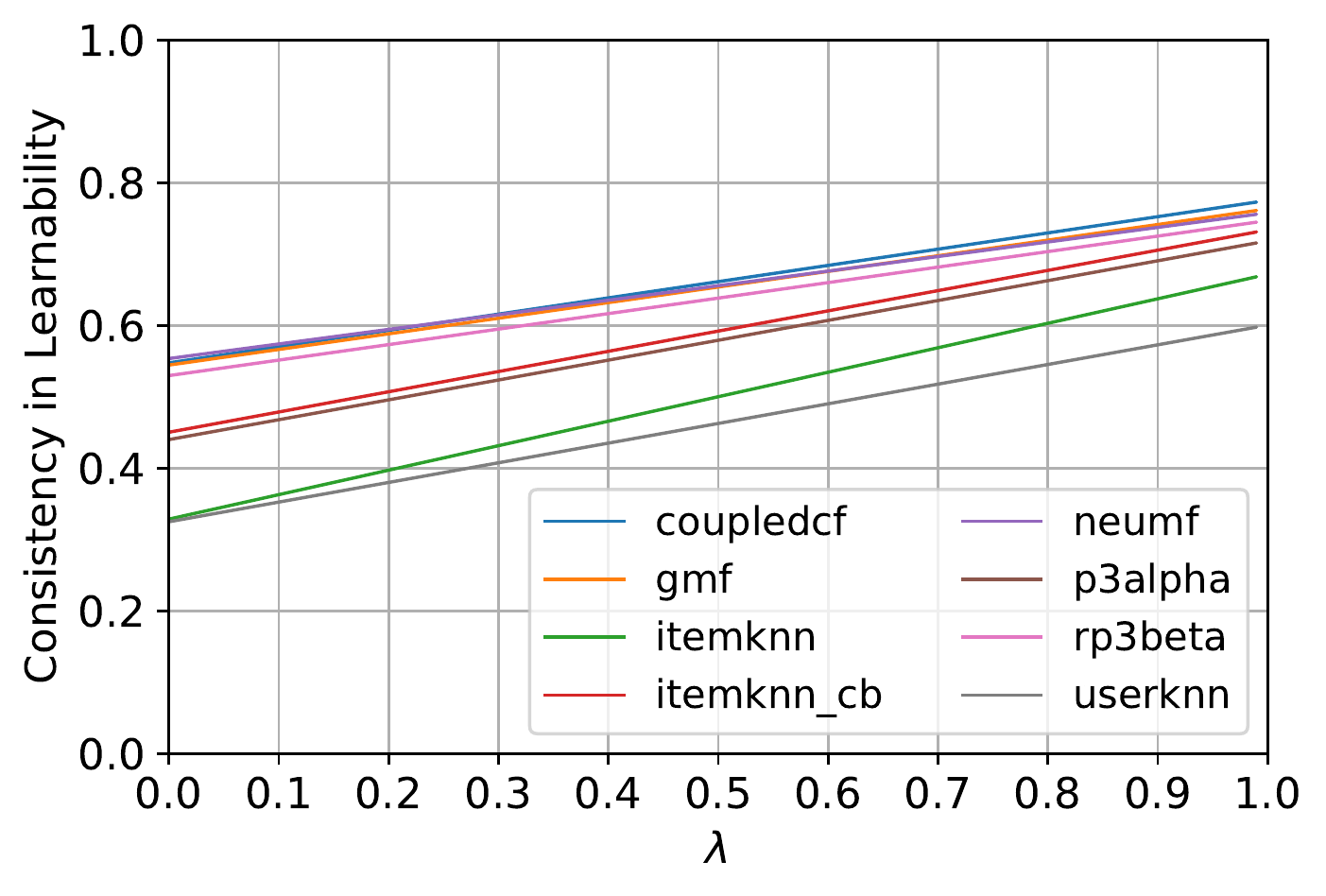}
    \caption{Learnability.}
\end{subfigure}
\begin{subfigure}[t]{0.32\linewidth}
    \centering
    \includegraphics[width=1.0\linewidth]{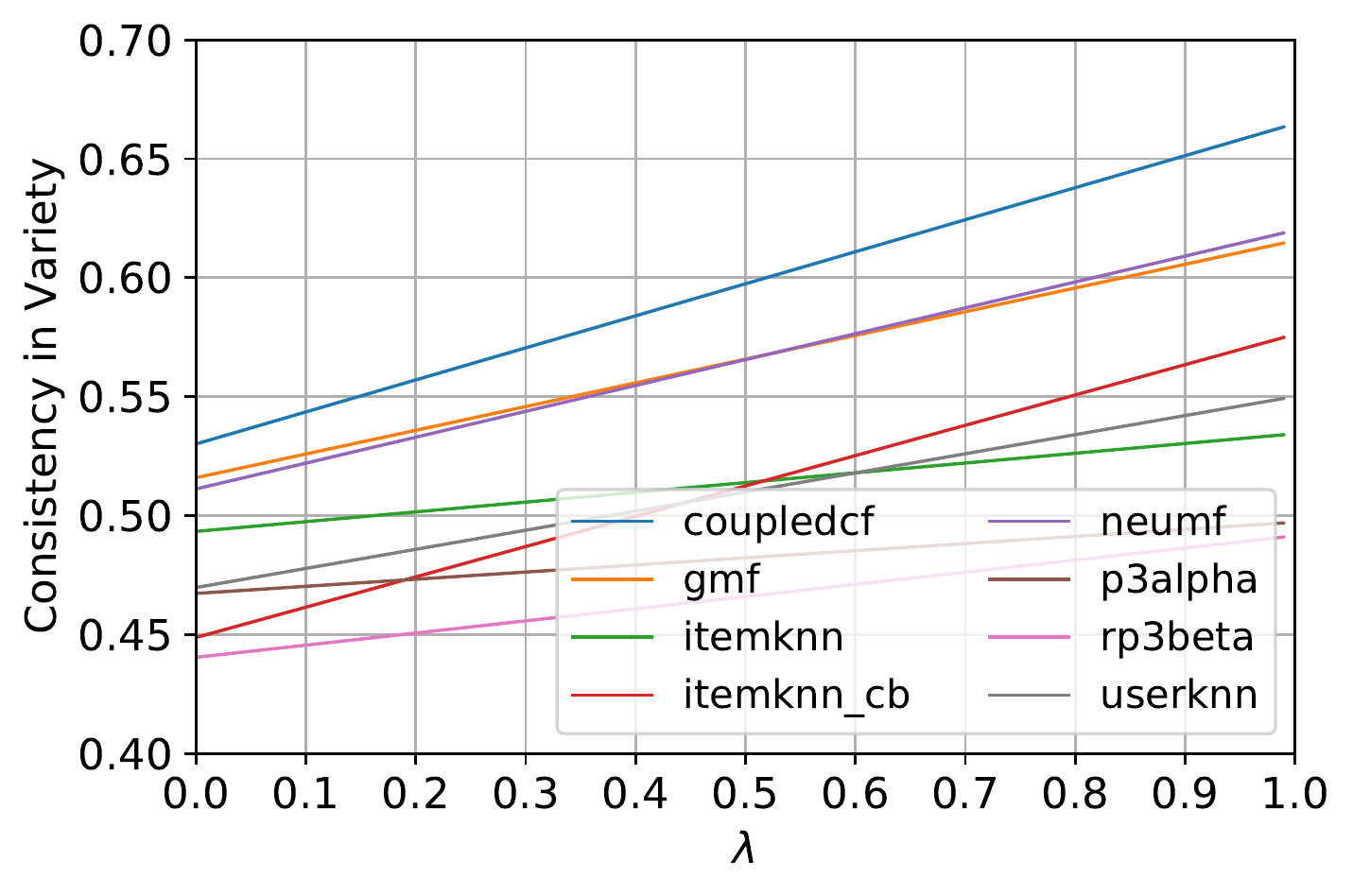}
    \caption{Variety.}
\end{subfigure}
\begin{subfigure}[t]{0.32\linewidth}
    \centering
    \includegraphics[width=1.0\linewidth]{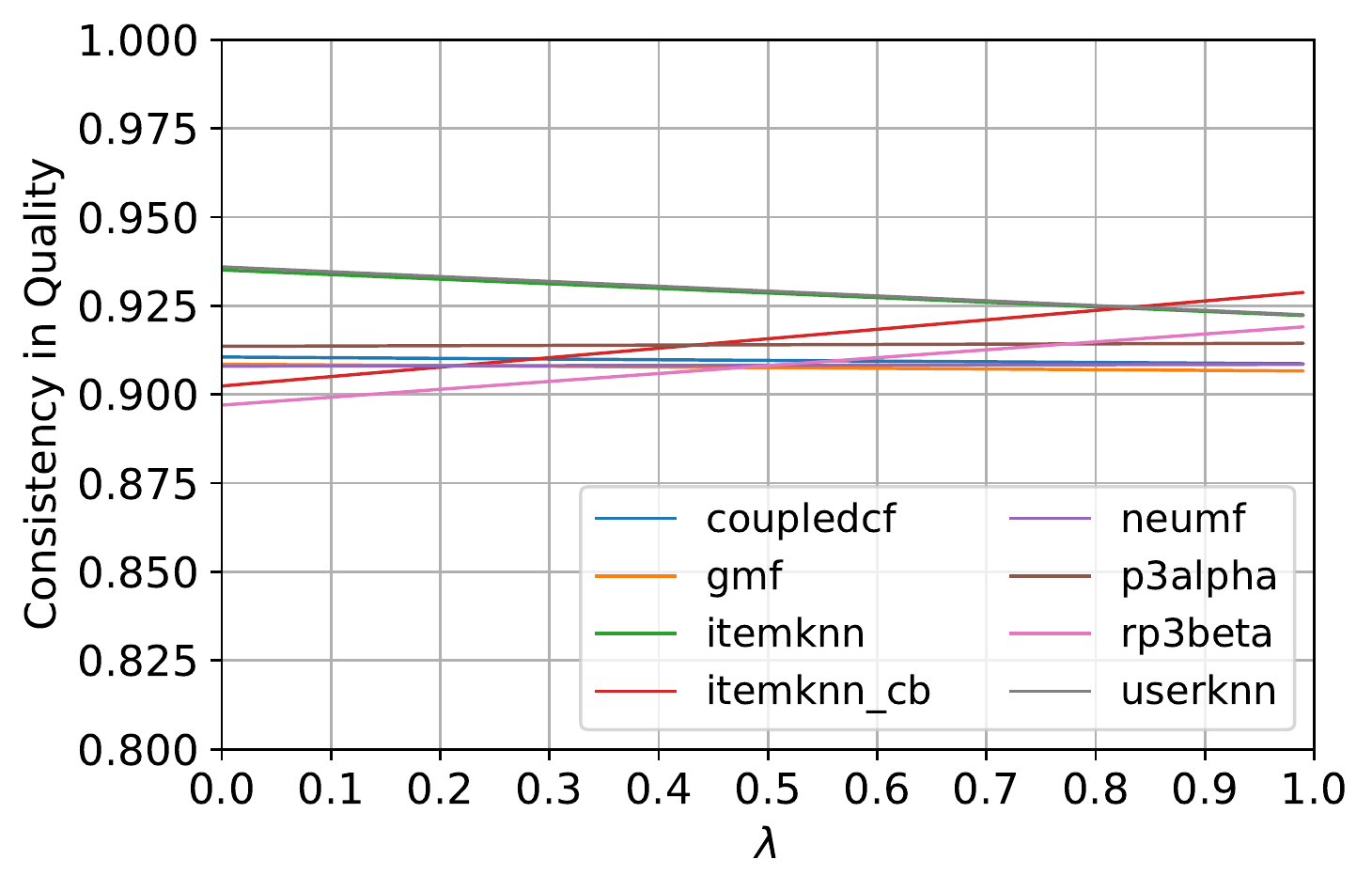}
    \caption{Quality.}
    \label{fig:control-cons-quality}
\end{subfigure}
\\
\begin{subfigure}[t]{0.32\linewidth}
    \centering
    \includegraphics[width=1.0\linewidth]{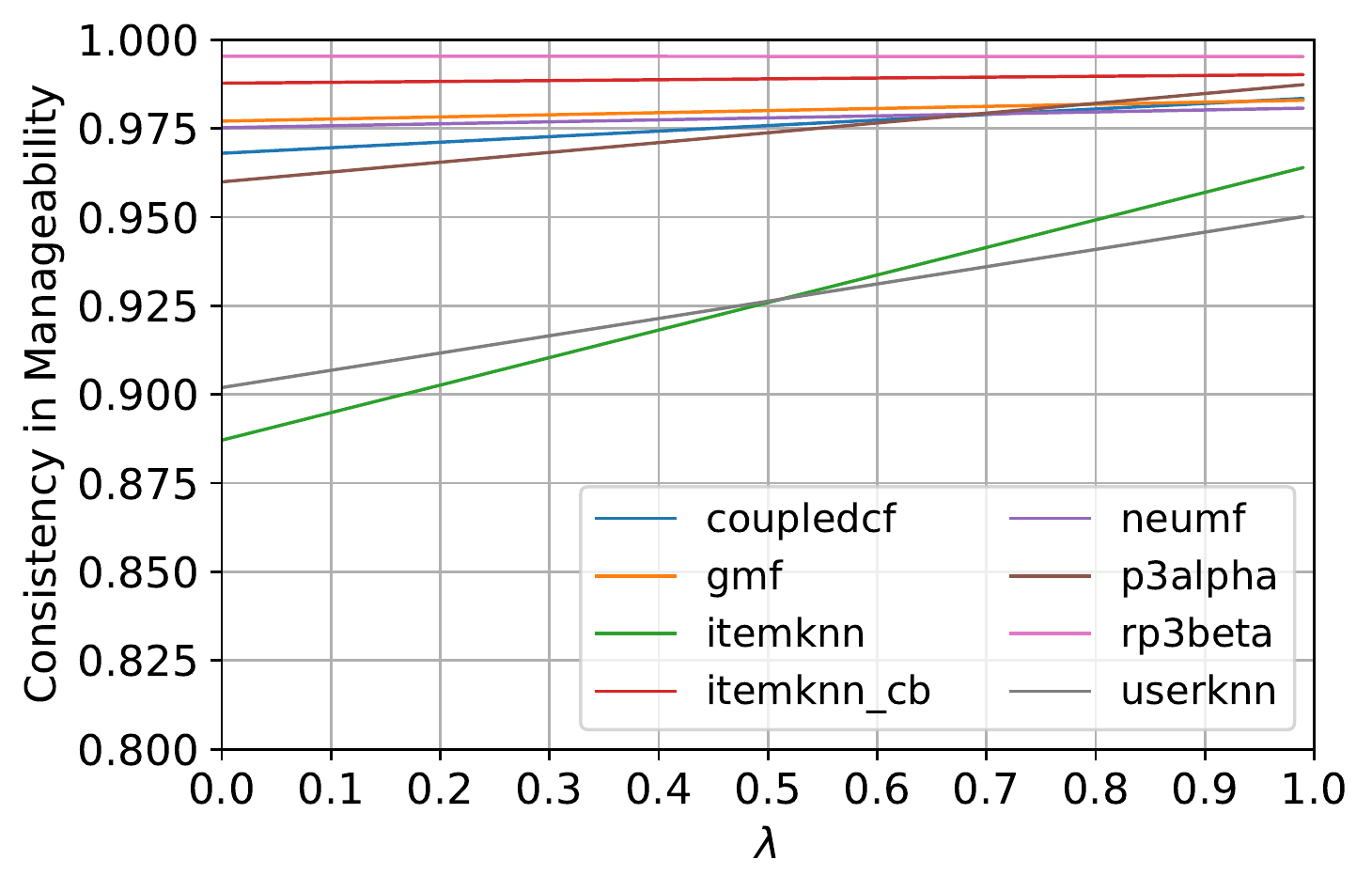}
    \caption{Massiveness.}
\end{subfigure}
\begin{subfigure}[t]{0.32\linewidth}
    \centering
    \includegraphics[width=1.0\linewidth]{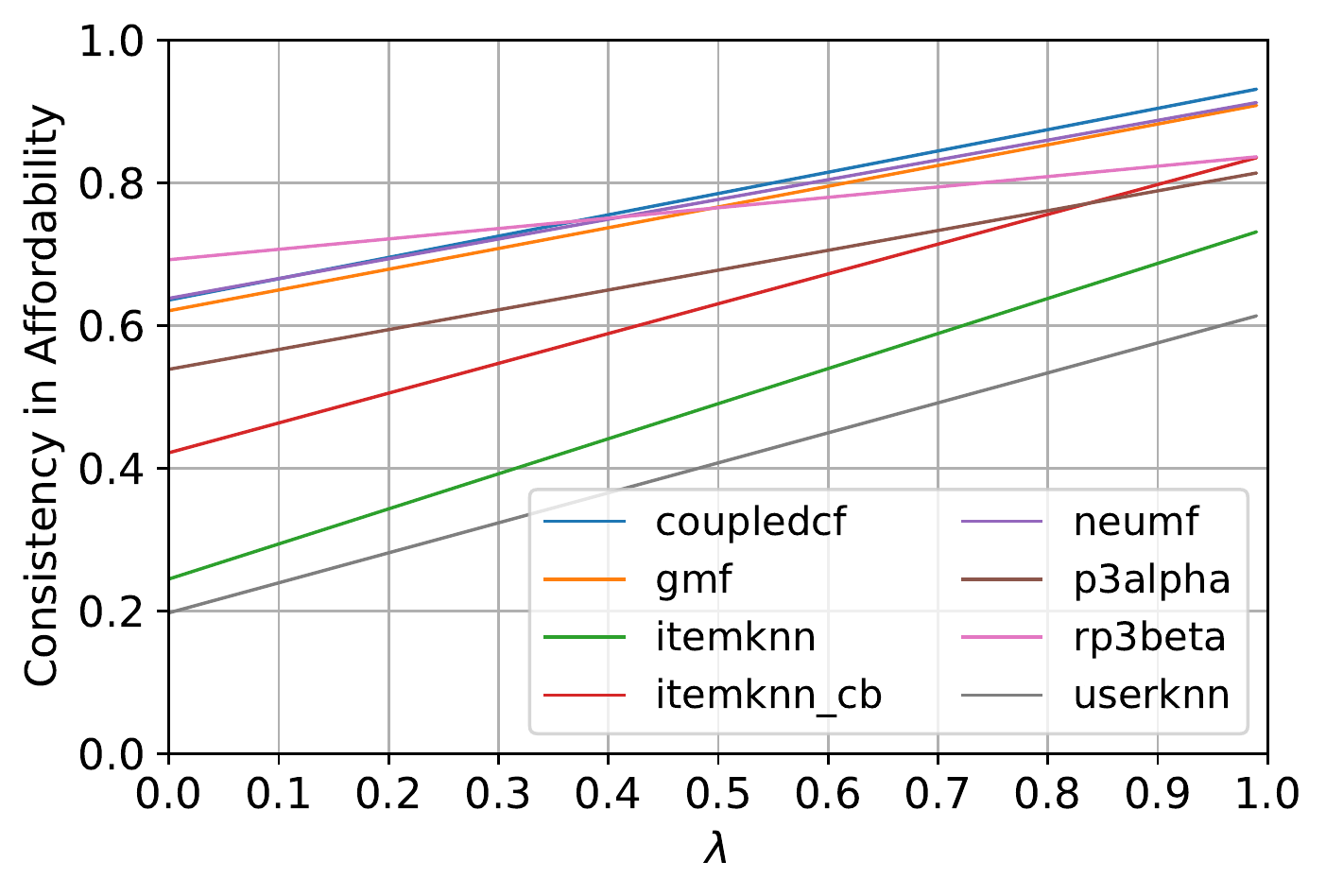}
    \caption{Affordability.}
\end{subfigure}
\caption{\textbf{Controlled Consistency}. Consistency per principle achieved by our procedure under under the Glob weight assignment strategies at various $\lambda$.} \label{fig:control-cons}
\end{figure}

\figurename~\ref{fig:control-cons} reports the impact of our procedure on the considered principle, for different algorithms. 
Overall, it can be observed that our procedure allows us to improve the consistency for all the principles, except Quality (see \figurename~\ref{fig:control-cons-quality}). 
This principle exhibited two main patterns based on the algorithms: quality increased for ItemKNN and RP3Beta, while it decreased for the other algorithms. 
Interestingly, adding course metadata information into the algorithm (ItemKNN-CB concerning ItemKNN) changes the trend in quality. 
Furthermore, our approach made it possible to improve Familiarity, Variety, and Affordability, all which achieved low consistency scores in the exploratory analysis. 
It follows that the value of $\lambda$ can be fine-tuned to reach the desired level for a given principle. 
Another observation can be drawn.      

\vspace{2mm} \setlength{\fboxsep}{10pt}
\noindent \textbf{Observation 7}. \textit{Controlling learning opportunity results in higher familiarity, variety, and affordability, while maintaining stable values for the other principles. However, quality may decrease, especially with collaborative filtering.}\\

\begin{figure}[!b]
\centering
\begin{subfigure}[t]{1.\linewidth}
    \centering
    \includegraphics[width=.7\linewidth]{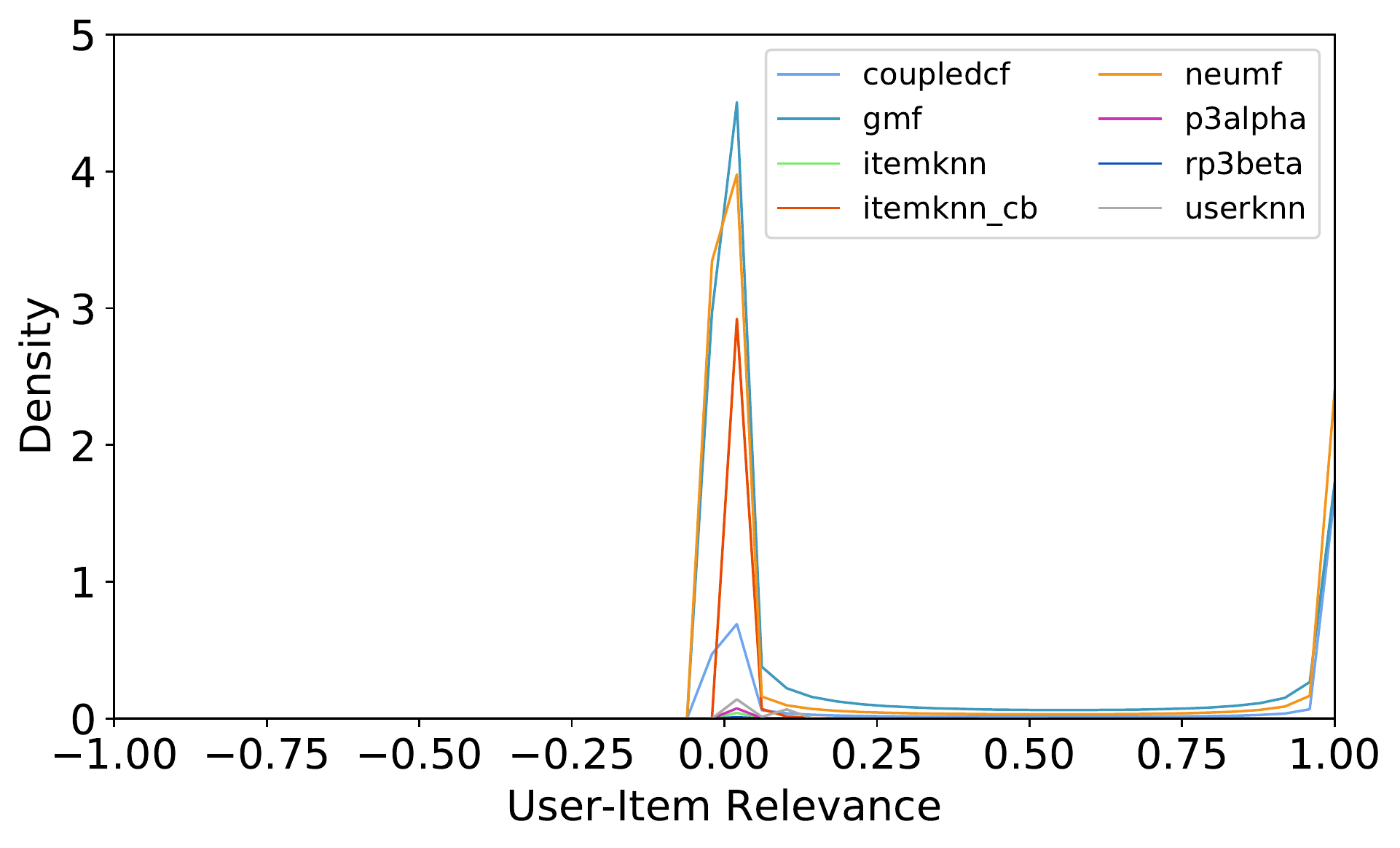}
\end{subfigure}
\vspace{-6mm}
\caption{\textbf{Relevance Score Distribution}. For each algorithm, we compute the density of the user-item relevance scores computed by the original version of the recommender.}
\label{fig:coco-sim-distr}
\end{figure}

\vspace{2mm} \noindent \textbf{Influence of Relevance Score Distribution}. 
Having observed that the improvement in consistency greatly varies among algorithms, we conjecture that the distribution of relevance scores returned by the original algorithm may influence the feasibility of our approach. 
Hence, \figurename~\ref{fig:coco-sim-distr} shows the density of relevance scores along the range $[-1, 1]$. 
It can be observed that GMF, NeuMF, and ItemKNN-CB produced relevance scores with a high density around zero. 
Therefore, in our approach, the relevance part may be dominated by the consistency part, regardless of the applied $\lambda$. 
Consequently, relevance could have a drastic drop even for low $\lambda$ values, making it harder to find a good trade-off between accuracy and consistency. 
This behavior is confirmed by the results previously reported in \figurename~~\ref{fig:eq-ndcg-gain-loss}. 
The NDCG loss compared to the Consistency gain is higher for GMF, NeuMF, and ItemKNN-CB. 
Given that the relevance scores distribution is highly dense, even a small improvement in consistency may completely overturn the list of recommended courses, leading to the following observation:   

\vspace{2mm} \setlength{\fboxsep}{10pt}
\noindent \textbf{Observation 8}. \textit{The density of the relevance score distribution returned by an algorithm influences the trade-off between accuracy and consistency, after applying our approach. The higher the density, the higher the drop in accuracy.}\\

\vspace{2mm} \noindent \textbf{Qualitative Inspection}. 
The targeted principles and the corresponding consistency and equality metrics directly monitor properties of the recommender lists and the experiments showed that our approach leads potentially\footnote{Given that it is based on assumptions and technical implementation, our approach remains to be further evaluated from a human-centered point of view (e.g., on the learners' perceptions in subjective experiences in both academic and life-long outcomes). 
There is also the need to validate the value of $\lambda$ in online settings with learners, which remains in the domain of subjective decision-making by course designers and engineers.} to more consistent and equal recommended learning opportunities. 
However, it may also be interesting to inspect some recommended lists resulting from a traditional recommendation algorithm and how they change after our approach, closing the circle for the problem that motivated this study and assessing the practical end-to-end impact of the proposed approach.     

Table \ref{tab:sample-recs} shows how the list of courses recommend to an example learner changes after applying our procedure. 
First, regarding affordability, it can be observed that the re-ranked list offers a broader range of opportunities in terms of fees, even among courses from the same category. 
This aspect may enable a learner to receive suggestions that can better fit with their current financial resources. 
Then, more diverse opportunities were proposed in terms of instructional level and asset types, in line with the targets pursued by the platform. 
Except for the course with a large class ranked in the first position, our approach leads to courses with smaller and, thus, more manageable classes. 
However, this comes at the price of a slight loss of validity and category diversity. 
This happened because the learner mostly interacted with ``development'' and ``it-and-software'' courses in the past, so our approach promoted courses aligned with those categories (i.e., increasing familiarity). 

While the proposed approach confirmed its feasibility for conveying multiple principles into a recommended list and providing more equal learning opportunities across learners, it should be noted that it is ultimately up to the stakeholders to select principles and trade-offs most suitable for their context.

\begin{table}[!t]
\begin{tabular}{lrrrrrrr}
\hline
 & \multicolumn{7}{c}{\textbf{Before}} \\
\hline
 & \textbf{Item ID} & \textbf{Category} & \textbf{Fee} & \textbf{Level} & \textbf{Update} & \textbf{Learners} & \textbf{Asset} \\
\hline
0 & 17175 & development & 99.99 & all & 2017-08-01 & 778 & V+A \\
1 & 8518 & development & 199.99 & all & 2020-03-02 & 23386 & V+A+E \\
2 & 17772 & academics & 199.99 & all & 2019-07-30 & 6501 & V+A \\
3 & 18689 & development & 199.99 & all & 2019-08-11 & 10588 & V+A \\
4 & 9364 & development & 99.99 & all & 2016-03-01 & 3825 & V+A+E \\
5 & 17735 & business & 199.99 & all & 2020-03-13 & 19615 & V \\
6 & 7932 & development & 29.99 & beg & 2015-11-06 & 4661 & V+E \\
7 & 13191 & development & 199.99 & all & 2019-02-20 & 133490 & V+A \\
8 & 15861 & development & 99.99 & all & 2020-02-24 & 24705 & V+A \\
9 & 5676 & development & 199.99 & all & 2020-02-15 & 10638 & V+A \\
\hline
\multicolumn{8}{c}{\textbf{After}} \\
\hline
 & \textbf{Item ID} & \textbf{Category} & \textbf{Fee} & \textbf{Level} & \textbf{Update} & \textbf{Learners} & \textbf{Asset} \\
\hline
\multicolumn{1}{r}{0} & 4878 & development & 0.00 & beg & 2018-07-23 & 414169 & V+A+E \\
\multicolumn{1}{r}{1} & 9797 & it-and-sw & 0.00 & int & 2019-04-12 & 5605 & V+A \\
\multicolumn{1}{r}{2} & 17175 & development & 99.99 & all & 2017-08-01 & 778 & V+A \\
\multicolumn{1}{r}{3} & 7932 & development & 29.99 & beg & 2015-11-06 & 4661 & V+E \\
\multicolumn{1}{r}{4} & 9364 & development & 99.99 & all & 2016-03-01 & 3825 & V+A+E \\
\multicolumn{1}{r}{5} & 11275 & it-and-sw & 19.99 & int & 2019-08-30 & 777 & V \\
\multicolumn{1}{r}{6} & 8518 & development & 199.99 & all & 2020-03-02 & 23386 & V+A+E \\
\multicolumn{1}{r}{7} & 3707 & development & 19.99 & beg & 2015-02-27 & 1357 & V \\
\multicolumn{1}{r}{8} & 15861 & development & 99.99 & all & 2020-02-24 & 24705 & V+A \\
\multicolumn{1}{r}{9} & 8944 & it-and-sw & 19.99 & beg & 2018-11-10 & 8282 & V+E \\
\hline
\end{tabular}
\caption{\textbf{Impact on Recommendation}. Sample top-10 recommendations provided to a learner by the traditional recommender system based on ItemKNN-CB (top) and the top-10 recommendations resulting from our approach.} \label{tab:sample-recs}
\end{table}
\subsection{Discussion} 
There has been an increase in digitalized educational systems. 
Hence, online course platforms are promptly becoming an essential tool for providing learners with the most suitable material, meeting their expectations of educational values.
Due to the highly subjective and contextual nature of this process, educational platforms need to consider multiple perspectives. 
Indeed, besides providing a wide range of course filtering options, an increasingly high number of principles for further processing such options is needed to identify the most suitable ones for a learner. 
In this view, an in-depth understanding of recommendations in online course platforms may reduce the overload of learners, improving consistency and equality of the recommendations. 

Though the learners of COCO may not be representative of general learners in the recommender system, our analysis in Section~\ref{sec:exp-analysis} indicates that optimizing recommendation algorithms only for learners' interests may result in undermining other essential properties conveyed by the learning opportunities proposed to them. 
Ranges of educational recommender systems, such as those provided by~\cite{bridges2018course,rieckmann2018learning,bhumichitr2017recommender}, can thus capitalize on our definitions, metrics, and procedures as a mean for assessing recommendations' consistency. 
However, the principles proposed in this paper, which have been derived from curriculum design beliefs, would not necessarily align with the learner's beliefs of fairness along recommendations. 

A complementary human-centered perspective of the principle design can strongly benefit from our findings, leveraging our initial principles as a good starting point. 
Nonetheless, this approach might not be sufficient, and no one-size-fits-all set of principles would exist, given their dependence on the context and the involved stakeholders. 
Despite having a range of limitations in terms of context-sensitivity, learner-centeredness, operationalization, and temporal awareness (Section~\ref{sec:limitations}), the re-ranking procedure that was proposed and assessed in Section~\ref{sec:framework} has been shown to improve equality across learners, counteracting potential pitfalls of data-driven educational recommender systems. 
This aspect becomes of paramount importance in large-scale contexts, especially while reaching out to learners reluctant to the use of data-driven procedures \citep{herold2017case}. 
However, our results cannot prove that the differences in measured metrics translate to better educational outcomes and learners' acceptance. 
Finally, our work embeds views and needs of multiple educational stakeholders into recommended lists \citep{abdollahpouri2020multistakeholder}. 

In the broad discussion on FATE in AIED, we highlight that recommender's capabilities are an important component of AIED systems. 
Moreover, our research contributes to the improvement of our understanding of fairness in the educational recommendation context, by devising ways in which we can address fairness in AIED design. 
Our study moves a step forward in understanding how equality principles can be operationalized and combined in a formal notion of equal opportunities in educational recommendations. 
This contribution serves as a foundation to investigate how learners interpret concepts such as": 
(1) the fairness of the educational resource selection decisions they make (e.g., how they select courses for their degrees); 
(ii) what they think about the fairness of the set of resources available to them;
(iii) to what extent they view the course selection process as fair; and (iv) how these decisions are influenced by available information about the given courses. 
The principles and formulations described in our study would be a great starting point for this purpose. 
Therefore, this study permits the research community to know what questions to ask as part of interviews with learners, what scenarios to explore to elicit their concepts of fairness, and how to process data in the educational platform to monitor and ensure the equality principles. 

Since we are flexible concerning the underlying principles, our approach can still be applied even if human-centered investigations reveal that the principles captured in our study need to be fine-tuned or modified. 
This paper, thus, shapes a blueprint of the decisions and processes to be done, once empirically-validated principles have been defined under the targeted educational scenario. 
To this end, we provide evidence on what kind of technologic support is needed to ensure that learners' course selection decisions lead to greater equality across learners. 
Nonetheless, there are a number of issues regarding inequality in educational opportunities that recommender systems could not fix by themselves (e.g., whether certain advanced courses are available at all). 
Consequently, our study would help better understanding what is known about the role that recommender systems could play in the bigger questions around fairness and equality, grounding the design and implementation of formal notions of equality informed by a deep understanding of how learners view equality. 

\subsection{Limitations}\label{sec:limitations}
Since our observations varied over algorithms and principles, we identified the main implications and limitations of our study. 

\begin{itemize}
\item \textbf{Limitations of data}. While our results highlight the need to consider equality of recommended learning opportunities when evaluating recommenders, the learners of COCO may not be representative of general learners in an educational platform. Unfortunately, data with enough attributes to look for sophisticated principles is hard to find. As pointed out in Section \ref{sec:data}, other datasets include few attributes of learners and courses. 
\item \textbf{Limitations of principles}. While our principles shed light on important aspects underlying the ranked courses, they may not be representative of the principles targeted by certain platforms and are based on assumptions derived from a dataset. Thus, limitations arise from different perspectives.
\begin{itemize}
\item \textbf{Context dependency}.  
Because this study does not provide formative or summative results about actual systems, it is thus more theoretical than practical and does establish a framework for work in the context of fairness in educational recommender systems. 
In traditional scenarios, the operationalization of principles is usually based on textual guidelines, and the translation into numerical indicators (when performed) is subjected to the specificity of the platform. 
Our measures are just representative examples of how traditional text-based principles could be operationalized and enriched based on the peculiarity of the online context (e.g., manageability). 
However, our framework can be adapted to any (number of) principles, based on the pursued goals.
\item \textbf{Learner-centeredness}. 
The considered principles do not relate to individuals, but to resources, educational level, number of learners, and so on, referring to courses as a group. 
Given that learning and teaching deal with changes in individuals (e.g, how each learner reacts to a problem within a course or which asset in which course was valuable for learners), our principles should be enriched to reach this level of modeling, e.g., by adding principles connected with learning outcomes, when large enough dataset will become available to the research community. 
\item \textbf{Technical operationalization}. 
Some of the operationalizations have been overly simplified due to the limitations of the data currently available in online course platforms. 
For instance, the recency of updates is used as a proxy for validity, but there is more to validity than recency of updates. 
Similarly, learner ratings are used as a proxy for quality, but learner ratings would not always correlate with other measures of quality (e.g., learning outcomes), and class size is used as a proxy for manageability, but other aspects (e.g., number of assistants) are not captured. 
Finally, the way learnability is operationalized seems to be simplified, and other aspects might be targeted. 
This opens up to more advanced operationalizations.  
\item \textbf{Temporal influence}. 
Some of the principles are sensitive to time. 
For instance, regarding familiarity, sometimes a learner may be looking for something new that broadens their horizon, rather than something familiar; at other times, they may need one more final elective for their primary major, which might mean that the preferred resource would have high familiarity. 
In other words, it is still not clear that a given learner should be viewed as having a preference for a certain level of familiarity per se. 
The same learner may, at different times, have different preferences. Similarly, regarding the measure of validity (i.e, recency of the last update), a course on foundational material has been updated many times in the past and does not benefit from recent updates. 
Similar observations apply to other principles. 
\end{itemize}
\item \textbf{Limitations of the ethical constructs}. 
Given that our methodology has been assessed in an offline setting, the real-world validity of the notion of equality that is presented still needs to be shown. 
For instance, it should be investigated whether this notion aligns with learner's notion of fairness, whether learners pay attention to all principles when assessing fairness, and whether this notion of equality relates to fairness and ethical principles, especially if enhancing equality requires to give up a given degree of personalization. 
For instance, it is worthy of exploration of how much it is fair to ask one learner to give up a degree of personalization so that the familiarity target for another learner can be met. These issues will drive future research of equality of recommended learning opportunities.
\item \textbf{Limitations of algorithms}. 
Our study involves eight representative algorithms from four families, but other types of algorithms may benefit from our procedure. 
However, to better focus on the evaluation of our contribution and due to the limitations of the data, we constrained our study to algorithms that are key building blocks of several recommender systems. 
\item \textbf{Limitations of evaluation protocol}. 
Our results cannot prove that the differences in measured metrics translate to better educational outcomes and learners' acceptance. 
Further studies with online evaluation are needed to complement these results. 
However, we conjecture that our results can provide an essential contribution to reach this goal, and offline protocols can be useful to select algorithms prior to an online deployment.
\item \textbf{Limitations of metrics}. 
Among the large number of metrics that can be used for evaluating a recommender system, we focus on consistency and equality to better assess our contribution. 
We also measured NDCG because it maps well to recommendation utility. However, consistency and equality do not consider the position of the courses in a list, which can be important in large-scale recommendation contexts (e.g., online course platforms where tons of courses are provided and having courses at the top of the recommended list is crucial to receive visibility), as an example. 
Our study focused on a more general perspective to reach a broader audience.
\end{itemize}

\section{Conclusions} \label{sec:conclusions}
In this paper, we proposed a novel fairness metric that monitors the equality of learning opportunity across learners in the context of educational recommender systems, according to a novel set of educational principles. 
Then, we explored the learning opportunities provided by ten state-of-the-art recommender systems in a large-scale online course platform, uncovering systematic inequalities across learners. 
To counteract this phenomenon, we proposed a post-processing approach that re-ranks the recommended courses originally returned by an algorithm to maximize the equality of recommended learning opportunities while preserving personalization. 
Finally, we assessed the impact of supporting learners with our approach to accuracy and beyond-accuracy metrics. Based on the results, we can conclude that: 

\begin{enumerate}
\item Recommendation algorithms tend to produce ranked lists with low equality of recommended learning opportunities across learners, especially when the algorithm uses only user-item interactions as training data.
\item Under our definition of the targeted principles, equality of quality, validity, and manageability is guaranteed by recommenders. 
Familiarity, affordability, learnability, and variety exhibit strong deviations over algorithms.
\item Optimizing recommendations for consistency concerning a set of principles leads to higher equality of recommended learning opportunities. 
This is stronger when learner-specific weights are adopted.
\item Controlling learning opportunity results in higher familiarity, variety, and affordability while maintaining stable values for the other principles. 
However, quality may experience small losses after applying our procedure.
\item The impact of our approach on accuracy and consistency depends on the density of the relevance score distribution of the original recommendation algorithm. 
The higher the density, the higher the drop in accuracy is.
\end{enumerate}

Future work will embrace our findings to study to what degree the courses currently attended by learners satisfy the notion of equality, in addition to the courses that are recommended. 
Moreover, a learner-centered approach will be carried out to investigate what learner's notions are for the fairness of the educational resource selection decisions they make, to fine-tune and adjust our original set of principles. 
By extension, learner-specific targeted degrees for each principle will be consequently elicited and applied. 
Thanks to its flexibility, the notions and procedures proposed in this study can fit with a variety of applications within both educational and non-educational contexts. 
There is also room for considering how additional algorithms respond to evaluation and what internal mechanics contribute to achieve higher consistency and equality. 
Finally, as real-world applications should consider whether their recommender systems provide consistent and equal learning opportunities across learners, we believe that there will be an increasing amount of research related to applying our study to the educational industry. 

With this study, we highlighted that our notions and procedures are quite broad and incorporate elements of societal and ethical importance. 
It may be inevitable that, as recommender systems move further into education, it is more and more necessary that they embed strategies like the one we presented.

\section*{Acknowledgments}
This work has been partially supported by the Sardinian Regional Government, POR FESR 2014-2020 - Axis 1, Action 1.1.3, under the project ``SPRINT'' (D.D. n. 2017 REA, 26/11/2018, CUP F21G18000240009), and by the Ag\`encia per a la Competivitat de l'Empresa, ACCI\'O, under the project ``Fair and Explainable Artificial Intelligence (FX-AI)''.

\appendix

\section{Mathematical Notation for Targeted Educational Principles} \label{sec:math-formulation}

In this appendix, we provide the mathematical formulations associated with the educational principles proposed in Section \ref{sec:principles}. They have been adopted for computing to what extent each principle is achieved for each learner throughout the experiments. 

\vspace{1mm} \noindent \textbf{Familiarity}.  Given a course feature $F_1 \in \mathcal{N}$ associated with integer-encoded representation of the category $g \in G$ of a resource, we consider two distributions:
\begin{itemize}
\item $x(g|u)$: the distribution over categories $G$ of the set of resources $I_u$ user $u$ interacted with in the past, defined as $x(g|u) = |I_u^g| / |I_u|$; 
\item $y(g|u)$: the distribution over categories $G$ of the set of learning opportunities $\Tilde{I}_u$ recommended to learner $u$, defined as $y(g|u) = |\Tilde{I}_u^g| / |\Tilde{I}_u|$; 
\end{itemize}

\noindent where $I_u^g$ and $\Tilde{I}_u^g$ represent the set of resources belonging to category $g$ the learner $u$ attended and the recommender system proposed, respectively. Then, we define the familiarity of $\Tilde{I}_u$ for a learner $u$ as the inverse of the Hellinger distance across $x(G|u)$ and $y(G|u)$. Specifically:

\begin{equation}
    c_{\Tilde{I}_u}(1) = 1 - H(x(G|u), y(G|u))
\end{equation}
\vspace{1mm}

\noindent where $c_{\Tilde{I}_u}(1)=1$ if $x_u$ and $y_u$ are perfectly balanced, and the highest familiarity is achieved. Conversely, the minimum familiarity $0$ is achieved when $x_u$ assigns probability 0 to every event that $y_u$ assigns a positive probability (or vice versa). In the latter situation, the recommender suggests resources opposite with respect to the user's most familiar categories. 

\vspace{1mm} \noindent \textbf{Validity}. Given a course feature $F_2 \in \mathcal{N}$ representing the last time a resource has been updated and the opening time of the platform, denoted as $T_o$, we define the validity of a set of learning opportunities $\Tilde{I}_u$ at the current time $T_c$ as follows:

\begin{equation}
    c_{\Tilde{I}_u}(2) =\frac{1}{|\Tilde{I}_u|} \sum_{i \in \Tilde{I}_u} \frac{T_c - f_{2,i}}{T_c - T_o}  
\end{equation}

\noindent where values close to 0 mean that the learning opportunities are obsolete, while values close to 1 correspond to mostly fresh opportunities in $\Tilde{I}_u$.  

\vspace{1mm} \noindent \textbf{Learnability}. Given a course feature $F_3 \in \mathcal{N}$ representing the instructional level of a resource, we define the learnability in $\Tilde{I}_u$ as:

\begin{equation}
    c_{\Tilde{I}_u}(3) = 1 - GINI \left( \frac{|{\Tilde{I}_u}^{f_3}|}{|\Tilde{I}_u|} \, \, \forall \, \, f_3 \in F_3 \right)
\end{equation}

\noindent where $c_{\Tilde{I}_u}(3)$ is the inverse of Gini inequality index over the representations of all the instructional levels in $\Tilde{I}_u$, and $\Tilde{I}_u^{f_3}$ is the set of resources in $\Tilde{I}_u$ with instructional level $f_3$. A value of 0 implies large inequality, while high balance is obtained with values close to 1.

\vspace{1mm} \noindent \textbf{Variety}. Given that each resource $j \in \Tilde{I}_u$ is composed from a set of assets $L_j$ and that the asset type of a resource $j$ is denoted by $T_j=(t_l\in T\,:\, \forall l \in L_j)$, we define the variety of the types in $\Tilde{I}_u$ as:

\begin{equation}
    c_{\Tilde{I}_u}(4) = \frac{1}{|\Tilde{I}_u|} \sum_{i \in \Tilde{I}_u} \frac{|T_i|}{|T|}
\end{equation}

\noindent where values close to 0 mean that the learning opportunities are focused on few asset types, while asset types greatly vary for values close to 1.

\vspace{1mm} \noindent \textbf{Quality}. Given a learner-resource feedback's matrix $R$ and that the platform allows for ratings between $F_{5_{min}}$ to $F_{5_{max}}$, we  define the quality of a set $\Tilde{I}_u$ as follows:

\begin{equation}
    c_{\Tilde{I}_u}(5) =\frac{1}{|\Tilde{I}_u|} \sum_{i \in \Tilde{I}_u} \frac{1}{|U_i|} \sum_{u \in U_i} \frac{F_{5_{max}} - R_{u,i}}{F_{5_{max}} - F_{5_{min}}}  
\end{equation} 

\noindent where values close to 0 mean that the learning opportunities are of low quality, while values close to 1 are measured for high-quality opportunities. 

\vspace{1mm} \noindent \textbf{Manageability}. Given a course feature $F_6 \in \mathcal{N}$ representing the number of enrolled learners in a course and that the platform allows for classes from $F_{6_{min}}$ to $F_{6_{max}}$ learners, we define the manageability in a set of learning opportunities $\Tilde{I}_u$ as follows:

\begin{equation}
    c_{\Tilde{I}_u}(6) = 1 - \frac{1}{|\Tilde{I}_u|} \sum_{i \in \Tilde{I}_u} \frac{F_{6_{max}} - f_{6,i}}{F_{6_{max}} - F_{6_{min}}}  
\end{equation}

\noindent where values close to 1 mean that the learning opportunities include  small classes, while values close to 0 refer to large classes.

\vspace{1mm} \noindent \textbf{Affordability}. Given a course feature $F_7 \in \mathcal{R}$ representing the course enrollment fee and that the platform allows for courses with a cost between $F_{7_{min}}$ and $F_{7_{max}}$, we define the affordability of a set of learning opportunities $\Tilde{I}_u$ as follows:

\begin{equation}
    c_{\Tilde{I}_u}(7) = 1 - \frac{1}{|\Tilde{I}_u|} \sum_{i \in \Tilde{I}_u} \frac{F_{7_{max}} - f_{7,i}}{F_{7_{max}} - F_{7_{min}}}  
\end{equation}

\noindent where values close to 0 mean that the learning opportunities are highly expensive, while values close to 1 correspond to free-of-charge learning opportunities in $\Tilde{I}_u$.  

\section{Optimality Proof for the Proposed Post-Processing Approach} \label{sec:proof-opt}
The combinatorial maximization problem in Eq.~\eqref{eq:opt_prob} may be efficiently approximated with a greedy approach with $(1-1/e)$ optimality if the objective function of the maximization is submodular. This statement has been proved in the following demonstration. 

\begin{theorem}\label{th:submodular} Let $Consistency(p,q|w)=1-w\|p-q\|_{|C|}^{|C|}$, with $|C|F>0$ and $w_i \ge 0 \; \forall i \in \{0, \cdots, |C|\} $, then for any $\lambda\in[0,1]$ the function in~\eqref{eq:opt_prob}, 
\[
    f(\mathcal I|w) = (1-\lambda) \sum_{i\in \mathcal I}\widetilde R_{ui}+\lambda\,Consistency(p_u,{q}_{\mathcal I}|w),
\]
is submodular. \hfill$\circ$
\end{theorem}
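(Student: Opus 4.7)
The plan is to exploit the additive structure of $f$ and the fact that nonnegative linear combinations of submodular set functions are submodular. Writing
\[
f(\mathcal I|w) \;=\; (1-\lambda)\underbrace{\sum_{i\in\mathcal I}\widetilde R_{ui}}_{=:f_1(\mathcal I)} \;+\; \lambda\underbrace{Consistency(p_u,q_{\mathcal I}|w)}_{=:f_2(\mathcal I)},
\]
with $\lambda\in[0,1]$, it suffices to prove that $f_1$ and $f_2$ are each submodular. The first piece, $f_1$, is a sum of item-indexed constants $\widetilde R_{ui}$ with indicator coefficients $\mathbbm{1}_{i\in\mathcal I}$, hence is modular (and therefore submodular).

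For $f_2$, since $Consistency(p_u,q_{\mathcal I}|w)=1-\sum_{m=1}^{|C|}w_m\,|p_u(m)-q_{\mathcal I}(m)|^{|C|}$ with $w_m\ge 0$, showing submodularity of $f_2$ is equivalent to showing supermodularity of the distance functional
\[
D(\mathcal I) \;=\; \sum_{m=1}^{|C|} w_m\,\bigl|p_u(m)-q_{\mathcal I}(m)\bigr|^{|C|}.
\]
By nonnegative linearity it suffices to verify that each coordinate map $\mathcal I\mapsto\bigl|p_u(m)-q_{\mathcal I}(m)\bigr|^{|C|}$ is supermodular, which I would do principle by principle. For the principles whose $q_{\mathcal I}(m)$ is a simple average of an item-level score $g_m(\cdot)$ over $\mathcal I$ (validity, variety, quality, manageability, affordability), the marginal impact of adding an item $x$ scales as $\tfrac{g_m(x)-q_{\mathcal I}(m)}{|\mathcal I|+1}$ and shrinks with $|\mathcal I|$; combined with convexity of $t\mapsto|p_u(m)-t|^{|C|}$ via a Jensen-type inequality, the diminishing-returns (supermodularity) condition for $D$ follows after a case split on which side of $p_u(m)$ the two running averages lie.

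The hard part of the argument will be the two principles that are not plain averages, namely learnability (a complemented Gini index over level frequencies) and familiarity (a Hellinger distance between empirical category distributions). For these, the map $\mathcal I\mapsto q_{\mathcal I}(m)$ is no longer an affine function of the indicator vector, so the supermodularity of $\bigl|p_u(m)-q_{\mathcal I}(m)\bigr|^{|C|}$ cannot be read off directly from averaging. My plan is to reduce both cases to a common mechanism: express $q_{\mathcal I}(m)$ in terms of the category-count vector, exploit concavity of the Gini complement and the Hellinger affinity in the normalized count simplex, and then compose with the convex distance $|p_u(m)-\cdot|^{|C|}$. Combining the per-principle supermodular bounds, summing against the nonnegative weights $w_m$, and reassembling with the modular $f_1$ yields submodularity of $f$, and hence the $(1-1/e)$-optimality of the greedy procedure used in Section~\ref{sec:eq-control}.
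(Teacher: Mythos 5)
Your handling of $f_1$ and the closing step (a nonnegative combination of submodular functions is submodular) coincide with the paper's proof. Where you part ways is $f_2$: the paper disposes of it in one line by asserting that $\sum_m w_m\left|[p_u]_m-[q_{\mathcal I}]_m\right|^{|C|}$ is modular ``because it is a sum of positive quantities,'' whereas you correctly recognize that each summand depends on $\mathcal I$ through the nonlinear statistic $q_{\mathcal I}(m)$, so positivity of the summands settles nothing, and you instead set out to prove supermodularity of the weighted distance $D(\mathcal I)$ coordinate by coordinate. That diagnosis of where the real work lies is the strongest part of your proposal.

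The gap is that the coordinate-wise claims are never established, and the mechanism you sketch does not deliver them. First, there is a sign confusion: ``diminishing returns'' is the submodularity condition, not supermodularity; if, as you argue, the marginal impact of adding an item shrinks with $|\mathcal I|$, that points toward $D$ being submodular, which would make $f_2 = 1-D$ supermodular --- the opposite of what you need. Second, even the ``easy'' averaging coordinates fail: take $p_u(m)=0$ so that $\left|p_u(m)-q_{\mathcal I}(m)\right|^{|C|}=q_{\mathcal I}(m)^{|C|}$ with $q_{\mathcal I}(m)$ a plain average of item scores $g_m\in[0,1]$. With $A=\{a_1\}$, $B=\{a_1,a_2\}$, $g_m(a_1)=g_m(a_2)=0$, and a new item $x$ with $g_m(x)=1$, the marginal gain of $x$ is $(1/2)^{|C|}$ at $A$ but only $(1/3)^{|C|}$ at the superset $B$, violating the increasing-differences property that supermodularity requires; flipping $p_u(m)$ to $1$ breaks submodularity on the same sets. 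So these coordinate maps are in general neither supermodular nor submodular, and no Jensen-type argument will close the case --- let alone the Gini and Hellinger coordinates you explicitly defer. Be aware that the paper's own one-line argument does not survive this scrutiny either; a rigorous version of the theorem would need to restrict to principles for which $q_{\mathcal I}$ is an affine function of the indicator vector of $\mathcal I$ (making $f_2$ genuinely modular), or settle for a weaker guarantee than $(1-1/e)$.
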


\begin{proof}
    First, since $\widetilde R_{ui}>0$, it follows that $f_1(\mathcal I|w)=\sum_{i\in \mathcal I}\widetilde R_{ui}$ is a modular function (i.e., hence, also submodular), because it is a sum of positive quantities. 
    Second, 
    \begin{equation*} 
    \begin{aligned}
    f_2(\mathcal{I}|w) & = Consistency(p_u,{q}_{\mathcal I})  = w\|p_u-{q}_{\mathcal I}\|_{|C|}^{|C|} \\
    & = \sum_{i = 1}^k \; w_i \; \left|[p_u]_i-[{q}_{\mathcal I}]_i\right|^{|C|} 
     = \sum_{i = 1}^k x_i,
    \end{aligned}
    \end{equation*}
    where $x_i=w_i \; \left|[p_u]_i-[{q}_{\mathcal I}]_i\right|^{|C|}>0$. Again, $f_2$ is modular because it is a sum of positive quantities. 
    Since $f(\mathcal{I}|w)=(1-\lambda) f_1(\mathcal {I}|w)+\lambda f_2(\mathcal{I}|w)$, and the convex combination of submodular functions is submodular, $f$ is submodular.
 \end{proof}
 
\bibliographystyle{plainnat}
\bibliography{bibliography}

\begin{thebibliography}{97}
\providecommand{\natexlab}[1]{#1}
\providecommand{\url}[1]{\texttt{#1}}
\expandafter\ifx\csname urlstyle\endcsname\relax
  \providecommand{\doi}[1]{doi: #1}\else
  \providecommand{\doi}{doi: \begingroup \urlstyle{rm}\Url}\fi

\bibitem[Abdi et~al.(2020)Abdi, Khosravi, Sadiq, and
  Gasevic]{abdi2020complementing}
Solmaz Abdi, Hassan Khosravi, Shazia Sadiq, and Dragan Gasevic.
\newblock Complementing educational recommender systems with open learner
  models.
\newblock In \emph{Proceedings of the Tenth International Conference on
  Learning Analytics \& Knowledge}, pages 360--365, 2020.

\bibitem[Abdollahpouri et~al.(2020)Abdollahpouri, Adomavicius, Burke, Jannach,
  Kamishima, Krasnodebski, and Pizzato]{abdollahpouri2020multistakeholder}
Himan Abdollahpouri, Gediminas Adomavicius, Robin Burke, Dietmar Jannach,
  Toshihiro Kamishima, Jan Krasnodebski, and Luiz Pizzato.
\newblock Multistakeholder recommendation: Survey and research directions.
\newblock \emph{User Modeling and User-Adapted Interaction}, 30\penalty0
  (1):\penalty0 127--158, 2020.

\bibitem[Ai et~al.(2019)Ai, Chen, Guo, Zhao, Wang, Fu, and Wang]{ai2019concept}
Fangzhe Ai, Yishuai Chen, Yuchun Guo, Yongxiang Zhao, Zhenzhu Wang, Guowei Fu,
  and Guangyan Wang.
\newblock Concept-aware deep knowledge tracing and exercise recommendation in
  an online learning system.
\newblock \emph{International Educational Data Mining Society}, 2019.

\bibitem[Barocas et~al.(2017)Barocas, Hardt, and
  Narayanan]{barocas2017fairness}
Solon Barocas, Moritz Hardt, and Arvind Narayanan.
\newblock Fairness in machine learning.
\newblock \emph{In 2017 Thirty-first Conference on Neural Information
  Processing Systems, NIPS}, 2017.

\bibitem[Beattie and Thiele(2016)]{beattie2016connecting}
Irenee~R Beattie and Megan Thiele.
\newblock Connecting in class? college class size and inequality in academic
  social capital.
\newblock \emph{The Journal of Higher Education}, 87\penalty0 (3):\penalty0
  332--362, 2016.

\bibitem[Beutel et~al.(2019)Beutel, Chen, Doshi, Qian, Wei, Wu, Heldt, Zhao,
  Hong, Chi, and Goodrow]{BeutelCDQWWHZHC19}
Alex Beutel, Jilin Chen, Tulsee Doshi, Hai Qian, Li~Wei, Yi~Wu, Lukasz Heldt,
  Zhe Zhao, Lichan Hong, Ed~H. Chi, and Cristos Goodrow.
\newblock Fairness in recommendation ranking through pairwise comparisons.
\newblock In \emph{International Conference on Knowledge Discovery {\&} Data
  Mining, {KDD}}, pages 2212--2220. {ACM}, 2019.

\bibitem[Bhumichitr et~al.(2017)Bhumichitr, Channarukul, Saejiem,
  Jiamthapthaksin, and Nongpong]{bhumichitr2017recommender}
Kiratijuta Bhumichitr, Songsak Channarukul, Nattachai Saejiem, Rachsuda
  Jiamthapthaksin, and Kwankamol Nongpong.
\newblock Recommender systems for university elective course recommendation.
\newblock In \emph{2017 14th International Joint Conference on Computer Science
  and Software Engineering (JCSSE)}, pages 1--5. IEEE, 2017.

\bibitem[Biega et~al.(2018)Biega, Gummadi, and Weikum]{biega2018equity}
Asia~J Biega, Krishna~P Gummadi, and Gerhard Weikum.
\newblock Equity of attention: Amortizing individual fairness in rankings.
\newblock In \emph{41st International ACM Conference on Research \& Development
  in Information Retrieval, SIGIR}, pages 405--414, 2018.

\bibitem[Boratto et~al.(2019)Boratto, Fenu, and Marras]{boratto2019effect}
Ludovico Boratto, Gianni Fenu, and Mirko Marras.
\newblock The effect of algorithmic bias on recommender systems for massive
  open online courses.
\newblock In \emph{European Conference on Information Retrieval, ECIR}, pages
  457--472. Springer, 2019.

\bibitem[Boxuan et~al.(2020)Boxuan, Taniguchi, and Konomi]{boxuancourse}
MA~Boxuan, Yuta Taniguchi, and Shin’ichi Konomi.
\newblock Course recommendation for university environments.
\newblock 2020.

\bibitem[Bridges et~al.(2018)Bridges, Jared, Weissmann, Montanez-Garay,
  Spencer, and Brinton]{bridges2018course}
Connor Bridges, James Jared, Joshua Weissmann, Astrid Montanez-Garay, Jonathan
  Spencer, and Christopher~G Brinton.
\newblock Course recommendation as graphical analysis.
\newblock In \emph{2018 52nd Annual Conference on Information Sciences and
  Systems, CISS}, pages 1--6. IEEE, 2018.

\bibitem[Buchholz et~al.(2016)Buchholz, Skopek, Zielonka, Ditton, Wohlkinger,
  and Schier]{buchholz2016secondary}
Sandra Buchholz, Jan Skopek, Markus Zielonka, Harmut Ditton, Florian
  Wohlkinger, and Antonia Schier.
\newblock Secondary school differentiation and inequality of educational
  opportunity in germany.
\newblock In \emph{Models of secondary education and social inequality}. Edward
  Elgar Publishing, 2016.

\bibitem[Bulathwela et~al.(2019)Bulathwela, Yilmaz, and
  Shawe-Taylor]{bulathwela2019towards}
Sahan Bulathwela, Emine Yilmaz, and John Shawe-Taylor.
\newblock Towards automatic, scalable quality assurance in open education.
\newblock In \emph{Workshop on AI and the United Nations SDGs at International
  Joint Conference on Artificial Intelligence}, 2019.

\bibitem[Bulger(2016)]{bulger2016personalized}
Monica Bulger.
\newblock Personalized learning: The conversations we’re not having.
\newblock \emph{Data and Society}, 22\penalty0 (1), 2016.

\bibitem[Byun and Park(2017)]{byun2017different}
Soo-yong Byun and Hyunjoon Park.
\newblock When different types of education matter: Effectively maintained
  inequality of educational opportunity in korea.
\newblock \emph{American Behavioral Scientist}, 61\penalty0 (1):\penalty0
  94--113, 2017.

\bibitem[Campos et~al.(2014)Campos, D{\'\i}ez, and Cantador]{campos2014time}
Pedro~G Campos, Fernando D{\'\i}ez, and Iv{\'a}n Cantador.
\newblock Time-aware recommender systems: a comprehensive survey and analysis
  of existing evaluation protocols.
\newblock \emph{User Modeling and User-Adapted Interaction}, 24\penalty0
  (1-2):\penalty0 67--119, 2014.

\bibitem[Carbonell and Goldstein(1998)]{carbonell1998use}
J.~Carbonell and J.~Goldstein.
\newblock The use of mmr, diversity-based reranking for reordering documents
  and producing summaries.
\newblock In \emph{ACM Conference on Research and Development in Information
  Retrieval, SIGIR}, pages 335--336, 1998.

\bibitem[Chan et~al.(2006)Chan, Roschelle, Hsi, Kinshuk, Sharples, Brown,
  Patton, Cherniavsky, Pea, Norris, Soloway, Balacheff, Scardamalia,
  Dillenbourg, Looi, Milrad, and Hoppe]{ChanRHKSBPCPNSBSDLMH06}
Tak{-}Wai Chan, Jeremy Roschelle, Sherry Hsi, Kinshuk, Mike Sharples, Tom
  Brown, Charles Patton, John~C. Cherniavsky, Roy~D. Pea, Cathie Norris, Elliot
  Soloway, Nicolas Balacheff, Marlene Scardamalia, Pierre Dillenbourg,
  Chee{-}Kit Looi, Marcelo Milrad, and Heinz~Ulrich Hoppe.
\newblock One-to-one technology-enhanced learning: an opportunity for global
  research collaboration.
\newblock \emph{Res. Pract. Technol. Enhanc. Learn.}, 1\penalty0 (1):\penalty0
  3--29, 2006.
\newblock \doi{10.1142/S1793206806000032}.
\newblock URL \url{https://doi.org/10.1142/S1793206806000032}.

\bibitem[Chanaa and El~Faddouli(2020)]{chanaa2020predicting}
Abdessamad Chanaa and Nour-Eddine El~Faddouli.
\newblock Predicting learners need for recommendation using dynamic graph-based
  knowledge tracing.
\newblock In \emph{International Conference on Artificial Intelligence in
  Education}, pages 49--53. Springer, 2020.

\bibitem[Chau et~al.(2018)Chau, Barria-Pineda, and
  Brusilovsky]{chau2018learning}
Hung Chau, Jordan Barria-Pineda, and Peter Brusilovsky.
\newblock Learning content recommender system for instructors of programming
  courses.
\newblock In \emph{International Conference on Artificial Intelligence in
  Education}, pages 47--51. Springer, 2018.

\bibitem[Chen and Demmans~Epp(2020)]{chen2020csclrec}
Zhaorui Chen and C~Demmans~Epp.
\newblock Csclrec: Personalized recommendation of forum posts to support
  socio-collaborative learning.
\newblock In \emph{Thirteenth International Conference on Educational Data
  Mining (EDM 2020)}, pages 364--373, 2020.

\bibitem[Conaway and Zorn-Arnold(2016)]{conaway2016keys}
Wendy Conaway and Barbara Zorn-Arnold.
\newblock The keys to online learning for adults.
\newblock \emph{Distance Learning Issue}, 13:\penalty0 1, 2016.

\bibitem[Cooper et~al.(2014)Cooper, Lee, Radzik, and Siantos]{cooper2014random}
Colin Cooper, Sang~Hyuk Lee, Tomasz Radzik, and Yiannis Siantos.
\newblock Random walks in recommender systems: exact computation and
  simulations.
\newblock In \emph{23rd International Conference on World Wide Web, WWW}, pages
  811--816, 2014.

\bibitem[Dai et~al.(2016)Dai, Asano, and Yoshikawa]{dai2016course}
Yiling Dai, Yasuhito Asano, and Masatoshi Yoshikawa.
\newblock Course content analysis: An initiative step toward learning object
  recommendation systems for mooc learners.
\newblock \emph{International Educational Data Mining Society}, 2016.

\bibitem[Darwin(2017)]{darwin2017contemporary}
Stephen Darwin.
\newblock What contemporary work are student ratings actually doing in higher
  education?
\newblock \emph{Studies in Educational Evaluation}, 54:\penalty0 13--21, 2017.

\bibitem[Dess{\`\i} et~al.(2018)Dess{\`\i}, Fenu, Marras, and
  Reforgiato~Recupero]{dessi2018coco}
D.~Dess{\`\i}, G.~Fenu, M.~Marras, and D.~Reforgiato~Recupero.
\newblock Coco: Semantic-enriched collection of online courses at scale with
  experimental use cases.
\newblock In \emph{World Conference on Information Systems and Technologies,
  WorldCist}, pages 1386--1396. Springer, 2018.

\bibitem[Doroudi and Brunskill(2019)]{doroudi2019fairer}
Shayan Doroudi and Emma Brunskill.
\newblock Fairer but not fair enough on the equitability of knowledge tracing.
\newblock In \emph{Proceedings of the 9th International Conference on Learning
  Analytics \& Knowledge}, pages 335--339, 2019.

\bibitem[Drachsler et~al.(2015)Drachsler, Hoel, Scheffel, Kismih{\'o}k, Berg,
  Ferguson, Chen, Cooper, and Manderveld]{drachsler2015ethical}
Hendrik Drachsler, Tore Hoel, Maren Scheffel, G{\'a}bor Kismih{\'o}k, Alan
  Berg, Rebecca Ferguson, Weiqin Chen, Adam Cooper, and Jocelyn Manderveld.
\newblock Ethical and privacy issues in the application of learning analytics.
\newblock In \emph{Proceedings of the Fifth International Conference on
  Learning Analytics and Knowledge}, pages 390--391, 2015.

\bibitem[Druzhinina et~al.(2018)Druzhinina, Belkova, Donchenko, Liu, and
  Morozova]{druzhinina2018curriculum}
Maria Druzhinina, Natalia Belkova, Elena Donchenko, Feng Liu, and Olga
  Morozova.
\newblock Curriculum design in professional education: Theory and practice.
\newblock In \emph{SHS Web of Conferences}, volume~50, page 01046. EDP
  Sciences, 2018.

\bibitem[Eagle et~al.(2018)Eagle, Corbett, Stamper, and
  Mclaren]{eagle2018predicting}
Michael Eagle, Albert Corbett, John Stamper, and Bruce Mclaren.
\newblock Predicting individualized learner models across tutor lessons.
\newblock \emph{International Educational Data Mining Society}, 2018.

\bibitem[Esteban et~al.(2018)Esteban, Zafra, and Romero]{esteban2018hybrid}
Aurora Esteban, Amelia Zafra, and Crist{\'o}bal Romero.
\newblock A hybrid multi-criteria approach using a genetic algorithm for
  recommending courses to university students.
\newblock \emph{International Educational Data Mining Society}, 2018.

\bibitem[Feng et~al.(2019)Feng, Tang, and Liu]{feng2019understanding}
Wenzheng Feng, Jie Tang, and Tracy~Xiao Liu.
\newblock Understanding dropouts in moocs.
\newblock In \emph{International Conference on Artificial Intelligence, AAAI},
  volume~33, pages 517--524, 2019.

\bibitem[Fern{\'a}ndez-Mellizo and
  Mart{\'\i}nez-Garc{\'\i}a(2017)]{fernandez2017inequality}
Mar{\'\i}a Fern{\'a}ndez-Mellizo and Jos{\'e}~Saturnino
  Mart{\'\i}nez-Garc{\'\i}a.
\newblock Inequality of educational opportunities: School failure trends in
  spain (1977--2012).
\newblock \emph{International Studies in Sociology of Education}, 26\penalty0
  (3):\penalty0 267--287, 2017.

\bibitem[Fujihara and Ishida(2016)]{fujihara2016absolute}
Sho Fujihara and Hiroshi Ishida.
\newblock The absolute and relative values of education and the inequality of
  educational opportunity: Trends in access to education in postwar japan.
\newblock \emph{Research in Social Stratification and Mobility}, 43:\penalty0
  25--37, 2016.

\bibitem[Girvan(2018)]{girvan2018virtual}
Carina Girvan.
\newblock What is a virtual world? definition and classification.
\newblock \emph{Educational Technology Research and Development}, 66\penalty0
  (5):\penalty0 1087--1100, 2018.

\bibitem[Golley and Kong(2018)]{golley2018inequality}
Jane Golley and Sherry~Tao Kong.
\newblock Inequality of opportunity in china's educational outcomes.
\newblock \emph{China Economic Review}, 51:\penalty0 116--128, 2018.

\bibitem[G{\'o}mez-Rey et~al.(2016)G{\'o}mez-Rey, Barbera, and
  Fern{\'a}ndez-Navarro]{gomez2016measuring}
Pilar G{\'o}mez-Rey, Elena Barbera, and Francisco Fern{\'a}ndez-Navarro.
\newblock Measuring teachers and learners’ perceptions of the quality of
  their online learning experience.
\newblock \emph{Distance Education}, 37\penalty0 (2):\penalty0 146--163, 2016.

\bibitem[Green and Hu(2018)]{green2018myth}
Ben Green and Lily Hu.
\newblock The myth in the methodology: Towards a recontextualization of
  fairness in machine learning.
\newblock In \emph{Proceedings of the machine learning: the debates workshop},
  2018.

\bibitem[Hansen and Reich(2015)]{hansen2015democratizing}
John~D Hansen and Justin Reich.
\newblock Democratizing education? examining access and usage patterns in
  massive open online courses.
\newblock \emph{Science}, 350\penalty0 (6265):\penalty0 1245--1248, 2015.

\bibitem[He et~al.(2017)He, Liao, Zhang, Nie, Hu, and Chua]{he2017neural}
X.~He, L.~Liao, H.~Zhang, L.~Nie, X.~Hu, and T.~Chua.
\newblock Neural collaborative filtering.
\newblock In \emph{26th International Conference on World Wide Web, WWW}, pages
  173--182, 2017.

\bibitem[Herold(2017)]{herold2017case}
Benjamin Herold.
\newblock The case (s) against personalized learning.
\newblock \emph{Education Week}, 37\penalty0 (12):\penalty0 4--5, 2017.

\bibitem[Holmes et~al.(2019)Holmes, Iniesto, Sharples, and
  Scanlon]{holmes2019ethics}
Wayne Holmes, Francisco Iniesto, Mike Sharples, and Eileen Scanlon.
\newblock Ethics in aied: Who cares? an ec-tel workshop.
\newblock 2019.

\bibitem[Holstein and Doroudi(2019)]{holstein2019fairness}
Kenneth Holstein and Shayan Doroudi.
\newblock Fairness and equity in learning analytics systems (fairlak).
\newblock In \emph{Companion Proceedings of the Ninth International Learning
  Analytics \& Knowledge Conference (LAK 2019)}, 2019.

\bibitem[Holstein et~al.(2019{\natexlab{a}})Holstein, McLaren, and
  Aleven]{holstein2019designing}
Kenneth Holstein, Bruce~M McLaren, and Vincent Aleven.
\newblock Designing for complementarity: Teacher and student needs for
  orchestration support in ai-enhanced classrooms.
\newblock In \emph{International Conference on Artificial Intelligence in
  Education}, pages 157--171. Springer, 2019{\natexlab{a}}.

\bibitem[Holstein et~al.(2019{\natexlab{b}})Holstein, Wortman~Vaughan,
  Daum{\'e}~III, Dudik, and Wallach]{holstein2019improving}
Kenneth Holstein, Jennifer Wortman~Vaughan, Hal Daum{\'e}~III, Miro Dudik, and
  Hanna Wallach.
\newblock Improving fairness in machine learning systems: What do industry
  practitioners need?
\newblock In \emph{Proceedings of the 2019 CHI Conference on Human Factors in
  Computing Systems}, pages 1--16, 2019{\natexlab{b}}.

\bibitem[Hu and Rangwala(2020)]{hu2020towards}
Qian Hu and Huzefa Rangwala.
\newblock Towards fair educational data mining: A case study on detecting
  at-risk students.
\newblock In \emph{Proceedings of The 13th International Conference on
  Educational Data Mining (EDM 2020)}, pages 431--437, 2020.

\bibitem[Jacobsen and Spanakis(2019)]{jacobsen2019s}
Anik Jacobsen and Gerasimos Spanakis.
\newblock It's a match! reciprocal recommender system for graduating students
  and jobs.
\newblock ERIC, 2019.

\bibitem[Joyner et~al.(2016)Joyner, Goel, and Isbell]{joyner2016unexpected}
David~A Joyner, Ashok~K Goel, and Charles Isbell.
\newblock The unexpected pedagogical benefits of making higher education
  accessible.
\newblock In \emph{Proceedings of the Third (2016) ACM Conference on Learning@
  Scale}, pages 117--120, 2016.

\bibitem[Khanal et~al.(2019)Khanal, Prasad, Alsadoon, and
  Maag]{khanal2019systematic}
Shristi~Shakya Khanal, PWC Prasad, Abeer Alsadoon, and Angelika Maag.
\newblock A systematic review: machine learning based recommendation systems
  for e-learning.
\newblock \emph{Education and Information Technologies}, pages 1--30, 2019.

\bibitem[Kulkarni et~al.(2020)Kulkarni, Rai, and Kale]{kulkarni2020recommender}
Pradnya~V Kulkarni, Sunil Rai, and Rohini Kale.
\newblock Recommender system in elearning: A survey.
\newblock In \emph{International Conference on Computational Science and
  Applications, ICCSA}, pages 119--126. Springer, 2020.

\bibitem[Kumar et~al.(2019)Kumar, Martin, Budhrani, and
  Ritzhaupt]{kumar2019award}
Swapna Kumar, Florence Martin, Kiran Budhrani, and Albert Ritzhaupt.
\newblock Award-winning faculty online teaching practices: Elements of
  award-winning courses.
\newblock \emph{Online Learning}, 23\penalty0 (4):\penalty0 160--180, 2019.

\bibitem[Labarthe et~al.(2016)Labarthe, Bouchet, Bachelet, and
  Yacef]{labarthe2016does}
Hugues Labarthe, Fran{\c{c}}ois Bouchet, R{\'e}mi Bachelet, and Kalina Yacef.
\newblock Does a peer recommender foster students' engagement in moocs?.
\newblock \emph{International Educational Data Mining Society}, 2016.

\bibitem[Lahoti et~al.(2019{\natexlab{a}})Lahoti, Gummadi, and
  Weikum]{lahoti2019ifair}
Preethi Lahoti, Krishna~P Gummadi, and Gerhard Weikum.
\newblock ifair: Learning individually fair data representations for
  algorithmic decision making.
\newblock In \emph{IEEE 35th International Conference on Data Engineering,
  ICDE}, pages 1334--1345. IEEE, 2019{\natexlab{a}}.

\bibitem[Lahoti et~al.(2019{\natexlab{b}})Lahoti, Gummadi, and
  Weikum]{lahoti2019operationalizing}
Preethi Lahoti, Krishna~P Gummadi, and Gerhard Weikum.
\newblock Operationalizing individual fairness with pairwise fair
  representations.
\newblock \emph{The VLDB Endowment}, 13\penalty0 (4):\penalty0 506--518,
  2019{\natexlab{b}}.

\bibitem[Lin et~al.(2020)Lin, Sun, Shen, Pritchard, Cui, Xu, Li, Beydoun, and
  Chen]{lin2020deep}
Jiayin Lin, Geng Sun, Jun Shen, David Pritchard, Tingru Cui, Dongming Xu,
  Li~Li, Ghassan Beydoun, and Shiping Chen.
\newblock Deep-cross-attention recommendation model for knowledge sharing micro
  learning service.
\newblock In \emph{International Conference on Artificial Intelligence in
  Education}, pages 168--173. Springer, 2020.

\bibitem[Lops et~al.(2011)Lops, De~Gemmis, and Semeraro]{lops2011content}
Pasquale Lops, Marco De~Gemmis, and Giovanni Semeraro.
\newblock Content-based recommender systems: State of the art and trends.
\newblock In \emph{Recommender systems handbook}, pages 73--105. Springer,
  2011.

\bibitem[Lowenthal et~al.(2019)Lowenthal, Nyland, Jung, Dunlap, and
  Kepka]{lowenthal2019does}
Patrick~R Lowenthal, Rob Nyland, Eulho Jung, Joanna~C Dunlap, and Jennifer
  Kepka.
\newblock Does class size matter? an exploration into faculty perceptions of
  teaching high-enrollment online courses.
\newblock \emph{American Journal of Distance Education}, 33\penalty0
  (3):\penalty0 152--168, 2019.

\bibitem[Mao(2019)]{mao2019one}
Ye~Mao.
\newblock One minute is enough: Early prediction of student success and
  event-level difficulty during novice programming tasks.
\newblock In \emph{In: Proceedings of the 12th International Conference on
  Educational Data Mining (EDM 2019)}, 2019.

\bibitem[Mayfield et~al.(2019)Mayfield, Madaio, Prabhumoye, Gerritsen,
  McLaughlin, Dixon-Rom{\'a}n, and Black]{mayfield2019equity}
Elijah Mayfield, Michael Madaio, Shrimai Prabhumoye, David Gerritsen, Brittany
  McLaughlin, Ezekiel Dixon-Rom{\'a}n, and Alan~W Black.
\newblock Equity beyond bias in language technologies for education.
\newblock In \emph{Proceedings of the Fourteenth Workshop on Innovative Use of
  NLP for Building Educational Applications}, pages 444--460, 2019.

\bibitem[Meyer(2016)]{meyer2016should}
Kirsten Meyer.
\newblock Why should we demand equality of educational opportunity?
\newblock \emph{Theory and Research in Education}, 14\penalty0 (3):\penalty0
  333--347, 2016.

\bibitem[Mi and Faltings(2017)]{mi2017adaptive}
Fei Mi and Boi Faltings.
\newblock Adaptive sequential recommendation for discussion forums on moocs
  using context trees.
\newblock In \emph{Proceedings of the 10th international conference on
  educational data mining}, number CONF, 2017.

\bibitem[Mohapatra and Mohanty(2017)]{mohapatra2017adopting}
Sanjay Mohapatra and Rituparna Mohanty.
\newblock Adopting moocs for afforable quality education.
\newblock \emph{Education and information technologies}, 22\penalty0
  (5):\penalty0 2027--2053, 2017.

\bibitem[Morsomme and Alferez(2019)]{morsomme2019content}
Rapha{\"e}l Morsomme and Sofia~Vazquez Alferez.
\newblock Content-based course recommender system for liberal arts education.
\newblock \emph{International Educational Data Mining Society}, 2019.

\bibitem[Nakagawa and Freckleton(2008)]{nakagawa2008missing}
Shinichi Nakagawa and Robert~P Freckleton.
\newblock Missing inaction: the dangers of ignoring missing data.
\newblock \emph{Trends in ecology \& evolution}, 23\penalty0 (11):\penalty0
  592--596, 2008.

\bibitem[Ocumpaugh et~al.(2014)Ocumpaugh, Baker, Gowda, Heffernan, and
  Heffernan]{ocumpaugh2014population}
Jaclyn Ocumpaugh, Ryan Baker, Sujith Gowda, Neil Heffernan, and Cristina
  Heffernan.
\newblock Population validity for educational data mining models: A case study
  in affect detection.
\newblock \emph{British Journal of Educational Technology}, 45\penalty0
  (3):\penalty0 487--501, 2014.

\bibitem[Papathoma et~al.(2020)Papathoma, Ferguson, Iniesto, Rets, Vogiatzis,
  and Murphy]{papathoma2020guidance}
Tina Papathoma, Rebecca Ferguson, Francisco Iniesto, Irina Rets, Dimitrios
  Vogiatzis, and Victoria Murphy.
\newblock Guidance on how learning at scale can be made more accessible.
\newblock In \emph{Proceedings of the Seventh ACM Conference on Learning@
  Scale}, pages 289--292, 2020.

\bibitem[Pardos and Jiang(2020)]{pardos2019designing}
Zachary~A. Pardos and Weijie Jiang.
\newblock Designing for serendipity in a university course recommendation
  system.
\newblock In \emph{International Conference on Learning Analytics \& Knowledge,
  LAK}, page 350–359, 2020.
\newblock \doi{10.1145/3375462.3375524}.

\bibitem[Paudel et~al.(2016)Paudel, Christoffel, Newell, and
  Bernstein]{paudel2016updatable}
Bibek Paudel, Fabian Christoffel, Chris Newell, and Abraham Bernstein.
\newblock Updatable, accurate, diverse, and scalable recommendations for
  interactive applications.
\newblock \emph{ACM Transactions on Interactive Intelligent Systems (TiiS)},
  7\penalty0 (1):\penalty0 1--34, 2016.

\bibitem[Pinkwart(2016)]{pinkwart2016another}
Niels Pinkwart.
\newblock Another 25 years of aied? challenges and opportunities for
  intelligent educational technologies of the future.
\newblock \emph{International journal of artificial intelligence in education},
  26\penalty0 (2):\penalty0 771--783, 2016.

\bibitem[Polyzou et~al.(2019)Polyzou, Nikolakopoulos, and
  Karypis]{polyzou2019scholars}
Agoritsa Polyzou, Athanasios~N Nikolakopoulos, and George Karypis.
\newblock Scholars walk: A markov chain framework for course recommendation.
\newblock \emph{International Educational Data Mining Society}, 2019.

\bibitem[Porayska-Pomsta and Rajendran(2019)]{porayska2019accountability}
Ka{\'s}ka Porayska-Pomsta and Gnanathusharan Rajendran.
\newblock Accountability in human and artificial intelligence decision-making
  as the basis for diversity and educational inclusion.
\newblock In \emph{Artificial Intelligence and Inclusive Education}, pages
  39--59. Springer, 2019.

\bibitem[Potey and Sinha(2017)]{potey2017personalization}
Madhuri~A Potey and Pradeep~K Sinha.
\newblock Personalization approaches for ranking: A review and research
  experiments.
\newblock \emph{International Journal of Information Retrieval Research,
  IJIRR}, 7\penalty0 (1):\penalty0 1--16, 2017.

\bibitem[Potts et~al.(2018)Potts, Khosravi, Reidsema, Bakharia, Belonogoff, and
  Fleming]{potts2018reciprocal}
Boyd~A Potts, Hassan Khosravi, Carl Reidsema, Aneesha Bakharia, Mark
  Belonogoff, and Melanie Fleming.
\newblock Reciprocal peer recommendation for learning purposes.
\newblock In \emph{Proceedings of the 8th International Conference on Learning
  Analytics and Knowledge}, pages 226--235, 2018.

\bibitem[Qiu et~al.(2016)Qiu, Tang, Liu, Gong, Zhang, Zhang, and
  Xue]{qiu2016modeling}
Jiezhong Qiu, Jie Tang, Tracy~Xiao Liu, Jie Gong, Chenhui Zhang, Qian Zhang,
  and Yufei Xue.
\newblock Modeling and predicting learning behavior in moocs.
\newblock In \emph{ACM International Conference on Web Search and Data Mining,
  WSDM}, pages 93--102, 2016.

\bibitem[Qiu and Lo(2017)]{qiu2017content}
Xuyan Qiu and Yuen~Yi Lo.
\newblock Content familiarity, task repetition and chinese efl learners’
  engagement in second language use.
\newblock \emph{Language Teaching Research}, 21\penalty0 (6):\penalty0
  681--698, 2017.

\bibitem[Ramos et~al.(2020)Ramos, Boratto, and Caleiro]{RAMOS2020102058}
Guilherme Ramos, Ludovico Boratto, and Carlos Caleiro.
\newblock On the negative impact of social influence in recommender systems: A
  study of bribery in collaborative hybrid algorithms.
\newblock \emph{Information Processing \& Management}, 57\penalty0
  (2):\penalty0 102058, 2020.

\bibitem[Rastegarpanah et~al.(2019)Rastegarpanah, Gummadi, and
  Crovella]{RastegarpanahGC19}
Bashir Rastegarpanah, Krishna~P. Gummadi, and Mark Crovella.
\newblock Fighting fire with fire: Using antidote data to improve polarization
  and fairness of recommender systems.
\newblock In \emph{International Conference on Web Search and Data Mining,
  WSDM}, pages 231--239. {ACM}, 2019.

\bibitem[Ren et~al.(2019)Ren, Ning, Lan, and Rangwala]{ren2019grade}
Zhiyun Ren, Xia Ning, Andrew~S Lan, and Huzefa Rangwala.
\newblock Grade prediction based on cumulative knowledge and co-taken courses.
\newblock \emph{International Educational Data Mining Society}, 2019.

\bibitem[Rieckmann(2018)]{rieckmann2018learning}
Marco Rieckmann.
\newblock Learning to transform the world: key competencies in education for
  sustainable development.
\newblock \emph{Issues and trends in education for sustainable dev}, 39, 2018.

\bibitem[Sarwar et~al.(2001)Sarwar, Karypis, Konstan, and
  Riedl]{sarwar2001item}
Badrul Sarwar, George Karypis, Joseph Konstan, and John Riedl.
\newblock Item-based collaborative filtering recommendation algorithms.
\newblock In \emph{10th International Conference on World Wide Web, WWW}, pages
  285--295, 2001.

\bibitem[Sclater and Bailey(2015)]{sclater2015code}
Niall Sclater and Paul Bailey.
\newblock Code of practice for learning analytics.
\newblock 2015.

\bibitem[Selbst et~al.(2019)Selbst, Boyd, Friedler, Venkatasubramanian, and
  Vertesi]{selbst2019fairness}
Andrew~D Selbst, Danah Boyd, Sorelle~A Friedler, Suresh Venkatasubramanian, and
  Janet Vertesi.
\newblock Fairness and abstraction in sociotechnical systems.
\newblock In \emph{Proceedings of the Conference on Fairness, Accountability,
  and Transparency}, pages 59--68, 2019.

\bibitem[Shields et~al.(2017)Shields, Newman, and Satz]{shields2017equality}
Liam Shields, Anne Newman, and Debra Satz.
\newblock Equality of educational opportunity.
\newblock 2017.

\bibitem[Shum(2018)]{shum2018transitioning}
Simon~Buckingham Shum.
\newblock Transitioning education’s knowledge infrastructure: Shaping design
  or shouting from the touchline?
\newblock In \emph{Proceedings of International Conference of the Learning
  Sciences, ICLS}, 2018.

\bibitem[Singh and Joachims(2019)]{singh2019policy}
Ashudeep Singh and Thorsten Joachims.
\newblock Policy learning for fairness in ranking.
\newblock In \emph{Advances in Neural Information Processing Systems}, pages
  5427--5437, 2019.

\bibitem[Steck(2018)]{steck2018calibrated}
Harald Steck.
\newblock Calibrated recommendations.
\newblock In \emph{12th ACM Conference on Recommender Systems, RecSys}, pages
  154--162, 2018.

\bibitem[Talla(2012)]{talla2012curriculum}
Mrunalini Talla.
\newblock \emph{Curriculum Development: Perspectives, Principles and Issues}.
\newblock Pearson Education India, 2012.

\bibitem[Thaker et~al.(2020)Thaker, Zhang, He, and
  Brusilovsky]{thaker2020recommending}
Khushboo Thaker, Lei Zhang, Daqing He, and Peter Brusilovsky.
\newblock Recommending remedial readings using student knowledge state.
\newblock In \emph{13th International Conference on Educational Data Mining},
  pages 233--244, 2020.

\bibitem[Tsai and Gasevic(2017)]{tsai2017learning}
Yi-Shan Tsai and Dragan Gasevic.
\newblock Learning analytics in higher education---challenges and policies: a
  review of eight learning analytics policies.
\newblock In \emph{Proceedings of the seventh international learning analytics
  \& knowledge conference}, pages 233--242, 2017.

\bibitem[Wang et~al.(2019)Wang, Wu, Kim, and Andersen]{wang2019adaptive}
Shuhan Wang, Hao Wu, Ji~Hun Kim, and Erik Andersen.
\newblock Adaptive learning material recommendation in online language
  education.
\newblock In \emph{International Conference on Artificial Intelligence in
  Education}, pages 298--302. Springer, 2019.

\bibitem[Williamson(2017)]{williamson2017decoding}
Ben Williamson.
\newblock Decoding classdojo: psycho-policy, social-emotional learning and
  persuasive educational technologies.
\newblock \emph{Learning, Media and Technology}, 42\penalty0 (4):\penalty0
  440--453, 2017.

\bibitem[Xie and Joo(2009)]{xie2009selection}
Iris Xie and Soohyung Joo.
\newblock Selection of information sources: Accessibility of and familiarity
  with sources, and types of tasks.
\newblock \emph{Proceedings of the American Society for Information Science and
  Technology}, 46\penalty0 (1):\penalty0 1--18, 2009.

\bibitem[Yao and Huang(2017)]{YaoH17}
Sirui Yao and Bert Huang.
\newblock Beyond parity: Fairness objectives for collaborative filtering.
\newblock In \emph{Annual Conference on Neural Information Processing Systems,
  NIPS}, pages 2921--2930, 2017.

\bibitem[Yu et~al.(2020)Yu, Li, Fischer, Doroudi, and Xu]{yu2020towards}
Renzhe Yu, Qiujie Li, Christian Fischer, Shayan Doroudi, and Di~Xu.
\newblock Towards accurate and fair prediction of college success: evaluating
  different sources of student data.
\newblock In \emph{Proceedings of the 13th International Conference on
  Educational Data Mining (EDM 2020)}, 2020.

\bibitem[Zhang et~al.(2019)Zhang, Hao, Chen, Li, Chen, and
  Sun]{zhang2019hierarchical}
Jing Zhang, Bowen Hao, Bo~Chen, Cuiping Li, Hong Chen, and Jimeng Sun.
\newblock Hierarchical reinforcement learning for course recommendation in
  moocs.
\newblock In \emph{International Conference on Artificial Intelligence, AAAI},
  volume~33, pages 435--442, 2019.

\bibitem[Zhang et~al.(2018)Zhang, Cao, Zhu, Li, and Sun]{zhang2018coupledcf}
Quangui Zhang, Longbing Cao, Chengzhang Zhu, Zhiqiang Li, and Jinguang Sun.
\newblock Coupledcf: Learning explicit and implicit user-item couplings in
  recommendation for deep collaborative filtering.
\newblock In \emph{International Joint Conference on Artificial Intelligence,
  IJCAI}, 2018.

\bibitem[Zhu et~al.(2018)Zhu, Hu, and Caverlee]{ZhuHC18}
Ziwei Zhu, Xia Hu, and James Caverlee.
\newblock Fairness-aware tensor-based recommendation.
\newblock In \emph{International Conference on Information and Knowledge
  Management, {CIKM}}, pages 1153--1162. {ACM}, 2018.

\end{thebibliography}

\end{document}